\tikzset{>=stealth', shorten > =
  1pt,auto, ArrowNode/.style = {font=\small}}
\setlist[itemize]{label=$\circ$,leftmargin = *}
\setlist[enumerate]{label=$(\alph*)$, leftmargin = *}
\newtheorem{theorem}{Theorem}[section]
\newtheorem{corollary}[theorem]{Corollary}
\newtheorem{lemma}[theorem]{Lemma}
\newtheorem{proposition}[theorem]{Proposition}
\theoremstyle{definition}
\newtheorem{definition}[theorem]{Definition}
\newtheorem{example}[theorem]{Example}
\newtheorem{remark}[theorem]{Remark}
\newcommand{\bim}{{\sf{BiM}}}
\newcommand{\synt}[1]{{\sf Synt}(#1)}
\newcommand{\im}{{\sf Im }}
\newcommand{\cl}{{\sf Clopen}}
\newcommand{\viet}{{\sf Viet}}
\newcommand\x{{\bf x}}
\newcommand\y{{\bf y}}
\newcommand{\Z}{\mathbb Z}
\newcommand{\N}{\mathbb N}
\newcommand{\card}[1]{\left\lvert#1\right\rvert}
\newcommand{\vcirc}{\odot}
\newcommand{\cA}{\mathcal A}
\newcommand{\cB}{\mathcal B}
\newcommand{\cC}{\mathcal C}
\newcommand{\cD}{\mathcal D}
\newcommand{\cP}{\mathcal P}
\newcommand{\cV}{\mathcal V}
\newcommand{\cW}{\mathcal W}
\newcommand{\cQ}{\mathcal Q}
\newcommand{\cN}{\mathcal N}
\newcommand{\cS}{\mathcal S}
\newcommand{\cF}{\mathcal F}
\newcommand{\V}{{\bf V}}
\newcommand{\At}{{\rm At}}
\newcommand{\tR}{{\sf R}}
\newcommand{\tS}{{\sf S}}
\newcommand{\tP}{{\sf P}}
\newcommand{\tQ}{Q}
\renewcommand{\log}[4]{{#1}_{#2, #3}[#4]}
\newcommand{\just}[2]{\stackrel{#2}{#1}}
\renewcommand{\Upsilon}{\mathcal X}
\renewcommand{\tilde}{\widetilde}
\newcommand{\rvd}{{\bf R}_{\cV \circ \cD}}
\title{Substitution Principle and semidirect products\footnote{This
    project has received funding from the European Research Council
    (ERC) under the European Union's Horizon 2020 research and
    innovation program (grant agreement No.670624). The first-named
    author was also partially supported by the Centre for Mathematics
    of the University of Coimbra - UIDB/00324/2020, funded by the
    Portuguese Government through FCT/MCTES.}} \author{C\'elia Borlido\footnote{Centre
    for Mathematics of the University of Coimbra (CMUC),
    \mbox{3001-501}~Coimbra, Portugal, cborlido@mat.uc.pt}
  \and Mai Gehrke\footnote{Laboratoire J. A. Dieudonn\'e, CNRS, Universit\'e
  C\^ote d'Azur, France, mai.gehrke@unice.fr}}
\begin{document}\date{}
\maketitle
\begin{abstract}
  In the classical theory of regular languages the concept of
  recognition by profinite monoids is an important tool. Beyond
  regularity, Boolean spaces with internal monoids (BiMs) were
  recently proposed as a generalization.  On the other hand, fragments
  of logic defining regular languages can be studied inductively via
  the so-called ``Substitution Principle''. In this paper we make the
  logical underpinnings of this principle explicit and extend it to
  arbitrary languages using Stone duality. Subsequently we show how it
  can be used to obtain topo-algebraic recognizers for classes of
  languages defined by a wide class of first-order logic
  fragments. This naturally leads to a notion of semidirect product of
  BiMs extending the classical such construction for profinite
  monoids. Our main result is a generalization of Almeida and Weil's
  Decomposition Theorem for semidirect products from the profinite
  setting to that of BiMs. This is a crucial step in a program to
  extend the profinite methods of regular language theory to the
  setting of complexity theory.
\end{abstract} 
\section{Introduction}
Profinite monoids have proved to be a powerful tool in the theory of
regular languages. 
Eilenberg and Reiterman Theorems allow for the study of the so-called
\emph{varieties of regular languages} through the study of
topo-algebraic properties of suitable profinite monoids.
In~2008, Gehrke, Grigorieff and Pin~\cite{GehrkeGrigorieffPin08}
proposed a unified approach for the study of regular languages, using
Stone duality. In particular, they realized that profinite monoids
could be seen as the extended dual spaces of certain Boolean algebras
equipped with a residuation structure.
More generally, to a Boolean algebra of (possibly non-regular)
languages closed under quotients, one can assign a monoid equipped
with a uniformity for which the natural biaction of the monoid on
itself has uniformly continuous
components~\cite{GehrkeGrigorieffPin10}. A slight variant (a so-called
\emph{\bim{} - Boolean space with an internal monoid}) was identified
in~\cite{GehrkePetrisanReggio16}.

Complexity theory and the theory of regular languages are intimately
connected through logic. As with classes of regular languages, many
computational complexity classes have been given characterizations as
model classes of appropriate logic fragments on finite
words~\cite{Immerman99}.  For example, ${\rm AC}^0={\bf
  FO}[\text{arb}]$, ${\rm ACC}^0= ({\bf FO}+{\bf MOD})[{\text{arb}}]$,
and ${\rm TC}^0 = {\bf MAJ}[\text{arb}]$ where $\text{arb}$ is the set
of all possible (numerical) predicates of all arities on the positions
of a word, {\bf FO} is usual first-order logic, and {\bf MOD} and {\bf
  MAJ} stand for the \emph{modular} and \emph{majority} quantifiers,
respectively.  On the one hand, the presence of \emph{arbitrary
  predicates}, and on the other hand, the presence of the
\emph{majority quantifier} is what brings one far beyond the scope of
the profinite algebraic theory of regular languages.

Most results in the field of complexity theory are proved using
combinatorial and probabilistic, as well as algorithmic
methods~\cite{Williams14}. However, there are a few connections with
the topo-algebraic tools of the theory of regular languages.  A famous
result of Barrington, Compton, Straubing, and
Th\'erien~\cite{BarringtonComptonStraubingTherien92} states that a
regular language belongs to AC$^0$ if and only if its syntactic
homomorphism is quasi-aperiodic.  Although this result relies on the
lower-bound results in circuit complexity of~\cite{FurstSaxeSipser84}
and no purely algebraic proof is known, being able to characterize the
class of regular languages that are in AC$^0$ gives some hope that the
non-regular classes might be amenable to treatment by the generalized
topo-algebraic methods.

Indeed, the hope is that one can generalize the tools of algebraic
automata theory and this paper is a key step in this program.  In the
paper {\it Logic Meets Algebra: the case of regular
  languages}~\cite{TessonTherien07}, Tesson and Th\'erien lay out the
theory used to characterize logic classes in the setting of regular
languages in terms of their recognizers. Here we add topology to the
picture, using Stone duality, to obtain corresponding tools that apply
beyond the setting of regular languages. In particular, we generalize
the \emph{Substitution Principle} and the associated semidirect
product construction used in the study of logic on words for regular
languages to the general setting. This allows for an inductive
topo-algebraic study of fragments of logic, which differs sharply from
Lawvere's point of view of hyperdoctrines in categorical logic
(see~\cite{Lawvere06, Marques21}).  Apart from being a contribution in extending
profinite methods to complexity classes, this is also a key result in
unifying semantics and complexity and the results here and in the
original proceedings paper~\cite{BorlidoCzarnetzkiGehrkeKrebs17} were
key inspiration for this trend (see also~\cite{GehrkeJaklReggio2020}).

In the sequence of papers
\cite{GehrkePetrisanReggio16,GehrkePetrisanReggio17,GehrkePetrisanReggio21},
a semidirect product construction for \bim{}s corresponding to the
addition of a layer of quantifiers, which is closely related in spirit
to our results here, has been studied. While related, our results here
differ in three main points. For one, those results require the
lp-variety (that is, the input corresponding to the quantifier) to be
regular, and only the predicates or formula algebra need not be
regular. Secondly, the lp-variety needs to be generated by a finite
\emph{semiring}, while our results apply to \emph{any} lp-variety and
only the corresponding \emph{monoid} structure is used. Finally, their
approach is category theoretic and does not make the link with the
substitution principle of~\cite{TessonTherien07}.
When compared to~\cite{BorlidoCzarnetzkiGehrkeKrebs17}, where the link
with the substitution principle of~\cite{TessonTherien07} is
highlighted, a major difference is that
in~\cite{BorlidoCzarnetzkiGehrkeKrebs17} we mostly consider varieties
of finite algebraic structures (the so-called \emph{typed monoids}),
while here we put our emphasis on their projective limits.  There is
also a couple of differences in the terminology used, that we point
out along the text where relevant.

The paper is organized as follows. In Section~\ref{sec:prelim} we
introduce the background needed and set up the main notation used
throughout the paper. Important concepts here are those of an
\emph{lp-strain} of languages and of a \emph{class of sentences}.
Essentially, an \emph{lp-strain} of languages is an assignment to any
finite \emph{alphabet}~$A$ (i.e., a finite set) of a Boolean algebra
of \emph{languages} (i.e., a Boolean subalgebra of the powerset
$\cP(A^*)$, where $A^*$ denotes the $A$-generated free monoid), which
is closed under taking preimages under \emph{length-preserving
  homomorphisms}.
A \emph{class of sentences} is the corresponding notion in the setting
of logic on words.
Section~\ref{sec:profinite} is devoted to languages over profinite
alphabets.
These are essential to the main results of this paper.
Section~\ref{sec:bims} is an introduction
to recognition of Boolean algebras of (not necessarily regular)
languages that are \emph{closed under quotients}. We define what is a
\emph{Boolean space with an internal monoid}
({\bim})~\cite{GehrkeGrigorieffPin10, GehrkePetrisanReggio16}, and 
consider a slight generalization of this notion: that of a
\emph{\bim{}-stamp}. Informally, a \emph{\bim{}} is to a \emph{variety
  of languages} what a \emph{\bim{}-stamp} is to an \emph{lp-variety
  of languages} (i.e., an lp-strain in which we require \emph{closure
  under quotients}).
In Section~\ref{sec:substitution} we cast the \emph{Substitution
  Principle}~\cite{TessonTherien07} in duality theoretic terms, which
allows us to formulate a very general version that applies beyond
regularity.  We start by considering, as in Tesson and
Th\'erien~\cite{TessonTherien07}, the case of finite Boolean algebras
in Section~\ref{sec:sub-BA-fin}.  The key idea in this section is to
use letters of a finite alphabet to play the role of atoms of a finite
Boolean algebra of formulas. Since any Boolean algebra is the direct
limit of its finite subalgebras, its Stone dual space is the
projective limit of the sets of atoms of its finite subalgebras. This
is why \emph{profinite alphabets} (i.e. Boolean spaces) come into play
in a natural way in Section~\ref{sec:inf-BA}, where we study
substitution with respect to arbitrary Boolean algebras, and why the
concept of \emph{lp-strain} (and \emph{class of sentences}) defined
over finite but also profinite alphabets is crucial.
In~\cite{TessonTherien07} only formulas with at most one free variable
are considered since the extension to finite sets of free variables,
needed to study predicate logic, is straightforward.  For
completeness, we prove this in Section~\ref{sec:encode}.  Finally, in
Section~\ref{sec:app}, we illustrate the \emph{Substitution Principle}
in the setting of logic on words, by showing how it can be used to
understand the effect of applying a layer of quantifiers to a given
Boolean algebra of formulas.
The main result of the paper, Theorem~\ref{t:2}, is presented in
Section~\ref{sec:closing-quot}. In this section we get out of the
context of logic, and study the closure under quotients of the
operation on languages derived in Section~\ref{sec:substitution} via
duality. Theorem~\ref{t:2} is a generalization of the classical result
by Almeida and Weil~\cite[Theorem~5.1]{AlmeidaWeil1995}. Finally, in
Section~\ref{sec:examples}, we compute topo-algebraic recognizers for
certain Boolean algebras of quantified languages, which are built on
the recognizers for the corresponding non-quantified languages. More
precisely, in Section~\ref{sec:2} we study the effect of applying a
layer of existential quantifiers, in Section~\ref{sec:4} of modular
quantifiers, and in Section~\ref{sec:5} of majority quantifiers.

\section{Preliminaries}\label{sec:prelim}
In this section we briefly present the background needed in the rest
of the paper. For further reading on duality we refer
to~\cite{Johnstone86}, and for formal language theory and logic on
words to~\cite{Straubing94}.
\subsection{Discrete duality}\label{ss:discrdual}
This is the most basic duality we use and it provides a correspondence
between powerset Boolean algebras (these are the complete and atomic
Boolean algebras) and sets. Given such  Boolean algebra $\cB$, its
dual is its set of atoms (that is, the set of non-zero minimal
elements), denoted $\At(\cB)$ and, given a set $X$, its dual is the
Boolean algebra $\cP(X)$. Clearly going back and forth yields
isomorphic objects. If $h\colon\cA\to\cB$ preserves arbitrary meets
and joins, the dual of $h$, denoted $\At(h)\colon\At(\cB)\,{\to}\,
\At(\cA)$, is given by the adjunction:
\begin{equation*}
  \forall \ a\,{\in}\, \cA,\  x\,{\in}\,
  \At(\cB)   \quad\left(\At(h)(x)\leq a \iff x\leq h(a)\right).
\end{equation*}
For example, if $\iota\colon\cA\rightarrowtail\cP(X)$ is the inclusion
of a finite Boolean subalgebra of a powerset, then $\At(\iota)\colon
X\twoheadrightarrow\At(\cA)$ is the quotient map corresponding to the
finite partition of~$X$ given by the atoms of~$\cA$. Conversely, given
a function $f{\colon} X{\to} Y$, the dual is just
$f^{-1}{\colon}\cP(Y){\to}\cP(X)$.
\subsection{Stone duality}\label{ss:Stonedual}

Since it will be useful in the next section, we will present Stone
duality as a restriction of a dual adjunction between Boolean algebras
and topological spaces.

Generally Boolean algebras do not have enough atoms, and we have to
consider {\em ultrafilters} instead (which may be seen as `searches
downwards' for atoms). Given an arbitrary Boolean algebra~$\cB$, an
ultrafilter of~$\cB$ is a subset $\gamma$ of $\cB$ satisfying:
\begin{itemize}
\item $\gamma$ is an upset, i.e., $a\,{\in}\, \gamma$ and $a\leq b$
  implies $b\,{\in}\, \gamma$;
\item $\gamma$ is closed under finite meets, i.e.,
  $a,b\,{\in}\,\gamma$ implies $a\wedge b\,{\in}\, \gamma$;
\item for all $a\in\cB$ exactly one of $a$ and $\neg a$ is in
  $\gamma$.
\end{itemize}
We will denote the set of ultrafilters of $\cB$ by $X_\cB$, and we
will consider it as a topological space equipped with the topology
generated by the sets $\widehat{a}=\{\gamma\in X_\cB\mid a\in\gamma\}$
for $a\in\cB$. The last property in the definition of ultrafilters
implies that these basic open sets are also closed (and thus {\em
  clopen}). The resulting spaces are compact, Hausdorff, and have a
basis of clopen sets. Such spaces are called {\em Boolean
  spaces}. Given a homomorphism $h\colon\cA\to\cB$ between Boolean
algebras, the inverse image of an ultrafilter is an ultrafilter and
thus $h^{-1}$ induces a map $f:X_\cB\to X_\cA$. Since $f^{-1}(\widehat
a) = \widehat {h(a)}$ for every $a \in \cA$, this map is continuous.

Conversely, given \emph{any} topological space~$X$, its clopen subsets
form a Boolean algebra~$\cl(X)$, and given a continuous function
$f\colon X\to Y$, the inverse map $f^{-1}$ restricts and corestricts
to a homomorphism $\cl(Y) \to \cl(X)$ of Boolean algebras.

These assignments define a dual adjunction between Boolean algebras
and topological spaces. When seen as an adjunction between the
opposite category of Boolean algebras and that of topological spaces,
the counit at a Boolean algebra $\cB$, and unit at a topological space
$X$ are, respectively, given by
\[\varepsilon_\cB:\cB \to \cl(X_\cB), \qquad a \mapsto \widehat a\]
and
\begin{equation}
\eta_X: X \to X_{\cl(X)}, \qquad x \mapsto \{K \in \cl(X) \mid x \in
  K\}.\label{eq:29}
\end{equation}
One can show that $\varepsilon_\cB$ is allways an isomorphism, while
$\eta_X$ is an isomorphism if and only if $X$ is a Boolean
space. Thus, this adjunction restricts to a duality between Boolean
algebras and Boolean spaces: the so-called \emph{Stone duality}.

\subsection{Compactification of topological spaces}\label{ss:comp}
We consider two compactifications of topological spaces: the \emph{\v
  Cech-Stone} and the \emph{Banaschewski} compactifications, the
latter being defined only for zero-dimensional spaces. We also relate
them with the dual spaces of suitable Boolean algebras.

Let $X$ be a topological space. Then, the \emph{\v Cech-Stone
  compactification} of $X$ is the unique (up to homeomorphism) compact
Hausdorff space $\beta (X)$ together with a continuous function $e: X
\to \beta (X)$ satisfying the following universal property: every
continuous function $f: X \to Z$ into a compact Hausdorff space~$Z$
uniquely determines a continuous function $\beta f: \beta (X) \to Z$
making the following diagram commute:
\begin{center}
  \begin{tikzpicture}[node distance = 20mm]
    \node (X) at (0,0) {$X$}; \node[right of = X] (bX) {$\beta (X)$};
    \node[below of = bX, yshift = 5mm] (Z) {$Z$};
    \draw[->] (X) to node[above, ArrowNode] {$e$}(bX); \draw[->] (X) to
    node[left, yshift = -2mm, ArrowNode] {$f$} (Z); \draw[->, dashed] (bX) to
    node[right, ArrowNode] {$\beta f$} (Z);
  \end{tikzpicture}
\end{center}
\noindent In the case where $X$ is a completely regular $T_1$ space,
the map~$e$ is in fact an embedding~\cite[Section~19]{Willard70}. This
is for instance the case of discrete and of Boolean spaces.
It may be proved that, for a set $S$, the dual space of $\cP(S)$ is
the \emph{\v Cech-Stone compactification} of the discrete space~$S$.
This fact has been extensively used in the topo-algebraic approach to
non-regular languages over finite alphabets (see
e.g.~\cite{GehrkeGrigorieffPin10,GehrkePetrisanReggio16}).
However, as already mentioned, in the present work we also consider
languages over profinite alphabets. For that reason, we will be
handling dual spaces of Boolean algebras of the form $\cl(X)$, for
spaces~$X$ that are not discrete.  In the case where $X$ is
zero-dimensional, the dual space of $\cl(X)$ is known as the
\emph{Banaschewski compactification}~\cite{Banaschewski55,
  Banaschewski64} of~$X$ and it is denoted by~$\beta_0(X)$. At the
level of morphisms, $\beta_0$ assigns to each continuous map the dual
of the Boolean algebra homomorphism taking inverse image on
clopens. Moreover, for a zero-dimensional space~$X$, the
unit~\eqref{eq:29} of the dual adjunction between Boolean algebras and
topological spaces is an embedding $\eta_X: X \rightarrowtail
\beta_0(X)$ with dense image (see e.g. \cite[Section~4.7,
Proposition~(b)]{PorterWoods88}) and it is easy to see that for every
continuous function $f: X \to Y$ between zero-dimensional spaces the
following diagram commutes:
\begin{center}
  \begin{tikzpicture}[node distance = 20mm]
    \node (X) at (0,0) {$X$}; \node[right of = X] (bX) {$\beta_0
      (X)$}; \node[below of = bX, yshift = 5mm] (Z) {$\beta_0(Y)$};
    \node[below of = X, yshift = 5mm](Y) {$Y$};
    \draw[>->] (X) to node[above, ArrowNode] {$\eta_X$}(bX); \draw[->]
    (X) to node[left, ArrowNode] {$f$} (Y); \draw[->, dashed] (bX) to
    node[right, ArrowNode] {$\beta_0 f$} (Z); \draw[>->] (Y) to
    node[below, ArrowNode] {$\eta_Y$}(Z);
  \end{tikzpicture}
\end{center}

There are some cases where the \v Cech-Stone and the Banaschewski
compactifications coincide, e.g. for discrete spaces as mentioned
above. More generally, we have the following:

A space is said to have \emph{\v Cech-dimension zero} provided that
each finite open cover admits a finite clopen refinement. In
particular, every compact zero-dimensional space, and so, every
Boolean space, has \v Cech-dimension zero.

\begin{theorem}[{\cite[Theorem 4]{Banaschewski55}}] \label{t:1} A
  zero-dimensional space $X$ is of \v Cech-dimension zero if and only
  if it is normal and the \v Cech-Stone and Banaschewski
  compactifications of~$X$ coincide.
\end{theorem}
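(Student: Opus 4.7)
My plan is to reduce the theorem to a lemma about clopen separation of disjoint closed sets, and then establish each direction via this reformulation. Specifically, for a zero-dimensional space $X$, I would show that \v Cech-dimension zero is equivalent to the following \emph{strong separation property}: any two disjoint closed sets $F, G \subseteq X$ are contained in disjoint clopens of $X$. For one direction, applying \v Cech-dim~0 to the cover $\{X \setminus F,\, X \setminus G\}$ yields a finite clopen refinement; splitting it into those pieces contained in $X \setminus G$ (with union $A$) and the rest (with union $B \subseteq X \setminus F$) produces disjoint clopens $X \setminus B \supseteq F$ and $X \setminus A \supseteq G$. Conversely, strong separation immediately implies normality, and combining normality with a closed shrinking $\{F_i \subseteq U_i\}$ of any finite open cover $\{U_i\}$ and a second use of strong separation gives a clopen refinement $\{V_i\}$ with $F_i \subseteq V_i \subseteq U_i$. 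In particular, either condition forces $X$ to be normal.

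For the forward direction, assume $X$ is zero-dimensional of \v Cech-dim~0; by the previous step, $X$ is normal with strong separation. The universal property of $\beta$ applied to $e_X: X \to \beta_0(X)$ produces a continuous map $\Phi: \beta(X) \to \beta_0(X)$ restricting to the identity on $X$, and it suffices to show $\Phi$ is bijective (a continuous bijection from a compact space to a Hausdorff space being a homeomorphism). Surjectivity is automatic since the image is closed and contains the dense subset $X$. For injectivity, given $p \ne q$ in $\beta(X)$, I would apply Urysohn in the compact Hausdorff space $\beta(X)$ to obtain $g: \beta(X) \to [0,1]$ with $g(p) = 0$ and $g(q) = 1$; setting $A := (g|_X)^{-1}([0, 1/3])$ and $B := (g|_X)^{-1}([2/3, 1])$ gives disjoint closed subsets of $X$, and a density argument places $p \in \overline{A}^{\beta(X)}$ and $q \in \overline{B}^{\beta(X)}$. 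Strong separation then yields disjoint clopens $U \supseteq A$, $V \supseteq B$ in $X$, whose canonical extensions $\widetilde{U}, \widetilde{V}$ to $\beta_0(X)$ are disjoint clopens; continuity of $\Phi$ forces $\Phi(p) \in \overline{A}^{\beta_0(X)} \subseteq \widetilde{U}$ and $\Phi(q) \in \widetilde{V}$, so $\Phi(p) \ne \Phi(q)$.

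For the converse, assume $X$ is normal with $\beta(X) = \beta_0(X)$: I would verify strong separation and invoke the first paragraph to conclude \v Cech-dim~0. Given disjoint closed $F, G \subseteq X$, a standard consequence of normality is that $\overline{F}^{\beta(X)} \cap \overline{G}^{\beta(X)} = \emptyset$. Since $\beta(X) = \beta_0(X)$ is Boolean and thus zero-dimensional, these disjoint compact subsets can be separated by a clopen $K$ of $\beta_0(X)$, and $K \cap X$ is then the desired clopen of $X$ separating $F$ from $G$. The main obstacle lies in the injectivity step of the forward direction, where one needs a reliable way to convert the topological separation of $p, q$ in $\beta(X)$ into disjoint closed subsets of $X$ whose $\beta$-closures still contain $p$ and $q$; the rest is mechanical, using the duality $\cl(X) \cong \cl(\beta_0(X))$.
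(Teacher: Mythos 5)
The paper offers no proof of this statement: it is imported verbatim as Theorem~4 of \cite{Banaschewski55}, so there is no internal argument to compare yours against. Judged on its own, your proof is correct and complete, and it follows what is essentially the classical route: the key lemma identifying \v Cech-dimension zero with \emph{strong zero-dimensionality} (disjoint closed sets are separated by disjoint clopens). Both directions of that lemma check out --- the splitting of a clopen refinement of $\{X\setminus F,\ X\setminus G\}$ into the pieces contained in $X\setminus G$ and the rest, and the converse via the finite shrinking lemma for normal spaces. In the forward direction, the comparison map $\Phi\colon \beta(X)\to\beta_0(X)$ supplied by the universal property is the right object, and your injectivity step is sound: since $g^{-1}([0,1/3))$ is a neighbourhood of $p$ meeting the dense subspace $X$ inside $A$, one indeed gets $p\in\overline{A}^{\beta(X)}$, and disjointness of the clopens $U,V$ transports to disjointness of $\widetilde U,\widetilde V$ because $\cl(X)\cong\cl(\beta_0(X))$ is a Boolean algebra isomorphism. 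The converse correctly combines the standard normality fact (disjoint closed sets have disjoint closures in $\beta(X)$) with clopen separation of disjoint closed sets in the compact zero-dimensional space $\beta_0(X)$. The one implicit hypothesis worth flagging is that $X$ is completely regular and $T_1$, so that $X$ embeds densely in $\beta(X)$ and the density arguments apply; this is part of the standing assumptions for zero-dimensional spaces in Section~\ref{ss:comp} and in Banaschewski's setting, so it is not a gap.
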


This is of interest to us as it shows that $\beta$ is given by duality
not only for discrete spaces but more generally for spaces of \v
Cech-dimension zero such as $Y^*$ for $Y$ a profinite alphabet
(cf. Lemma~\ref{l:10} and Theorem~\ref{t:4}).

\subsection{Projective and direct limits}\label{ss:lims}
A \emph{projective limit system} (also known as an \emph{inverse limit
  system}, or a \emph{cofiltered diagram}) $\cF$ of sets assigns to
each element $i$ of a \emph{directed} partially ordered set $I$, a set
$S_i$, and to each ordered pair $i\geq j$ in $I$, a map $f_{i,j}\colon
S_i \to S_j$ so that for all $i, j, k \in I$ with $i \ge j \ge k$ we
have $f_{i,i} = id_{S_i}$ and $f_{j,k}\circ f_{i,j} = f_{i,k}$.  The
\emph{projective limit} (or \emph{inverse limit} or \emph{cofiltered
  limit}) of $\cF$, denoted $\lim\limits_{\longleftarrow} \cF$, comes
equipped with projection maps $\pi_i: \lim\limits_{\longleftarrow} \cF
\to S_i$ compatible with the system. That is, for $i,j\in I$ with
$i\geq j$, $f_{i,j} \circ \pi_i= \pi_j$. Further, it satisfies the
following \emph{universal property}: whenever $\{\pi_i':S' \to
S_i\}_{i \in I}$ is a family of maps satisfying $f_{i,j} \circ \pi'_i
= \pi'_j$ for all $i\geq j$, there exists a unique map $g: S'\to
\lim\limits_{\longleftarrow} \cF$ satisfying $\pi_i' = \pi_i \circ g$,
for all $i \in I$.

The notion dual to projective limit, obtained by reversing the
directions of the maps, is that of \emph{direct limit} (also known as
an \emph{injective limit}, \emph{inductive limit} or \emph{filtered
  colimit}).

There are several things worth noting about these notions. First, the
projective limit of $\cF$ may be constructed as follows:
\begin{equation}
  \lim\limits_{\longleftarrow} \cF = \left\{(s_i)_{i \in I} \in
    \prod_{i \in I}S_i \mid f_{i,j} (s_i) = s_j \text{ whenever $i \ge
      j$}\right\}.\label{eq:4}
\end{equation}

Second, projective limits of finite sets, called \emph{profinite
  sets}, are equivalent to Boolean spaces. If each $S_i$ is finite,
then it is a Boolean space in the discrete topology, and the
projective limit is a closed subspace of the product and thus again a
Boolean space. Conversely, a Boolean space is the projective limit of
the projective system of its finite continuous quotients,
corresponding via duality to the fact that Boolean algebras are
locally finite and thus, direct limits of their finite subalgebras.

Third, one also has projective systems and projective limits of richer
structures than sets, being the connecting maps required to be
morphisms of the appropriate kind. In this work, we will use the
description of projective limits in the categories of topological
spaces and of monoids.  A very useful fact is that, in these settings,
projective limits are given as for sets with the obvious enriched
structure.
\subsection{Biactions and semidirect products of monoids}\label{ss:semidirect}

Let $S$ be a set and $M$ a monoid. We denote the identity of~$M$
by~$1$. A \emph{biaction} of~$M$ on~$S$ is a family of functions
\[\lambda_m: S \to S, \qquad \text{and}\qquad \rho_m: S \to S,\]
for each $m \in M$, that satisfy the following conditions:
\begin{itemize}
\item $\{\lambda_m\}_{m \in M}$ induces a left action of~$M$ on~$S$,
  that is, $\lambda_1$ the identity function on~$S$, and for every $m,
  m' \in M$, the equality $\lambda_m \circ \lambda_{m'} = \lambda_{m
    m'}$ holds;
\item $\{\rho_m\}_{m \in M}$ induces a right action of~$M$ on~$S$,
  that is, $\rho_1$ the identity function on~$S$, and for every $m, m'
  \in M$, the equality $\rho_{m'} \circ \rho_{m} = \rho_{mm'}$ holds;
\item $\{\lambda_m\}_{m \in M}$ and $\{\rho_m\}_{m \in M}$ are
  compatible, that is, for every $m, m' \in M$, we have $\lambda_m
  \circ \rho_{m'} = \rho_{m'} \circ \lambda_m$.
\end{itemize}

In the case where $S$ is also a monoid, we have the notion of a
\emph{monoid biaction}.  In order to improve readability, we shall
denote the operation on $S$ additively, although $S$ is not assumed to
be commutative. A \emph{monoid biaction of $M$ on $S$} is a biaction
of $M$ on the underlying set of $S$, satisfying the following
additional properties:
\begin{itemize}
\item $\lambda_{m_1} \circ \rho_{m_2} (s_1 + s_2) = \lambda_{m_1}
  \circ \rho_{m_2} (s_1 ) + \lambda_{m_1} \circ \rho_{m_2} (s_2)$, for
  all $m_1,m_2\in M$, and $s_1, s_2 \in S$;
\item $\lambda_{m_1} \circ \rho_{m_2} (0) = 0$, for all $m_1,m_2 \in M$.
\end{itemize}
Finally, given such a biaction, we may define a new monoid, called the
\emph{(two-sided) semidirect product} of~$S$ and~$M$, usually denoted
by $S{**}M$. The monoid $S{**}M$ has underlying set $S \times M$, and
its binary operation is defined by
\[(s_1, m_1)(s_2, m_2) = (\rho_{m_2}(s_1) + \lambda_{m_1}(s_2),\,
  m_1m_2).\]
\subsection{Formal languages}\label{ss:languages}
Let $A$ be a set, which we call an \emph{alphabet}. A \emph{word over
  $A$} is an element of the $A$-generated free monoid~$A^*$, and a
\emph{language} over $A$ is a set $L \subseteq A^*$ of words, that is,
an element of the powerset~$\cP(A^*)$. Whenever we write $w = a_0
\dots a_{n-1}$ for a word, we are assuming that each of the $a_i$'s is
a letter. If $w = a_0\dots a_{n-1}$ is a word, then we say that $w$
has \emph{length} $n$, denoted $\card w = n$. The fact that~$A^*$ is a
monoid means in particular that $A^*$ is equipped with a biaction of
itself. By discrete duality, it follows that $\cP(A^*)$ also is
equipped with a biaction of~$A^*$ given as follows. The \emph{left
  {\rm(respectively, } right{\rm )} quotient} of a language $L$ by a
word $w$ is the language $w^{-1}L = \{u \in A^* \mid wu \in L\}$
(respectively, $Lw^{-1} = \{u \in A^* \mid uw \in L\}$). We say that a
set of languages is \emph{closed under quotients} if it is invariant
under this biaction.

In the discrete duality between complete atomic Boolean algebras and
sets, the complete Boolean subalgebras closed under quotients of the
powerset $\cP(M)$, where $M$ is a monoid, correspond to the monoid
quotients of $M$~\cite[Theorem~3.26]{Gehrke16} (see
also~\cite[Theorem~8.13]{Dekkers2008}). In particular, given a
language $L \subseteq A^*$, the discrete dual of~$\cB_L$, the complete
Boolean subalgebra closed under quotients generated by $L$, is a
monoid quotient $\mu_L: A^*\twoheadrightarrow M_L$. The monoid $M_L$
is called the \emph{syntactic monoid of~$L$} and $\mu_L$ is the
\emph{syntactic homomorphism} of~$L$. The quotient map $\mu_L$ is
given by the corresponding congruence relation, known as the
\emph{syntactic congruence of~$L$}, which, by the definition of
$\mu_L$, is given by
\[
  u \sim_L v \iff (\forall x,y \in A^*, \ xuy \in L \iff xvy \in L).
\]
Notice that it follows that $L = \mu_L^{-1}(\mu_L[L])$, that is,
\emph{$L$ is recognized by~$M_L$ via~$\mu_L$}. More generally, given a
monoid $M$, a language $L \subseteq A^*$ is said to be
\emph{recognized} by $M$ provided there exists a homomorphism
$\mu\colon A^* \to M$ and a subset $P \subseteq M$ satisfying $L =
\mu^{-1}(P)$, or equivalently, provided $L = \mu^{-1}(\mu[L])$. By
duality, the set of languages recognized by~$\mu$ is a complete
Boolean subalgebra closed under quotients which contains~$L$. Since it
must contain~$\cB_L$, the syntactic homomorphism of~$L$ factors
through any homomorphism~$\mu$ which recognizes~$L$. That is, the
syntactic monoid and homomorphism of $L$ can be thought as the
``optimal'' recognizer of~$L$ .

For a set of languages $\cS \subseteq \cP(A^*)$, it is then natural to
wonder about the existence of an ``optimal'' recognizer for all the
languages of~$\cS$. This leads to the notion of \emph{syntactic
  monoid, homomorphism, and congruence of~$\cS$}, which are defined
via discrete duality for the complete Boolean subalgebra closed under
quotients generated by~$\cS$. Explicitly, the \emph{syntactic
  congruence of $\cS$}, denoted $\sim_\cS$, is the intersection of all
the syntactic congruences of the languages of~$\cS$, the
\emph{syntactic monoid} is the quotient $M_\cS = A^* / {\sim_\cS}$,
and the \emph{syntactic homomorphism} is the canonical projection
$\mu_\cS: A^* \twoheadrightarrow M_\cS$. The homomorphism $\mu_\cS$ is
``optimal'' in the sense that if $\mu: A^* \to M$ is a homomorphism
recognizing every language of $\cS$, then it factors through
$\mu_\cS$.

Finally, \emph{regular languages} are those recognized by
\emph{finite} monoids. It is not hard to see that regular languages
over an alphabet~$A$ form a Boolean algebra closed under quotients.
Beyond regularity, the discrete notion of recognition introduced here
is not adequate. In particular, any infinite monoid recognizes
uncountably many languages.
Recognition of Boolean algebras of (not necessarily regular) languages 
closed under quotients require the introduction of topology. This will be 
addressed in Section~\ref{sec:bims}.

So far, we did not require $A$ to be a finite set. 
In this paper, we will consider languages over profinite alphabets, 
where the topology of the alphabet will be taken into account (finite alphabets
are simply viewed as discrete spaces). Languages over profinite
alphabets will be the subject of Section~\ref{sec:profinite}.
Unless specified otherwise, we will use the letters $A, B, \dots$ to denote
finite alphabets, and $X, Y, \dots$ for profinite (not necessarily finite) ones.

\subsection{lp-strains and lp-varieties of languages}\label{ss:lp}
A homomorphism between free monoids is said to be
\emph{length-preserving} (also called an \emph{lp-morphism}) provided
it maps generators to generators. Therefore, lp-morphisms $A^* \to
B^*$ are in a bijection with functions $A \to B$. Given a map $h:A \to
B$, we use $h^*$ to denote the unique lp-morphism $h^*:A^* \to B^*$
whose restriction to $A$ is $h$. As we will see in
Section~\ref{ss:classessent}, lp-morphisms are important in the
treatment of fragments of logic on words.

Combining Pippenger~\cite{pippenger97} and
Straubing's~\cite{Straubing02} terminology, we call \emph{lp-strain of
  languages} an assignment, for each finite alphabet $A$, of a Boolean
algebra $\cV(A)$ of languages over $A$ such that, for every
lp-morphism $h^*:B^* \to A^*$, if $L \in \cV(A)$ then $(h^*)^{-1}(L)
\in \cV(B)$. Since $(h \circ g)^* = h^*\circ g^*$ whenever $f$ and $g$
are composable set functions, each $lp$-strain of languages defines a
presheaf $\cV: {\bf Set}_{fin}^{\rm op} \to {\bf BA}$ of Boolean
algebras. Moreover, the presheaves obtained in this way are precisely
the sub-presheaves of the presheaf $\cP: {\bf Set}_{fin}^{\rm op} \to
{\bf BA}$ defined by the lp-strain assigning the full powerset
$\cP(A^*)$ to each finite alphabet~$A$. This notion allows us to
extend languages to profinite alphabets
(cf. Section~\ref{sec:profinite}), which is crucial when handling
infinite Boolean algebras of formulas
(cf. Section~\ref{sec:inf-BA}). Notice that, if~$\cV$ is an lp-strain
of languages, then every function $h: B \to A$ induces dually a
continuous function $\widehat h: X_{\cV(B)} \to X_{\cV(A)}$ making the
following diagram commute:
\begin{center}
  \begin{tikzpicture}[node distance = 20mm]
    \node (B) at (0,0) {$B^*$}; \node[right of = B] (A)
    {$X_{\cV(B)}$}; \node[below of = B, yshift = 5mm] (VB) {$A^*$};
    \node[below of = A, yshift = 5mm] (VA) {$X_{\cV(A)}$};
    \draw[->] (B) to node[left, ArrowNode] {$h^*$} (VB); \draw[->] (B)
    to (A); \draw[->, dashed] (A) to node[right, ArrowNode] {$\widehat
      h$} (VA); \draw[->] (VB) to (VA);
  \end{tikzpicture}
\end{center}
where $A^* \to X_{\cV(A)}$ and $B^* \to X_{\cV(B)}$ are the maps with
dense image obtained by dualizing and restricting the embeddings
$\cV(A) \rightarrowtail \cP(A^*)$ and $\cV(B) \rightarrowtail
\cP(B^*)$, respectively.
This is a particular case of Lemma~\ref{l:5}, for which we will
provide a detailed proof.

In the case where $\cV(A)$ is closed under quotients for every
alphabet $A$, we call $\cV$ an \emph{lp-variety of languages}.
\subsection{Logic on words}\label{ss:low}
We shall consider classes of languages that are definable in fragments
of \emph{first-order logic}.
\emph{Variables} are denoted by $x,y,z, \ldots$, and we fix a finite
alphabet~$A$.
Our logic signature consists of:
\begin{itemize}
\item (unary) \emph{letter predicates} $\tP_a$, one for each letter $a
  \in A$;
\item a set $\cN$ of \emph{numerical predicates}~$R$, each of finite
  arity. A \emph{$k$-ary numerical predicate} $R$ is given by a subset
  $R \subseteq \N^k$. For instance, the usual (binary) numerical
  predicate $<$ formally corresponds to the predicate $R = \{(i,j)
  \mid i,j \in \N, \ i < j\}$;
\item a set $\cQ$ of (unary) \emph{quantifiers}~$Q$, each one being
  given by a function $Q:\{0,1\}^* \to \{0,1\}$. For instance, the
  existential quantifier $\exists$ corresponds to the function mapping
  $\varepsilon_0 \dots \varepsilon_{n-1} \in \{0,1\}^*$ to $1$ exactly
  when there exists an $i$ so that $\varepsilon_i = 1$.
\end{itemize}
Then, \emph{(first-order) formulas} are recursively built as
follows:
\begin{itemize}[leftmargin = *]
\item for each \emph{letter predicate}~$\tP_a$ and variable~$x$, there
  is an atomic formula $\tP_a(x)$;
\item if $R \in \cN$ is a $k$-ary numerical predicate and $x_1, \dots,
  x_k$ are (possibly non-distinct) variables, then $R(x_1, \dots,
  x_k)$ is an atomic formula;
\item \emph{Boolean combinations} of formulas are formulas, that is,
  if $\phi$ and $\psi$ are formulas, then so are $\phi \wedge \psi$,
  $\phi \vee \psi$, and $\neg \phi$;
\item if $Q$ is a quantifier, $x$ a variable and $\phi$ a formula,
  then $Q x \ \phi$ is also a formula.
\end{itemize}
An occurrence of a variable $x$ in a formula is said to be \emph{free} 
provided it appears outside of the scope of a quantifier $Q x$. 
A \emph{sentence} is a formula with no free occurrences of variables.

A \emph{context} $\x$ is a finite set of distinct variables. Given
disjoint contexts $\x$ and $\y$, we denote by~$\x\y$ the union of~$\x$
and~$\y$. Whenever we mention union of contexts, we will assume they
are disjoint without further mention. Also, we simply write $x$ to
refer to the context~$\{x\}$, so that by~$\x x$ we mean the context
$\x \cup \{x\}$. We denote by~$\card \x$ the cardinality of~$\x$.
The models of a formula will always be considered in a fixed
context. We say that $\phi$ \emph{is in context~$\x$} provided all the
free variables of $\phi$ belong to~$\x$. Notice that, if $\phi$ is in
context~$\x$, then it is also in every other context containing~$\x$.

Models of sentences in the empty context are words over $A$.
More generally, models of formulas in the context $\bf x$ are given by
\emph{$ \x$-marked words}, that is, words $w \in A^*$ equipped
with an interpretation~$ {\bf i} \in \{0, \dots, \card w -1\}^\x$ of
the context $\x$.
We identify maps from $\x$ to $\{0, \dots, \card w - 1\}$ with $\card
\x$-tuples ${\bf i} = (i_1, \dots, i_{\card \x }) \in \{0, \dots, \card
w -1\}^{\card \x} \subseteq \N^{\card \x}$. We denote the set of all
$ \x$-marked words by $A^* \otimes \N^{ \x}$.
Given a word $w \in A^*$ and a vector ${\bf i} = (i_1, \dots, i_{\card
  \x })\in \{0, \dots, \card w -1\}^{\card \x}$, we denote by $(w,
{\bf i})$ the marked word based on $w$ equipped with the map given by
${\bf i}$.
Moreover, if $\x$ and $\y$ are disjoint contexts, ${\bf i} = (i_1,
\dots, i_{\card \x}) \in \{0, \dots, \card w - 1\}^{\card \x}$ and
${\bf j} = (j_1, \dots, j_{\card \y}) \in \{0, \dots, \card w -
1\}^{\card \y}$, then $(w, {\bf i}, {\bf j})$ denotes the ${ {\bf
    z}}$-marked word $(w, {\bf k})$, where ${\bf z} = \x\y$ and ${\bf
  k} = (i_1, \dots, i_{\card \x}, j_1, \dots, j_{\card \y})$.

The semantics of a formula in context $\x = \{x_1, \dots, x_k\}$ is
defined inductively as follows. We let $w \in A^*$ be a word and ${\bf
  i} = (i_1, \dots, i_k) \in \{0, \dots, \card w - 1\}^{\card \x}$. Then,
\begin{itemize}
\item the marked word $(w, {\bf i})$ satisfies $\tP_{a}(x_j)$ if and
  only if its $i_j$-th letter is an $a$. As a consequence, for all $j
  \in \{1, \dots, k\}$, there exists exactly one letter $a \in A$ such
  that $(w, {\bf i}) \models \tP_a(x_j)$;
\item the marked word $(w, {\bf i})$ satisfies the formula $R(x_{j_1},
  \dots, x_{j_\ell})$ given by a numerical predicate~$R$ if and only
  if $(i_{j_1}, \dots, i_{j_\ell}) \in R$;
\item the Boolean connectives \emph{and} $\wedge$, \emph{or} $\vee$,
  and \emph{not} $\neg$ are interpreted classically;
\item the marked word $(w, {\bf i})$ satisfies the formula $\tQ x \
  \phi$, for a \emph{quantifier} $\tQ: \{0,1\}^* \to \{0,1\}$, if and
  only if $\tQ((w, {\bf i}, 0) \models \phi, \dots, (w, {\bf i}, \card
  w -1)\models \phi ) = 1$, where ``$(w, {\bf i}, i)\models \phi$''
  denotes the truth-value of ``$(w, {\bf i}, i)$ satisfies $\phi$''.
\end{itemize}

Important examples of quantifiers are given by the \emph{existential
  quantifier} $\exists$ mentioned above and its variant $\exists !$
(\emph{there exists a unique}); \emph{modular quantifiers}
$\exists_q^r$, for each $q \in \N$ and $0\leq r<q$, mapping a word of
$\{0,1\}^*$ to $1$ if and only if its number of $1$'s is congruent to
$r$ modulo $q$; and the \emph{majority quantifiers} ${\sf Maj}_k$, for
each $k \in \N$, sending an element of $\{0,1\}^*$ to $1$ if and only
if it has strictly $k$ more occurrences of~$1$ than of~$0$.
Finally, given a formula $\phi$ in context $\x$, we denote by
$L_{\phi}$ the set of marked words that are models of $\phi$.

Given a finite alphabet~$A$, a context~$\bf x$, a set of numerical
predicates $\cN$, and a set of quantifiers~$\cQ$, we denote by $\log
\cQ A {\bf x} \cN$ the corresponding set of first-order formulas in
context~$\bf x$, up to semantic equivalence. The set of sentences
$\log \cQ A \emptyset \cN$ in the empty context is simply denoted by
$\cQ_A[\cN]$.
Notice that, as long as $\log \cQ A \x \cN$ is non-empty, say it
contains a formula $\phi$, it will also contain the \emph{always-true}
formula ${\bf 1}$, which is semantically equivalent to $\phi \vee \neg
\phi$, and the \emph{always-false} formula ${\bf 0}$, which is
semantically equivalent to $\phi \wedge \neg \phi$. In particular, we
have $L_{{\bf 1}} = A^* \otimes \N^{ \x}$ and $L_{{\bf 0}} = \emptyset$, and
taking the set of models of a given formula defines an embedding of
Boolean algebras
\[\log \cQ A \x \cN \rightarrowtail \cP(A^* \otimes \N^{ \x}).\]

In formal language theory one usually considers languages that are
subsets of a free monoid, as presented in the previous
paragraph. However, formulas in a non-empty context $\x = \{x_1,
\dots, x_k\}$ define languages of marked words. Nevertheless, there is
a natural injection $A^* \otimes \N^{ \x} \rightarrowtail (A\times
2^\y)^*$ whenever $\y$ is a context containing~$\x$. Here, we make a
slight abuse of notation by writing $2^{\y}$ to actually mean the
set $\cP(\y)$. Indeed, let $(w, {\bf i})$ be a marked word, where $w =
a_0 \dots a_{n -1}$ and ${\bf i} = (i_1, \dots, i_k)$. Then, $(w, {\bf
  i})$ is completely determined by the word $(a_0, S_0)\dots (a_{n-1},
S_{n-1})$ over $(A \times 2^\y)$, where $S_i = \{x_{j} \mid i_j =
i\}$. Thus, we will often regard the language defined by a formula in
a context $\x$ as a language over an extended alphabet $A \times 2^\y$
for some $\y \supseteq \x$. Note that words of the form $(a_0,
S_0)\dots (a_{n-1}, S_{n-1})$ as above are usually called
\emph{$\x$-structures} (cf. \cite[Chapter~II]{Straubing94})
\subsection{Marked words}\label{ss:markedwords}
We now consider the case where~$\x$ consists of a single variable~$x$,
as this will play an important role in the remaining paper
(cf. Sections~\ref{sec:3} and~\ref{sec:closing-quot}). Note however
that a similar treatment is possible in general.

When $\x = \{x\}$, the set of $x$-marked words may be described as
\begin{equation*}
  A^* \otimes \N = \{(w, i) \mid w \in A^*, \ i \in \{0, \dots, \card
  w - 1\}\},
\end{equation*}
and we have an embedding $A^*\otimes \N \rightarrowtail (A \times
2)^*$ defined by
\[(w, i) \mapsto (a_0, 0) \dots (a_{i-1}, 0) (a_i, 1) (a_{i+1}, 0)
  \dots (a_{n-1}, 0),\]
for every word $w = a_0 \dots a_{n-1}$ and $i \in \{0, \dots, n-1\}$.

Identifying the elements of $A^* \otimes\N$ with their images in $(A
\times 2)^*$, we may compute the Boolean algebra closed under
quotients generated by this language. It is not difficult to see that
$A^* \otimes\N$ is a regular language in $(A \times 2)^*$ and that its
syntactic monoid is the three-element commutative monoid $\{e,m,z\}$
satisfying $m^2 = z$, with~$z$ acting as zero and~$e$ its identity
element. The syntactic morphism of the language $A^* \otimes\N$ is
$\mu: (A \times 2)^*\twoheadrightarrow \{e,m,z\}$ given by $\mu(a,0) =
e$ and $\mu(a, 1) = m$, and we have
\[A^* = \mu^{-1}(e), \quad A^*\otimes \N = \mu^{-1}(m), \quad
  \text{and} \quad A_z = \mu^{-1}(z),\]
where we identify $A^*$ with $(A\times\{0\})^*$ and $A_z= (A \times
2)^* \setminus (A^* \cup A^* \otimes \N)$.

\subsection{Classes of sentences}\label{ss:classessent}
Given a function $\zeta: A \to B$, we may define a map $\zeta^*_\x:
A^* \otimes \N^{ \x} \to B^* \otimes\N^ \x$ by setting $\zeta^*_\x(w,
{\bf i}) = (\zeta^*(w), {\bf i})$, for every $w \in A^*$ and ${\bf i}
\in \{0, \dots, \card w - 1\}^\x$. By discrete duality, $\zeta_\x^*$
yields a homomorphism of Boolean algebras $(\zeta^*_\x)^{-1}: \cP(B^*
\otimes \N^{ \x}) \to \cP(A^* \otimes \N^{ \x})$. By identifying
fragments of logic with the set of languages they define, we may show
that, for every set of quantifiers~$\cQ$ and every set of numerical
predicates~$\cN$, the homomorphism $(\zeta^*_\x)^{-1}$ restricts and
co-restricts to a homomorphism $\log \cQ B \x \cN \to \log \cQ A \x
\cN$. Indeed, for $\x$ consisting of a single variable~$x$, we have
\[\zeta_x(w, i) \models \tP_b(x) \iff (w, i) \models
  \bigvee_{\zeta(a) = b} \tP_a(x),\]
for every $w \in A^*$ and $i \in \{0, \dots, \card w - 1\}$. Note
that, since $A$ is finite, $\bigvee_{\zeta(a) = b} \tP_a(x)$ is a
well-defined formula of $\log \cQ A \x \cN$. With a routine structural
induction on the construction of formulas we can then show that
$(\zeta^*_\x)^{-1}$ sends the language defined by a formula
$\phi\in\log \cQ B \x \cN$ to the language defined by the formula
obtained from~$\phi$ by substituting, for every occurrence of a
predicate $\tP_b(x)$ with $b\in B$, the formula
$\bigvee_{\zeta(a)=b}\tP_a(x)$.  Note that if $b$ is not in the image
of~$\zeta$, then $\tP_b(x)$ is replaced by the empty join, which is
logically equivalent to the \emph{always-false} proposition.
We denote by $\zeta_{\cQ_\x[\cN]}$ this restriction and co-restriction
of $(\zeta_\x^*)^{-1}$, so that the following diagram commutes:
\begin{equation}
  \begin{aligned}
    \begin{tikzpicture}
      [node distance = 20mm, ->] \node at (0,0) (gc1) {$ \log \cQ B \x
        \cN$}; \node[below of = gc1] (gc2) {$\log \cQ A \x \cN$};
      \node[right of = gc1, xshift = 10mm] (pc1) {$\cP(B^*\otimes
        \N^{ \x})$}; \node[below of = pc1] (pc2) {$\cP(A^*
        \otimes \N^{ \x})$};
      \draw[->] (pc1) to node[right, ArrowNode] {$(\zeta_\x^*)^{-1}$}
      (pc2); \draw[>->] (gc1) to (pc1); \draw[>->] (gc2) to (pc2);
      \draw[->] (gc1) to node[left, ArrowNode] {$\zeta_{\cQ_\x[\cN]}$}
      (gc2);
    \end{tikzpicture}
    \label{eq:6}
  \end{aligned}
\end{equation}
In particular, $\cQ_{\_,\x}[\cN]$ defines a presheaf
$\cQ_{\_,\x}[\cN]: {\bf Set}_{fin}^{\rm op} \to {\sf BA}$ and, as so,
it takes right (respectively, left) inverses to left (respectively,
right) inverses and thus surjections (respectively, injections) on
finite sets to embeddings (respectively, quotients) in Boolean
algebras.

A \emph{class of sentences} is, intuitively speaking, the notion
corresponding to that of an lp-strain of languages in the setting of
logic on words.  This notion models what is usually referred to as a
\emph{fragment of logic}.
Formally, a \emph{class of sentences} $\Gamma$ is a map that
associates to each finite alphabet~$A$ a set of sentences
$\Gamma(A)\subseteq\cQ_{A}[\cN]$ which satisfies the following
properties:
\begin{enumerate}[label = (LC.\arabic*), leftmargin = *]
\item\label{item:LC1} Each set $\Gamma(A)$ is closed under Boolean
  connectives $\wedge$ and $\neg$ (and thus, under $\vee$);
\item\label{item:LC2} For each map $\zeta\colon A\to B$ between finite
  alphabets, the homomorphism $\zeta_{\cQ[\cN]}: \cQ_B[\cN] \to
  \cQ_A[\cN]$ restricts and co-restricts to a homomorphism
  $\zeta_{\Gamma}: \Gamma(B)\to \Gamma(A)$.
\end{enumerate}
Since we consider sentences up to semantic equivalence,
by~\ref{item:LC1}, each $\Gamma(A)$ is a Boolean algebra, and
~\ref{item:LC2} simply says that the assignment $A \mapsto
\cV_\Gamma(A) = \{L_\phi\mid\phi\in\Gamma(A)\}$ defines an lp-strain
of languages.

\section{Languages over profinite alphabets}\label{sec:profinite}
As in the regular setting, it is sometimes useful to consider
languages over profinite alphabets. Indeed, we will see in
Section~\ref{sec:substitution} that these appear naturally when
extending \emph{substitution} to arbitrary Boolean algebras.
In this section we see how any lp-strain of languages may be naturally
extended to profinite alphabets. Since lp-strains of languages are the
formal language counterpart of classes of sentences, such extension
will be another key ingredient in the subsequent sections
(cf. Corollary~\ref{cor:SP-global}, but also Theorem~\ref{t:2}).

Let $\cV$ be an lp-strain of languages. We extend the assignment $A
\mapsto \cV(A)$ for~$A$ finite to profinite alphabets as follows. Let
$Y$ be a profinite alphabet. By definition, $Y$ is the projective
limit of all its finite continuous quotients, say $\{h_i : Y
\twoheadrightarrow Y_i\}_{i \in I}$, the connecting morphisms being
all the surjective maps $h_{i,j}: Y_i \twoheadrightarrow Y_j$
satisfying $h_j = h_{i,j} \circ h_i$. Note that $I$ is a directed set
ordered by $i \ge j$ if and only if $h_{i,j}$ is defined.
In turn, each of the maps $h_{i,j}$ uniquely defines an lp-morphism
$h_{i,j}^*: Y_i^* \twoheadrightarrow Y_j^*$. Since $\cV$ is an
lp-strain, for every $i \ge j$, there is a well-defined embedding of
Boolean algebras $\theta_{i,j}: \cV(Y_j) \rightarrowtail \cV(Y_i)$
sending $L \in \cV(Y_j)$ to its preimage $(h_{i,j}^*)^{-1}(L) \in
\cV(Y_j)$.
A similar phenomenon happens with respect to each of the continuous
quotients $h_i: Y \twoheadrightarrow Y_i$, when $Y^*$ is viewed as the
topological space which is the union over $n \ge 0$ of the product
spaces~$Y^n$.   Indeed, since the clopen
subsets of $Y^*$ are the unions of the form $\bigcup_{n \ge 0} C_n$,
where $C_n \subseteq Y^n$ is a clopen subset of $Y^n$, we have that
$h_i^*$ is continuous and thus, it defines an embedding of Boolean
algebras $\iota_i: \cV(Y_i) \rightarrowtail \cl(Y^*)$.
Clearly, the family $\{ \cV(Y_i) \mid i \in I\}$ forms a direct limit
system with connecting morphisms $\{\theta_{i,j}\mid i \le j\}$
satisfying $\iota_j = \iota_i \circ \theta_{i,j}$.
Thus, we may define
\begin{equation}
  \label{eq:3}
  \cV(Y) = \lim_{\longrightarrow} \ \{\cV(A) \mid A \text{ is a
    finite continuous  quotient of }Y\}.
\end{equation}
Note that, by identifying $\cV(Y_i)$ with $\iota_i[\cV(Y_i)] \subseteq
\cl(Y^*)$ for each $i \in I$, we may regard $\cV(Y)$ as a Boolean
subalgebra of $\cl({Y^*}) \subseteq \cP(Y^*)$. More precisely, we have
\[\cV(Y) = \bigcup_{i \in I} \cV(Y_i),\]
and each $\cV(Y_i)$ is a Boolean subalgebra of $\cV(Y)$ (which dually
means that~$X_{\cV(Y_i)}$ is a quotient of~$X_{\cV(Y)}$).
Moreover, given a map $h: Y \to A$, words $u, v\in Y^*$, and a
language $K \subseteq A^*$, a routine computation shows that
\[u^{-1}[(h^*)^{-1}(K)]v^{-1}= (h^*)^{-1}(s^{-1}Kt^{-1}),\]
where $s = h^*(u)$ and $t = h^*(v)$. Thus, if $\cV(A)$ is closed under
quotients, then so is $(h^*)^{-1}(\cV(A))$.
Therefore, we have:
\begin{lemma}\label{l:4}
  A language $L \subseteq Y^*$ belongs to $\cV(Y)$ if and only if
  there is a finite continuous quotient $h: Y \twoheadrightarrow A$
  and a language $K \in \cV(A)$ such that $L =
  (h^*)^{-1}(K)$. Further, if $\cV$ is an lp-variety, then $\cV(Y)$ is
  a Boolean algebra closed under quotients by words of~$Y^*$.
\end{lemma}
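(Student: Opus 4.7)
The plan is to prove both parts by directly unfolding the construction of $\cV(Y)$ in equation~\eqref{eq:3}, read through the identification $\cV(Y)=\bigcup_{i\in I}\iota_i[\cV(Y_i)]$ inside $\cl(Y^*)$ established just above the statement, and then to appeal to the ``routine computation'' recorded in the paragraph preceding the lemma.

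For the first assertion, I would begin by noting that the indexing family $\{h_i:Y\twoheadrightarrow Y_i\}_{i\in I}$ consists of \emph{all} finite continuous quotients of the profinite space $Y$, so specifying a finite continuous quotient $h:Y\twoheadrightarrow A$ is, up to isomorphism, the same as specifying some index $i\in I$. Given $L\in\cV(Y)$, the identification $\cV(Y)=\bigcup_{i}\iota_i[\cV(Y_i)]$ gives some $i\in I$ and some $K\in\cV(Y_i)$ with $L=\iota_i(K)=(h_i^*)^{-1}(K)$, yielding the desired representation with $h=h_i$ and $A=Y_i$. Conversely, if $L=(h^*)^{-1}(K)$ for some finite continuous quotient $h:Y\twoheadrightarrow A$ and some $K\in\cV(A)$, then up to isomorphism $h=h_{i_0}$ for some $i_0\in I$, and $L=\iota_{i_0}(K)\in\cV(Y)$.

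For the second assertion, assume $\cV$ is an lp-variety and fix $L\in\cV(Y)$ together with $u,v\in Y^*$. By the first part, write $L=(h^*)^{-1}(K)$ with $h:Y\twoheadrightarrow A$ a finite continuous quotient and $K\in\cV(A)$. The computation recorded immediately before the statement then gives
\[
 u^{-1}Lv^{-1}=(h^*)^{-1}\bigl(s^{-1}Kt^{-1}\bigr),
\]
where $s=h^*(u)$ and $t=h^*(v)$ lie in $A^*$. Since $\cV$ is an lp-variety, $\cV(A)$ is closed under left and right quotients by words of $A^*$, so $s^{-1}Kt^{-1}\in\cV(A)$; applying the first part again yields $u^{-1}Lv^{-1}\in\cV(Y)$.

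I do not expect any essential obstacle: the lemma is an almost immediate consequence of the construction of $\cV(Y)$ together with the intertwining of word-quotients with preimages under lp-morphisms. The only mild care needed is the verification that the indexing family of the direct limit really consists of \emph{all} finite continuous quotients of $Y$, so that a quotient $h:Y\twoheadrightarrow A$ appearing in the characterization need not be restricted to any a priori chosen cofinal subsystem.
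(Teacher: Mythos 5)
Your proposal is correct and follows essentially the same route as the paper, which presents the lemma as an immediate consequence of the preceding construction of $\cV(Y)=\bigcup_{i\in I}\iota_i[\cV(Y_i)]$ over all finite continuous quotients together with the recorded identity $u^{-1}[(h^*)^{-1}(K)]v^{-1}=(h^*)^{-1}(s^{-1}Kt^{-1})$. Your extra care about the indexing family consisting of all finite continuous quotients is exactly the right (and only) point needing verification.
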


\begin{remark}
  As we have seen above, every language over a profinite alphabet may
  be seen as a clopen subset of~$Y^*$. However, even when~$\cV$ is the
  full variety of languages, meaning that $\cV(A) = \cP(A^*)$ for
  every finite alphabet~$A$, the equality $\cV(Y) = \cl(Y^*)$ does not
  hold in general. For instance, if $Y = \N \cup \{\infty\}$ is the
  one-point compactification of~$\N$, we have $Y =
  \lim\limits_{\longleftarrow} \ \{h_i: \N \twoheadrightarrow \{0, 1,
  \dots, i\}\}_{i \in \N}$, where $h_i(j) = j$ for $j \in \{0, 1,
  \dots, i\}$, and $h_i(j) = i$ for $j \in \N_{> i} \cup \{\infty\}$.
  Moreover, the set $L = \{0 1 2\dots i \mid i \in \N\}$ is a clopen
  subset of~$Y^*$ which does not belong to $\cV(Y)$. Indeed, by
  Lemma~\ref{l:4}, this is due to the fact that there is no finite
  continuous quotient~$h:Y \twoheadrightarrow A$ of~$Y$ so that $L$
  belongs to $(h^*)^{-1}(\cP(A^*))$, as the preimage under~$h$ of
  every language over~$A$ necessarily contains some word with the
  letter~$\infty$.
\end{remark}

An interesting open question is then to provide a nice description of
the dual space of $\cV(Y)$ for the full variety~$\cV$.  This would
provide the ``best'' ambient space for studying languages over a
profinite alphabet. Lacking such a description, we will define a
\emph{language over a profinite alphabet~$Y$} to be a clopen subset
of~$Y^*$. Note that we have the following:

\begin{proposition}\label{prop:maxlp}
  Let $Y$ be a profinite set and $L\subseteq Y^*$. Then the following
  conditions are equivalent:
\begin{enumerate}
\item There exists $q\colon Y\twoheadrightarrow A$ finite continuous
  quotient with $L=(q^*)^{-1}(q^*[L])$;
\item There exist $V_0,\dots, V_{n-1}$ a clopen partition of $Y$ such
  that, if $w\in L$, then there is $f:|w|\to n$ with
\[
  w\in V_{f(0)}\dots V_{f(|w| -1)}\subseteq L;
\]
\item There exist $V_0,\dots, V_{n -1}$ clopen subsets of $Y$ such
  that, if $w\in L$, then there is $f:|w|\to n$ with
\[
w\in V_{f(0)}\dots V_{f(|w| - 1)}\subseteq L.
\]
\end{enumerate} 
\end{proposition}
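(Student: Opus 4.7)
The plan is to prove the cyclic implications $(1)\Rightarrow(2)\Rightarrow(3)\Rightarrow(1)$, with $(2)\Rightarrow(3)$ being immediate and the substantive work concentrated in $(3)\Rightarrow(1)$.

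For $(1)\Rightarrow(2)$, I would take a finite continuous quotient $q\colon Y\twoheadrightarrow A$ with $A=\{a_1,\dots,a_n\}$ witnessing the equality $L=(q^*)^{-1}(q^*[L])$, and set $V_i:=q^{-1}(a_i)$. These form a clopen partition of $Y$ by continuity of $q$ and the fact that $A$ is discrete. Given $w=y_1\cdots y_k\in L$, define $f\colon\{1,\dots,k\}\to\{1,\dots,n\}$ by letting $f(i)$ be the unique index with $q(y_i)=a_{f(i)}$, so that $w\in V_{f(1)}\cdots V_{f(k)}$. Any other word $w'=y_1'\cdots y_k'$ in that product satisfies $q^*(w')=a_{f(1)}\cdots a_{f(k)}=q^*(w)\in q^*[L]$, and hence $w'\in(q^*)^{-1}(q^*[L])=L$; this gives the inclusion $V_{f(1)}\cdots V_{f(k)}\subseteq L$.

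The implication $(2)\Rightarrow(3)$ is trivial, as a clopen partition is in particular a family of clopen subsets.

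The main step is $(3)\Rightarrow(1)$. The key idea is to pass from the given clopen sets $V_1,\dots,V_n$ to the finite clopen partition they generate. Concretely, I would consider the finite Boolean subalgebra of $\cl(Y)$ generated by $\{V_1,\dots,V_n\}$, and let $W_1,\dots,W_m$ be its atoms, which form a clopen partition of $Y$ refining every $V_i$. Setting $A:=\{a_1,\dots,a_m\}$ and $q\colon Y\twoheadrightarrow A$ to send $W_j$ to $a_j$ yields a finite continuous quotient of $Y$. The inclusion $L\subseteq(q^*)^{-1}(q^*[L])$ is automatic. For the converse, given $w'\in(q^*)^{-1}(q^*[L])$, pick $w\in L$ with $q^*(w)=q^*(w')$; in particular $|w|=|w'|$ and $w_i$ and $w_i'$ belong to the same atom $W_j$ for each $i$. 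By $(3)$ there exists $f\colon|w|\to n$ with $w\in V_{f(1)}\cdots V_{f(|w|)}\subseteq L$. Since every $V_k$ is a union of atoms of the generated Boolean subalgebra, membership $y\in V_k$ depends only on which atom $W_j$ contains $y$; hence $w_i'\in V_{f(i)}$ for each $i$, and therefore $w'\in V_{f(1)}\cdots V_{f(|w|)}\subseteq L$, as required.

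The only point that requires a moment of thought is $(3)\Rightarrow(1)$: one must resist the temptation to directly use the family $\{V_1,\dots,V_n\}$ to define the quotient (the $V_i$ need not be disjoint), and instead use the finite clopen partition given by the atoms of the Boolean subalgebra they generate. Once that observation is made, everything reduces to the fact that each $V_k$ is saturated by this partition.
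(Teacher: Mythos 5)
Your proof is correct and follows essentially the same route as the paper: $(1)\Rightarrow(2)$ via the fibres $q^{-1}(a_i)$, and $(3)\Rightarrow(1)$ by passing to the atoms of the finite Boolean algebra of clopens generated by $V_1,\dots,V_n$ and using that each $V_k$ is saturated with respect to that partition. The only cosmetic difference is that the paper packages the key step of $(3)\Rightarrow(1)$ as a single chain of set inclusions, whereas you argue element-wise; the content is identical.
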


\begin{proof}
  $(a){\Rightarrow}(b)$: Fix an enumeration of
  $A=\{a_0,\dots,a_{n-1}\}$ and take $V_i=q^{-1}(a_i)$. Then
  $V_0,\dots, V_{n-1}$ is a clopen partition of $Y$. Also, for any $w\in
  L$
\[
  w\in (q^*)^{-1}(q^*(w))=q^{-1}(q(w_0))\dots q^{-1}(q(w_{|w|
    -1}))\subseteq (q^*)^{-1}(q^*[L])= L,
\]
where the last equality holds by hypothesis. Now $(b)$ follows as
$q^{-1}(q(w_0))\dots q^{-1}(q(w_{|w| - 1}))=V_{f(0)}\dots V_{f(|w|
  -1)}$, where $f(j)=i$ if and only if $w_j=a_i$. Clearly
$(b){\Rightarrow}(c)$.

For $(c){\Rightarrow}(a)$: Let $W_0,\dots, W_{m-1}$ be the atoms of the
finite Boolean algebra of clopens generated by the clopen subsets
$V_0,\dots, V_{n-1}$ stipulated in $(c)$ and let $q\colon
Y\twoheadrightarrow m$ be the corresponding quotient map. Then $q$ is
continuous and, by $(c)$, for each $w\in L$ we have $f_w:|w|\to n$
with $w\in V_{f_w(0)}\dots V_{f_w(|w| - 1)}\subseteq L$. Since the
$W_i$ are atoms, we have $w_j\in
q^{-1}(q(w_{j}))=W_{q(w_{j})}\subseteq V_{f_w(j)}$, and it follows
that
\[
  L\subseteq (q^*)^{-1}(q^*[L])=\bigcup_{w\in
    L}(q^*)^{-1}(q^*(w))=\bigcup_{w\in L}W_{q(w_{0})}\dots W_{q(w_{|w|
      - 1})}\subseteq\bigcup_{w\in L} V_{f_w(0)}\dots V_{f_w(|w| -
    1)}\subseteq L.\popQED
\]
\end{proof}
Notice that each of the spaces $Y^n$ is a Boolean space, but~$Y^*$ is
not. Nevertheless, since it is $0$-dimensional, it may be naturally
embedded in a Boolean space, namely in the Banaschewski
compactification of~$Y^*$, cf. Section~\ref{ss:comp}. As a
consequence, we have the following:

\begin{lemma}\label{l:18}
  Let $\cB \subseteq \cl(Y^*)$ be a Boolean algebra of languages over
  the profinite alphabet $Y$. Then, the dual of the embedding $\cB
  \rightarrowtail \cl(Y^*)$ restricts to $Y^*$, yielding a continuous
  function $Y^* \to X_\cB$ with dense image, mapping a word $w \in
  Y^*$ to the ultrafilter $\{K \in \cB \mid w \in K\}$.
\end{lemma}
\begin{proof}
  The dual of the embedding $\cB \rightarrowtail \cl(Y^*)$ is a
  quotient $\beta_0(Y^*) \twoheadrightarrow X_\cB$. Since $Y^*$
  densely embeds in $\beta_0(Y^*)$ via the assignment $w \mapsto \{K
  \in \cl(Y^*) \mid w \in K\}$, the claim follows.
\end{proof}

Now, we show that the Banaschewski and \v Cech-Stone compactifications
of~$Y^*$ coincide.
\begin{lemma}\label{l:10}
  Every disjoint union of topological spaces of \v Cech-dimension zero
  has \v Cech-dimension zero.
\end{lemma}

\begin{proof}
  Let $\{X_i\}_{i \in I}$ be a family of topological spaces of \v
  Cech-dimension zero, and $X = \bigcup_{i \in I}X_i$ be its disjoint
  union. Let also $X = U_1 \cup \dots \cup U_n$ be a finite open cover
  of $X$. Then, for every $i \in I$, $X_i = \bigcup_{k = 1}^n(U_k \cap
  X_i)$ is a finite open cover of~$X_i$. Since $X_i$ has \v
  Cech-dimension zero, this open cover admits a finite clopen
  refinement, say $X_i = \bigcup_{j \in F_i} V_{j, i}$, where $F_i$ is
  a finite set and each $V_{j,i}$ is a clopen subset of~$X_i$
  contained in some~$U_k \cap X_i$. For each $k = 1, \dots, n$, we set

  \[V_k = \bigcup \{V_{j, i} \mid i \in I, \ j \in F_i, \ V_{j,
      i}\subseteq U_k \cap X_i\}.\]
  Then, $V_i$ is a clopen subset of~$X$ contained in~$U_k$ and $X
  = \bigcup_{k = 1}^n V_k$. Thus, $\{V_k\}_{k = 1}^n$ is the desired
  finite clopen refinement of the given finite open cover of $X$.
\end{proof}

By a straightforward application of Theorem~\ref{t:1}, we may
conclude that the Banaschewski and \v Cech-Stone compactifications of
$Y^*$ indeed coincide.
\begin{theorem}\label{t:4}
  Let $Y$ be a Boolean space and $Y^*$ be the topological space which
  is the union over $n \ge 0$ of the product spaces~$Y^n$. Then, the
  dual space of $\cl(Y^*)$ is $\beta(Y^*)$, the \v Cech-Stone
  compactification of $Y^*$. In particular, there is a continuous
  embedding $Y^* \rightarrowtail \beta(Y^*)$ with dense image.
\end{theorem}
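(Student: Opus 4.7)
The plan is essentially to assemble the facts already laid out in Section~\ref{ss:comp} together with Lemma~\ref{l:10}. First I would observe that $Y^*$ is zero-dimensional: each $Y^n$ is a (finite) product of Boolean spaces, hence Boolean and thus zero-dimensional, and $Y^*=\bigsqcup_{n\ge 0}Y^n$ is the topological disjoint union of these, which is again zero-dimensional (a basis of clopens for $Y^*$ is obtained by collecting bases of clopens for each $Y^n$, since every $Y^n$ is clopen in $Y^*$). In particular $Y^*$ is a completely regular $T_1$ space, so the \v Cech-Stone compactification $\beta(Y^*)$ exists together with a topological embedding $e\colon Y^*\hookrightarrow\beta(Y^*)$ with dense image, as recalled in Section~\ref{ss:comp}.

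Next I would invoke Lemma~\ref{l:10}, which says that $Y^*$ has \v Cech-dimension zero. Since $Y^*$ is zero-dimensional, Theorem~\ref{t:1} applies and gives that $Y^*$ is normal and that the Banaschewski and \v Cech-Stone compactifications of $Y^*$ agree, i.e.\ $\beta_0(Y^*)\cong\beta(Y^*)$ as compactifications of $Y^*$. Combined with the fact from Section~\ref{ss:comp} that for a zero-dimensional space $X$ the dual space of $\cl(X)$ is precisely $\beta_0(X)$, I conclude that the dual space of $\cl(Y^*)$ is $\beta(Y^*)$, as required.

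Finally, the ``in particular'' clause is obtained either directly from the general existence of the \v Cech-Stone embedding $e\colon Y^*\hookrightarrow\beta(Y^*)$ (noting that $Y^*$ is completely regular $T_1$), or equivalently by transporting along the isomorphism $\beta_0(Y^*)\cong\beta(Y^*)$ the canonical embedding $e_{Y^*}\colon Y^*\hookrightarrow\beta_0(Y^*)$ with dense image described in Section~\ref{ss:comp}. There is essentially no obstacle once Lemma~\ref{l:10} is in place; the only mildly subtle point is checking that $Y^*$ really is zero-dimensional as a topological sum (not just each $Y^n$ separately), but this is immediate from the fact that each $Y^n$ is itself clopen in $Y^*$.
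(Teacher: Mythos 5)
Your proposal is correct and follows exactly the route the paper intends: Lemma~\ref{l:10} gives that $Y^*$ has \v Cech-dimension zero, Theorem~\ref{t:1} then identifies $\beta_0(Y^*)$ with $\beta(Y^*)$, and the definition of the Banaschewski compactification as the dual space of $\cl(Y^*)$ finishes the argument, with the dense embedding coming from $e_{Y^*}$. Your explicit check that $Y^*$ is zero-dimensional (as a topological sum of clopen Boolean pieces) is a detail the paper leaves implicit but is exactly right.
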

The following immediate consequence is stated for later reference.
\begin{corollary}\label{c:12}
  Let $X, Y$ be Boolean spaces. Then, $\beta(Y^*) \times X$ is the
  Banaschewski compactification of $Y^* \times X$.
\end{corollary}

A consequence of Lemma~\ref{l:4} is that the extension to profinite
alphabets of an lp-strain of languages is closed under preimages of
continuous lp-morphisms between profinite alphabets:

\begin{lemma}\label{l:17}
  Let $\alpha: Z \to Y$ be a continuous function and $\cV$ an
  lp-strain of languages. Then, the Boolean algebra homomorphism
  $(\alpha^*)^{-1}: \cl(Y^*) \to\cl(Z^*)$ restricts and co-restricts
  to a homomorphism $\cV(Y) \to \cV(Z)$, so that we have the following
  commutative diagram:
  \begin{center}
    \begin{tikzpicture}[node distance = 20mm]
      \node (B) at (0,0) {$\cV(Y)$}; \node[right of = B, xshift = 7mm]
      (A) {${\cl(Y^*)}$}; \node[below of = B, yshift = 5mm] (VB)
      {$\cV(Z)$}; \node[below of = A, yshift = 5mm] (VA) {${\cl(Z^*)}$};
      \draw[>->] (B) to (A); \draw[->, dashed] (B) to (VB); \draw[->]
      (A) to node[right, ArrowNode] {$(\alpha^*)^{-1}$} (VA);
      \draw[>->] (VB) to (VA);
    \end{tikzpicture}
  \end{center}
\end{lemma}
\begin{proof}
  Let $L \in \cV(Y)$. By Lemma~\ref{l:4}, there is a finite continuous
  quotient $h: Y \twoheadrightarrow A$ and a language $K \in \cV(A)$
  such that $L = (h^*)^{-1}(K)$. Consider the factorization of $h
  \circ \alpha$ through its image: $Z
  \overset{\;q\;}{\twoheadrightarrow} B
  \overset{\;e\;}{\rightarrowtail} A$. Then, $B =\im(h \circ \alpha)$
  is a finite continuous quotient of~$Z$ which embeds in $A$. Since
  $\cV$ is an lp-strain of languages, and thus closed under preimages
  of lp-morphisms, the language $(e^*)^{-1}(K)$ belongs to
  $\cV(B)$. Finally, using again Lemma~\ref{l:4}, we may conclude that
  $(q^*)^{-1}((e^*)^{-1}(K))$ belongs to $\cV(Z)$. But, since $e \circ
  q = h \circ \alpha$ and $L = (h^*)^{-1}(K)$, it follows that
  $(q^*)^{-1}((e^*)^{-1}(K)) = (\alpha^*)^{-1}((h^*)^{-1}(K)) =
  (\alpha^*)^{-1}(L)$. Thus, $(\alpha^*)^{-1}(L)$ belongs to $\cV(Z)$,
  as intended.
\end{proof}

The dual statement of Lemma~\ref{l:17}, yields the following:

\begin{lemma}\label{l:5}
  Let $\alpha: Z \to Y$ be a continuous function and $\cV$ an
  lp-strain of languages. Then, there exists a continuous map
  $\widehat \alpha: X_{\cV(Z)} \to X_{\cV(Y)}$ making the following
  diagram commute:
  \begin{center}
    \begin{tikzpicture}[node distance = 20mm]
      \node (B) at (0,0) {$Z^*$}; \node[right of = B] (A) {$X_{\cV(Z)}$};
      \node[below of = B, yshift = 5mm] (VB) {$Y^*$};
      \node[below of = A, yshift = 5mm] (VA) {$X_{\cV(Y)}$};
      \draw[->] (B) to node[ArrowNode] {$f$} (A); \draw[->] (B) to
      node[left, ArrowNode] {$\alpha^*$} (VB); \draw[->, dashed] (A)
      to node[right, ArrowNode] {$\widehat \alpha$} (VA); \draw[->]
      (VB) to node[below,ArrowNode] {$g$} (VA);
    \end{tikzpicture}
  \end{center}
  where $f:Y^* \to X_{\cV(Y)}$ and $g:Z^* \to X_{\cV(Z)}$ are the
  continuous functions with dense image obtained by dualizing and
  restricting the embeddings $\cV(Y) \rightarrowtail \cl(Y^*)$ and
  $\cV(Z) \rightarrowtail \cl(Z^*)$, respectively
  (cf. Lemma~\ref{l:18}).
\end{lemma}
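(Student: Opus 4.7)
The plan is to obtain $\widehat{\alpha}$ by Stone duality from the Boolean algebra homomorphism $\cV(Y)\to\cV(Z)$ induced by preimage along $\alpha^*$. The only nontrivial ingredient is showing that this preimage operation actually lands in $\cV(Z)$; everything else is formal.

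First I would verify that for every $L\in\cV(Y)$ one has $(\alpha^*)^{-1}(L)\in\cV(Z)$. By Lemma~\ref{l:4}, any such $L$ has the form $L=(h^*)^{-1}(K)$ for some finite continuous quotient $h\colon Y\twoheadrightarrow A$ and some $K\in\cV(A)$. Since $A$ is finite and $Z$ is profinite, the continuous composite $h\circ\alpha\colon Z\to A$ factors as $h\circ\alpha=g\circ h'$ for some finite continuous quotient $h'\colon Z\twoheadrightarrow A'$ and a map of finite alphabets $g\colon A'\to A$. Then
\[
(\alpha^*)^{-1}(L)=((h\circ\alpha)^*)^{-1}(K)=((h')^*)^{-1}\bigl((g^*)^{-1}(K)\bigr).
\]
Because $\cV$ is an lp-strain, $(g^*)^{-1}(K)\in\cV(A')$, so a second application of Lemma~\ref{l:4} gives $(\alpha^*)^{-1}(L)\in\cV(Z)$, as desired.

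Having this, $L\mapsto(\alpha^*)^{-1}(L)$ defines a Boolean algebra homomorphism $\cV(Y)\to\cV(Z)$. Applying Stone duality (Section~\ref{ss:Stonedual}) yields the continuous map $\widehat{\alpha}$ between the dual spaces, which sends an ultrafilter $\gamma$ to $\{L\in\cV(Y)\mid(\alpha^*)^{-1}(L)\in\gamma\}$. For the commutativity of the diagram, recall that the canonical maps $Z^*\to X_{\cV(Z)}$ and $Y^*\to X_{\cV(Y)}$ send a word to the principal ultrafilter of languages (in $\cV(Z)$ or $\cV(Y)$, respectively) containing it. Given $v\in Z^*$, its image in $X_{\cV(Z)}$ is $\{K\in\cV(Z)\mid v\in K\}$, and $\widehat{\alpha}$ maps this to $\{L\in\cV(Y)\mid v\in(\alpha^*)^{-1}(L)\}=\{L\in\cV(Y)\mid\alpha^*(v)\in L\}$, which is precisely the image of $\alpha^*(v)$ under $Y^*\to X_{\cV(Y)}$.

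The principal obstacle is Step~1, and it is where the assumption that $Z$ is profinite (i.e. a Boolean space) is used essentially: a continuous map from $Z$ to a finite discrete space must factor through some finite continuous quotient of $Z$. Once this factorization is in hand, the lp-strain hypothesis immediately delivers the conclusion, and the rest is a routine unwinding of Stone duality.
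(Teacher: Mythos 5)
Your proof is correct and follows essentially the same route as the paper: reduce to showing $(\alpha^*)^{-1}$ maps $\cV(Y)$ into $\cV(Z)$ via Lemma~\ref{l:4}, factoring $h\circ\alpha$ through the finite continuous quotient $Z\twoheadrightarrow\im(h\circ\alpha)$ followed by an inclusion into $A$, and then dualize. The only difference is cosmetic — you spell out the ultrafilter-level verification of the commuting square, which the paper leaves as a formal consequence of duality.
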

\begin{proof}
  By Lemma~\ref{l:17}, the diagram below restricts correctly:
  \begin{center}
    \begin{tikzpicture}[node distance = 20mm]
      \node (B) at (0,0) {$\cV(Y)$}; \node[right of = B, xshift = 7mm]
      (A) {${\cl(Y^*)}$}; \node[below of = B, yshift = 5mm] (VB)
      {$\cV(Z)$}; \node[below of = A, yshift = 5mm] (VA)
      {${\cl(Z^*)}$};
      \draw[>->] (B) to (A); \draw[->, dashed] (B) to (VB); \draw[->]
      (A) to node[right, ArrowNode] {$(\alpha^*)^{-1}$} (VA);
      \draw[>->] (VB) to (VA);
    \end{tikzpicture}
  \end{center}
  Let $\widehat \alpha$ be the dual of the homomorphism
  $(\alpha^*)^{-1}: \cV(Y) \to \cV(Z)$. We claim that $\widehat
  \alpha$ makes the desired diagram commute. Indeed, given $K \in
  \cV(Y)$ and $w \in Z^*$, by definition of dual map and by
  Lemma~\ref{l:18}, we have the following:
  \begin{align*}
    K \in \widehat \alpha \circ f(w)
    & \iff (\alpha^*)^{-1}(K) \in f(w) \iff w \in (\alpha^*)^{-1}(K)
    \\ & \iff \alpha^* (w) \in K \iff K \in g \circ \alpha^*(w).
  \end{align*}
  Thus, $ \widehat \alpha \circ f = g \circ \alpha^*$ as required.
\end{proof}

\section{Recognition of Boolean algebras closed under
  quotients}\label{sec:bims}
Quotients by words are natural operations on languages. In the regular
setting they dually correspond to the multiplication in profinite
monoids~\cite{GehrkeGrigorieffPin08}, and one is often interested in
studying Boolean algebras that are closed under quotients.
Beyond regularity, the closure under quotients will bring additional
algebraic structure, and thus computational power, to the dual space
of the Boolean algebra considered. We may thus think of the dual space
of a Boolean algebra closed under quotients as having ``almost'' a
monoid structure. This intuitive idea is captured by the \emph{Boolean
  spaces with internal monoids}, which were introduced
in~\cite{GehrkePetrisanReggio16} as an alternative to the semiuniform
monoids of~\cite{GehrkeGrigorieffPin10}. In Section~\ref{sec:1} we
will introduce \emph{Boolean spaces with an internal monoid}
(\bim{}'s) and define its variant of a \bim{}-stamp.

Then, in Section~\ref{sec:3}, we will discuss recognition of languages
that are defined by some formula with a free variable, that is,
languages of marked words. We already saw in
Section~\ref{ss:markedwords} that the set~$A^* \otimes \N$ of all
marked words is not equipped with a monoid structure. For this reason,
\bim{}'s will not appear as a natural notion of
recognizer. Nevertheless, the fact that $A^* \otimes \N$ embeds in the
free monoid $(A \times 2)^*$ allows us to identify some monoid actions
that will be crucial when defining semidirect products in
Section~\ref{sec:closing-quot}.
\subsection{Languages over profinite alphabets}\label{sec:1}
Let $Y$ be a profinite alphabet, and $\cB \subseteq \cl(Y^*)$ be a
Boolean algebra of languages. If $\cB$ is closed under quotients, then
for every word $w$ over~$Y$, the homomorphisms
\[\ell_{w}:\cl(Y^*) \to \cl(Y^*), \qquad L \mapsto w^{-1}L\]
and
\[r_w:\cl(Y^*) \to \cl(Y^*), \qquad L \mapsto Lw^{-1}\]
restrict and co-restrict to endomorphisms of~$\cB$.
Thus, we have the two following commutative diagrams:
\begin{center}
  \begin{tikzpicture}[node distance = 15mm, ->]
    \node (A) {$\cl(Y^*)$};
    \node[right of = A, xshift = 20mm] (B) {$\cl(Y^*)$};
    \node[below of = A] (C) {$\cB$};
    \node[below of = B] (D) {$\cB$};
    \draw (A) to node[above, ArrowNode] {$\ell_w$} (B); \draw (C) to
    node[below, ArrowNode] {$\ell_w$} (D); \draw[>->] (C) to (A);
    \draw[>->] (D) to (B);
    \node[right of = B, xshift = 25mm] (A') {$\cl(Y^*)$}; \node[right
    of = A', xshift = 20mm] (B') {$\cl(Y^*)$}; \node[below of = A']
    (C') {$\cB$}; \node[below of = B'] (D') {$\cB$};
    \draw (A') to node[above, ArrowNode] {$r_w$} (B'); \draw (C') to
    node[below, ArrowNode] {$r_w$} (D'); \draw[>->] (C') to (A');
    \draw[>->] (D') to (B');
  \end{tikzpicture}
\end{center}
Dually, we have the following commutative diagrams of continuous functions
\begin{equation}
  \begin{aligned}
    \begin{tikzpicture}[node distance = 15mm, ->]
      \node (A) {$Y^*$};
      \node[right of = A, xshift = 20mm] (B) {$Y^*$};
      \node[below of = A] (C) {$X_\cB$};
      \node[below of = B] (D) {$X_\cB$};
      \draw (A) to node[above, ArrowNode] {$\widetilde\ell_w$} (B);
      \draw (C) to node[below, ArrowNode] {$\widetilde\ell_w$} (D);
      \draw (A) to node[left, ArrowNode] {$\pi$} (C); \draw (B) to
      node[right, ArrowNode] {$\pi$} (D);
      \node[right of = B, xshift = 25mm] (A') {$Y^*$}; \node[right of =
      A', xshift = 20mm] (B') {$Y^*$}; \node[below of = A'] (C')
      {$X_\cB$}; \node[below of = B'] (D') {$X_\cB$};
      \draw (A') to node[above, ArrowNode] {$\widetilde r_w$} (B');
      \draw (C') to node[below, ArrowNode] {$\widetilde r_w$} (D');
      \draw (A') to node[left, ArrowNode] {$\pi$}(C'); \draw (B') to
      node[right, ArrowNode] {$\pi$} (D');
    \end{tikzpicture}
  \end{aligned}\label{eq:12}
\end{equation}
We make a few remarks about these diagrams. First, notice that
concatenation of words over~$Y$ turns $Y^*$ into a topological
monoid. Indeed, the multiplication $m: Y^* \times Y^* \to Y^*$ is
continuous because, if $C_1, \dots, C_n \subseteq Y$ are (cl)open
subsets of~$Y$, then
\[m^{-1}(C_1 \times \dots \times C_n) = \bigcup_{i = 0}^n (C_1 \times
  \dots C_i) \times (C_{i+1} \times \dots \times C_n) \]
is a (cl)open subset of $Y^* \times Y^*$.  Moreover, since for every
$u, v \in Y^*$ we have $\ell_u \circ r_v = r_v \circ \ell_u$, we also
have $\widetilde\ell_u \circ \widetilde r_v = \widetilde r_v \circ
\widetilde\ell_u$. Therefore, the monoid structure on~$Y^*$ induces a
biaction with continuous components of~$Y^*$ on~$X_\cB$ which is given
by
\begin{equation}
  Y^* \times X_\cB \to X_\cB, \quad (u, x) \mapsto
  \widetilde\ell_u(x)\qquad\text{and}\qquad X_\cB \times Y^* \to X_\cB, \quad
  (x, v) \mapsto \widetilde r_v(x).\label{eq:10}
\end{equation}
However, $X_\cB$ itself does not necessarily inherit a monoid
structure. Indeed, as it was shown in~\cite{GehrkeGrigorieffPin10,
  Gehrke16}, when $A$ is a finite alphabet, this is case if and only
if $\cB$ consists of regular languages. 
Nevertheless, \eqref{eq:10} induces a monoid structure on the dense
subspace $M= \pi[Y^*]$ of $X_\cB$, which is given by
\begin{equation}
  \pi(u) \cdot \pi(v) = \pi(uv),\label{eq:11}
\end{equation}
for every $u,v \in Y^*$.  Indeed, with a routine computation we may
derive the following equalities:
\begin{equation}
  \widetilde\ell_u(\pi(v)) = \pi(uv) = \widetilde r_v(\pi(u)).\label{eq:17}
\end{equation}
These not only show that~\eqref{eq:11} is well-defined in the sense
that $\pi(uv) = \pi(u'v')$ whenever $\pi(u) = \pi(u')$ and $\pi(v) =
\pi(v')$, but also that the monoid structure on~$M$ is indeed
inherited from~\eqref{eq:10}.
In fact, it is not hard to see that~$\pi: Y^* \twoheadrightarrow M$ is
precisely the syntactic homomorphism of~$\cB$ as defined in
Section~\ref{ss:languages}.
Moreover, since $M$ is dense in $X_\cB$, \eqref{eq:17} also shows that
whenever $\pi(u) = \pi(u')$ (respectively, $\pi(v) = \pi(v')$), the
continuous functions $\widetilde\ell_u$ and $\widetilde\ell_{u'}$
(respectively, $\widetilde r_v$ and $\widetilde r_{v'}$) coincide on
all of~$X_\cB$. Therefore, the natural biaction of $M$ on
itself extends to a biaction of~$M$ on~$X_\cB$ given by
\[M \times X_\cB \to X_\cB, \quad (m, x) \mapsto
  \lambda_m(x)\qquad\text{and}\qquad X_\cB \times M \to X_\cB, \quad
  (x, m) \mapsto \rho_m(x),\]
with continuous components at each $m \in M$. Here, for each $m \in
M$, $\lambda_m$ (respectively, $\rho_m$) denotes the continuous
function $\widetilde \ell_u$ (respectively, $\widetilde r_u$) where $u
\in Y^*$ is any word satisfying $\pi(u)= m$.
\begin{example}
  Let $A$ be a finite alphabet and ${\rm Reg}(A)$ denote the set of
  regular languages over $A$ (which, as already observed, is a Boolean
  algebra closed under quotients). Let also $\widehat {A^*}$ denote
  the free profinite monoid over~$A$ (see e.g.~\cite{Almeida05} for a
  description of~$\widehat{A^*}$ both as a projective limit and as a
  completion of a suitable metric space $(A^*, d)$). Then, ${\rm
    Reg}(A)$ is a Boolean algebra isomorphic to the Boolean algebra of
  clopen subsets of $\widehat{A^*}$, via the assignment that sends a
  regular language to its topological
  closure~\cite[Section~3.6]{Almeida94}.  In particular, the Stone
  dual of ${\rm Reg} (A)$ is the underlying topological space
  of~$\widehat{A^*}$ and the map $\pi: A^* \to \widehat{A^*}$ from
  diagrams~\eqref{eq:12} is an embedding (with dense image). The
  biaction of $A^*$ on $\widehat{A^*}$ is obtained by restriction of
  the multiplication on
  $\widehat{A^*}$~\cite[Section~6]{GehrkeGrigorieffPin08} (see also
  \cite[Section~4.2]{Gehrke16}).
\end{example}

The following definition, which captures duality for Boolean
subalgebras of languages closed under the quotient operations,
originates in~\cite{GehrkePetrisanReggio16}, where it was used for
recognition over finite alphabets. The above considerations verify
that it remains the appropriate notion for recognition over profinite
alphabets.
\begin{definition}
  A \emph{Boolean space with an internal monoid} (\bim{}) is a triple
  $(M, p, X)$ where $X$ is a Boolean space equipped with a biaction of
  a monoid $M$ whose right and left components at each $m\in M$ are
  continuous and an injective function $p\colon M\rightarrowtail X$
  which has dense image and is a morphism of sets with
  $M$-biactions. That is, for each $m\in M$, there are continuous
  functions $\lambda_m, \rho_m: X \to X$ that make the following
  diagrams commute:
  \begin{center}
    \begin{tikzpicture}[->,node distance=3.4cm, scale=1]
      \node(M1r) at (1.5,1.2) {$M$}; \node(Xr) at (3.5,0) {$X$};
      \node(M2r) at (1.5,0) {$M$}; \node(Yr) at (3.5,1.2) {$X$};

      \draw[->] (M1r) to node[left, ArrowNode] {$(\_) \cdot m$} (M2r);
      \draw[->] (Yr) to node[right, ArrowNode] {$\rho_m$} (Xr);
      \draw[>->] (M1r) to node[above, ArrowNode] {$p$} (Yr);
      \draw[>->] (M2r) to node[below, ArrowNode] {$p$} (Xr);

      \node(M1) at (-3.5,1.2) {$M$}; \node(X) at (-1.5,0) {$X$};
      \node(M2) at (-3.5,0) {$M$}; \node(Y) at (-1.5,1.2) {$X$};

      \draw[->] (M1) to node[left, ArrowNode] {$m\cdot (\_)$}
      (M2); \draw[->] (Y) to node[right, ArrowNode]
      {$\lambda_m$} (X); \draw[>->] (M1) to
      node[above, ArrowNode] {$p$} (Y); \draw[>->] (M2) to
      node[below, ArrowNode] {$p$} (X);
    \end{tikzpicture}
  \end{center}
  A morphism $(M, p, X) \to (N, q, W)$ of $\bim$s is a pair $(g,
  \varphi)$, where $g: M \to N$ is a monoid homomorphism, $\varphi: X
  \to W$ is a continuous function, and the following diagram commutes:
  \begin{center}
    \begin{tikzpicture}[->,node distance=3.4cm, scale=1]

      \node(M1) at (-3.5,1.2) {$M$}; \node(X) at (-1.5,0) {$X$};
      \node(M2) at (-3.5,0) {$N$}; \node(Y) at (-1.5,1.2) {$W$};

      \draw[->] (M1) to node[left, ArrowNode] {$g$} (M2); \draw[->]
      (Y) to node[right, ArrowNode] {$\varphi$} (X); \draw[>->] (M1)
      to node[above, ArrowNode] {$p$} (Y); \draw[>->] (M2) to
      node[below, ArrowNode] {$q$} (X);
    \end{tikzpicture}
  \end{center}
  We say that $(g, \varphi)$ is a quotient provided $g$ is surjective
  (and consequently, so is~$\varphi$).
\end{definition}

\begin{example}
  Let $Y$ be a profinite alphabet. Since $\cl(Y^*)$ is a Boolean
  algebra closed under quotients, the \v Cech-Stone compactification
  of~$Y^*$, defines a \bim{} $(Y^*, \, e,\, \beta(Y^*))$
  (cf. Theorem~\ref{t:4}).
\end{example}

Given a profinite alphabet $Y$, we say that a language $L \subseteq
\cl(Y^*)$ is \emph{recognized} by the \bim{} $(M, p, X)$ if there
exists a homomorphism $\mu: Y^* \to M$ such that $p \circ \mu$ is
continuous, and a clopen subset $C \subseteq X$ satisfying $L = (p
\circ \mu)^{-1}(C)$. Notice that, since $X$ is a Boolean space, $p
\circ \mu$ may be uniquely extended to a continuous function $\varphi:
\beta(Y^*) \to X$, and so, if $e: Y^* \rightarrowtail \beta(Y^*)$
denotes the \v Cech-Stone compactification of~$Y^*$, then the
homomorphisms $\mu: Y^* \to M$ for which $p \circ \mu$ is continuous
are in a bijective correspondence with the \bim{} morphisms $(Y^*,\,
e,\, \beta(Y^*)) \to (M, p, X)$. Moreover, the dual of the
function~$\varphi$ is the homomorphism of Boolean algebras $(p \circ
\mu)^{-1}: \cl(X) \to \cl(\beta(Y^*)) = \cl(Y^*)$ whose image consists
of the languages recognized by~$(M, p, X)$ via~$\mu$. Finally,
since~$p$ is a morphism of sets with $M$-biactions and the biaction
of~$M$ on~$X$ has continuous components, the Boolean algebra of
languages recognized by~$(M, p, X)$ via~$\mu$ is closed under
quotients. Indeed, if $L = (p \circ \mu)^{-1}(C)$ for some clopen
subset $C \subseteq X$, and $u, v, w \in Y^*$, then we have
\begin{align*}
  w \in u^{-1}Lv^{-1}
  & \iff p\circ \mu(uwv) \in C \iff p (\mu(u)\, \mu(w)\,\mu(v)) \in C
  \\ & \iff \lambda_{\mu(u)}\circ \rho_{\mu(v)} \circ p \circ \mu(w)
       \in C
  \\ & \iff w \in (p \circ \mu)^{-1}((\lambda_{\mu(u)}\circ
       \rho_{\mu(v)})^{-1}(C)).
\end{align*}
Thus, $u^{-1}Lv^{-1} = (p \circ \mu)^{-1}((\lambda_{\mu(u)}\circ
\rho_{\mu(v)})^{-1}(C))$ is recognized by the clopen
$(\lambda_{\mu(u)}\circ \rho_{\mu(v)})^{-1}(C) \subseteq X$.

In order to handle lp-varieties (that is, lp-strains closed under
quotients), it is useful to refine the notion of \bim{}. In the
regular setting, this corresponds to the notion of
\emph{stamp}~\cite{PinStraubing05}. The idea is to consider \bim{}s
with a chosen profinite set of generators for the monoid component,
and constrain the set of languages we recognize accordingly.
Formally, a \emph{\bim{}-stamp} (called \emph{\bim{} presentation}
in~\cite{BorlidoCzarnetzkiGehrkeKrebs17}) is a tuple $\tR = (Y,\mu,M,
p, X)$, where $\mu: Y^* \twoheadrightarrow M$ is a monoid quotient,
$(M, p, X)$ is a \bim{}, and $p \circ \mu$ is a continuous
function. By the observations made in the preceding paragraph,
\bim{}-stamps $(Y,\mu,M, p, X)$ may also be understood as \bim{}
quotients $(Y^*, e, \beta (Y^*)) \twoheadrightarrow (M, p, X)$.
We remark that each Boolean algebra closed under quotients $\cB
\subseteq \cl(Y^*)$ naturally defines a \bim-stamp
\[Y^* \twoheadrightarrow M = \pi[Y^*] \rightarrowtail X_\cB,\]
as described in the beginning of this section. This is called the
\emph{syntactic \bim-stamp of~$\cB$} and it is denoted by $\synt\cB =
(Y, \mu_\cB, M_\cB, p_\cB, X_\cB)$.

\bim{}-stamps encode two types of behavior: algebraic behavior given
by the  triple $(Y, \mu, M)$ and topological behavior given by the
continuous function $p \circ \mu: Y^* \to X$.
The interplay between these two plays a central role in this paper.
We say that a language $L$ is \emph{recognized} by the \bim{}-stamp
$\tR = (Y, \mu, M, p, X)$ provided $L$ is recognized by the \bim{}
$(M, p, X)$ via $\mu$.  As observed above, the set of all languages
recognized by a \bim{}-stamp forms a Boolean algebra closed under
quotients.
Similarly to what happens in the regular setting, the syntactic
\bim-stamp of a given Boolean algebra closed under quotients is the
``optimal'' \bim{}-stamp recognizing that Boolean algebra, in the
following sense:
\begin{proposition}\label{p:5}
  A \bim{}-stamp $\tR = (Y, \mu, M, p, X)$ recognizes a Boolean
  algebra closed under quotients~$\cB$ if and only if the syntactic
  \bim{}-stamp of~$\cB$ factors through~$\tR$, that is, there is a
  commutative diagram:
  \begin{center}
    \begin{tikzpicture}[->,scale = 1.2]
      \node(A) at (0,0) {$Y^*$}; \node(M) at (1.7,0) {$M$}; \node(N)
      at (1.7,-1.2) {$M_\cB$}; \node(X) at (3.4,0) {$X$}; \node(Y) at
      (3.4,-1.2) {$X_\cB$};
      \draw[->>] (A) to node[left, yshift = -2mm, ArrowNode]
      {$\mu_\cB$} (N); \draw[dashed](M) to node[right, ArrowNode]
      {$g$} (N); \draw[dashed] (X) to node[right, ArrowNode]
      {$\varphi$} (Y); \draw[->>] (A) to node[above, ArrowNode]
      {$\mu$} (M); \draw[>->] (M) to node[above, ArrowNode] {$p$} (X);
      \draw[>->] (N) to node[below, ArrowNode] {$p_\cB$} (Y);
    \end{tikzpicture}
  \end{center}
  where $g$ is a homomorphism and~$\varphi$ a continuous function.
\end{proposition}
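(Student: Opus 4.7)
The plan is to establish the two implications separately. For the \emph{if} direction, assume a factorization as in the diagram. Given $L\in\cB$, since $L$ is recognized by $\synt\cB$ there is a clopen $C\subseteq X_\cB$ with $L=(p_\cB\circ\mu_\cB)^{-1}(C)$. Using $\mu_\cB=g\circ\mu$ and $p_\cB\circ g=\varphi\circ p$, one rewrites $L=(p\circ\mu)^{-1}(\varphi^{-1}(C))$, and $\varphi^{-1}(C)$ is clopen by continuity of $\varphi$. Hence $\tR$ recognizes $L$.

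For the \emph{only if} direction, I first construct $g$ algebraically. Since $\tR$ recognizes $\cB$, each $L\in\cB$ has the form $(p\circ\mu)^{-1}(C_L)$ for some clopen $C_L\subseteq X$. If $\mu(u)=\mu(v)$, then for every $x,y\in Y^*$ and every $L\in\cB$, $xuy\in L$ iff $p(\mu(x)\mu(u)\mu(y))\in C_L$ iff $p(\mu(x)\mu(v)\mu(y))\in C_L$ iff $xvy\in L$, so $u\sim_\cB v$, i.e.\ $\mu_\cB(u)=\mu_\cB(v)$. The congruence induced by $\mu$ is therefore contained in the syntactic congruence of $\cB$, so setting $g(\mu(u)):=\mu_\cB(u)$ yields a well-defined monoid homomorphism $g\colon M\to M_\cB$ satisfying $g\circ\mu=\mu_\cB$.

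Next I construct $\varphi$ via Stone duality. Because $\mu$ is surjective and $p$ has dense image, the continuous map $p\circ\mu\colon Y^*\to X$ has dense image; consequently the Boolean-algebra homomorphism $(p\circ\mu)^{-1}\colon\cl(X)\to\cl(Y^*)$ is injective (two clopens of $X$ that agree on a dense set are equal). Since $\tR$ recognizes $\cB$, we have $\cB\subseteq(p\circ\mu)^{-1}(\cl(X))$, so inverting on the image yields an injective Boolean-algebra homomorphism $\cB\hookrightarrow\cl(X)$. Identifying $\cl(X_\cB)$ with $\cB$ via Stone duality and dualizing, one obtains a continuous map $\varphi\colon X\to X_\cB$. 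By construction, the composite $\cl(X_\cB)\cong\cB\hookrightarrow\cl(X)\xrightarrow{(p\circ\mu)^{-1}}\cl(Y^*)$ equals the inclusion $\cB\hookrightarrow\cl(Y^*)$, whose dual is exactly $\pi=p_\cB\circ\mu_\cB$; dualizing gives $\varphi\circ p\circ\mu=p_\cB\circ\mu_\cB$. Finally, to verify $\varphi\circ p=p_\cB\circ g$, pick any $m\in M$ and write $m=\mu(u)$: then $\varphi(p(m))=\varphi(p\circ\mu(u))=p_\cB(\mu_\cB(u))=p_\cB(g(m))$.

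The main delicate point is the construction of $\varphi$, which hinges on the injectivity of $(p\circ\mu)^{-1}$ provided by density; without it, there would be no canonical clopen assigned to each $L\in\cB$, and the Stone-dual map could not be defined. The algebraic step producing $g$ is routine once one observes that $\tR$ recognizing $\cB$ forces $\ker(\mu)\subseteq{\sim_\cB}$, and the two commutation identities then follow directly from the defining universal property of $\mu_\cB$ and Stone duality.
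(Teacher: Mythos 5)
Your proposal is correct and follows essentially the same route as the paper: the easy direction by pulling back clopens along $\varphi$, and the converse by obtaining $g$ from the containment $\ker(\mu)\subseteq{\sim_\cB}$ and $\varphi$ as the Stone dual of the embedding $\cB\rightarrowtail\cl(X)$ furnished by the density of the image of $p\circ\mu$, with $\varphi\circ p=p_\cB\circ g$ then following from surjectivity of $\mu$. Your write-up merely makes explicit the injectivity of $(p\circ\mu)^{-1}$ that the paper leaves implicit.
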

\begin{proof}
  It is clear that if $\synt\cB$ factors through $\tR$, then every
  language recognized by $\synt\cB$ is also recognized by~$\tR$ and,
  in particular, $\cB$ is recognized by $\tR$.  Conversely, suppose
  that~$\cB$ is recognized by~$\tR$.  Then, the kernel of $\mu$ is
  contained in the syntactic congruence ${\sim_\cB}$, and therefore,
  the syntactic morphism $\mu_\cB$ factors through $\mu$, say $g \circ
  \mu = \mu_\cB$. On the other hand, since $p \circ \mu$ has dense
  image, there are embeddings $\cB \rightarrowtail \cl(X)$ and $\cl(X)
  \rightarrowtail \cl(Y^*)$, or dually, continuous quotients $\pi:
  \beta(Y^*) \twoheadrightarrow X$ and $\varphi: X \twoheadrightarrow
  X_\cB$, where the restriction of $\pi$ to $Y^*$ is $p \circ \mu$,
  and $p_\cB = \varphi \circ p$. Finally, since $\mu$ is surjective
  and $\mu_\cB = g \circ \mu$, it follows that $\varphi \circ p =
  p_\cB \circ g$ as intended.
\end{proof}

A morphism between \bim{}-stamps $\tR = (Y, \mu, M, p, X)$ and $\tS =
(Z, \nu, N, q, W)$ is a triple $\Phi = (h, g, \varphi)$, where $h: Y^*
\to Z^*$ is a continuous homomorphism and $(g, \varphi)$ is a morphism
between the corresponding \bim{} components of $\tR$ and $\tS$, so
that the following diagram commutes:
\begin{center}
  \begin{tikzpicture}[->, scale=1.2]
    \node(A) at (0,0) {$Y^*$}; \node(M) at (1.7,0) {$M$}; \node(B) at
    (0,-1.2) {$Z^*$}; \node(N) at (1.7,-1.2) {$N$}; \node(X) at
    (3.4,0) {$X$}; \node(Y) at (3.4,-1.2) {$W$};

    \draw[->] (A) to node[left, ArrowNode] {$h$} (B);
    \draw[->] (M) to node[right, ArrowNode] {$g$} (N);
    \draw[->] (X) to node[right, ArrowNode] {$\varphi$} (Y);
    \draw[->>] (A) to node[above, ArrowNode] {$\mu$} (M);
    \draw[->>] (B) to node[below, ArrowNode] {$\nu$} (N);
    \draw[>->] (M) to node[above, ArrowNode] {$p$} (X);
    \draw[>->] (N) to node[below, ArrowNode] {$q$} (Y);
  \end{tikzpicture}
\end{center}
When $h$ is an lp-morphism, we say that $\Phi$ is an \emph{lp-morphism
  of \bim{}-stamps}.  Note that when $h$ is onto, then so are $g$ and
$\varphi$. If moreover~$h$ is length-preserving, then we will
call~$\Phi$ an \emph{lp-quotient}.
It is worth mentioning that every lp-quotient $\Phi = (h, g, \varphi):
\tR \twoheadrightarrow \tS$ induces the following commutative diagram
of continuous functions:
\begin{center}
  \begin{tikzpicture}[node distance = 15mm, ->>]
    \node (A) {$Y^*$}; \node[right of = A, xshift = 25mm] (B) {$X$};
    \node[below of = A] (C) {$Z^*$}; \node[below of = B] (D) {$W$};
    \draw (A) to node[left, ArrowNode] {$h$} (C); \draw (B) to
    node[right, ArrowNode] {$\varphi$} (D); \draw[->] (A) to
    node[above, ArrowNode] {$p \circ \mu$} (B); \draw[->] (C) to
    node[below, ArrowNode] {$q \circ \nu$} (D);
  \end{tikzpicture}
\end{center}
Thus, taking preimages yields the following commutative diagram of
homomorphisms of Boolean algebras:
\begin{center}
  \begin{tikzpicture}[node distance = 15mm, >->]
    \node (A) {$\cl(X)$}; \node[right of = A, xshift = 25mm] (B)
    {$\cl(Y^*)$}; \node[below of = A] (C) {$\cl(W)$}; \node[below of =
    B] (D) {$\cl(Z^*)$};
    \draw (C) to node[left, ArrowNode] {$\varphi^{-1}$} (A); \draw (D) to
    node[right, ArrowNode] {$h^{-1}$} (B); \draw[->] (A) to
    node[above, ArrowNode] {$(p \circ \mu)^{-1}$} (B); \draw[->] (C)
    to node[below, ArrowNode] {$(q \circ \nu)^{-1}$} (D);
  \end{tikzpicture}
\end{center}
Therefore, $\tS$ being an lp-quotient of $\tR$ means that every
language recognized by~$\tS$ is also recognized by~$\tR$, under the
identification $\cl(Z^*) \subseteq \cl(Y^*)$.

Finally, we show that \bim{}-stamps over profinite alphabets can be
described in terms of certain projective limits of \bim{}-stamps over
finite alphabets. Although a bit technical, the proof mostly uses
standard arguments used in the computation of projective limits in
set-based categories.

\begin{proposition}\label{p:1}
  Projective limits exist in the category of \bim{}-stamps with
  lp-morphisms. Moreover, each \bim{}-stamp is the projective limit of
  its \bim{}-stamp lp-quotients over finite alphabets.
\end{proposition}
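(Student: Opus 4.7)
The plan is to address the two claims in sequence, both by the standard recipe for cofiltered limits in concrete categories: construct the candidate limit componentwise, then verify the universal property.

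For the first claim, fix a cofiltered diagram $\{\tR_i = (Y_i, \mu_i, M_i, p_i, X_i)\}_{i \in I}$ with lp-morphism connecting maps $\Phi_{i,j} = (h_{i,j}, g_{i,j}, \varphi_{i,j})$. I would build the candidate limit $\tR = (Y, \mu, M, p, X)$ componentwise as follows. Take $Y$ to be the projective limit of the $Y_i$'s in Boolean spaces (a profinite alphabet), with projections $h_i\colon Y \to Y_i$. The relations $g_{i,j} \circ \mu_i \circ h_i^* = \mu_j \circ h_j^*$ furnish a monoid homomorphism $\nu\colon Y^* \to \prod_i M_i$ landing in the projective limit of the $M_i$'s, and I define $M := \nu(Y^*)$ with $\mu\colon Y^* \twoheadrightarrow M$ the corresponding surjection. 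Applying the $p_i$'s coordinatewise gives a well-defined injection of $M$ into the projective limit of the $X_i$'s (computed in Boolean spaces), and I take $X$ to be the closure of $p(M)$ in that limit. Then I would verify routinely that $\tR$ is a \bim{}-stamp: injectivity of $p$ is inherited from each $p_i$; density of $p(M)$ in $X$ holds by construction; the coordinatewise biaction of the $M_i$'s on the $X_i$'s restricts to a continuous biaction of $M$ on $X$, using that each $\lambda_m$ and $\rho_m$ preserves $p(M)$; and continuity of $p \circ \mu$ follows by composing with each projection to recover the continuous $p_i \circ \mu_i \circ h_i^*$. The universal property then follows: a compatible cone from a \bim{}-stamp $\tS = (Z, \nu_\tS, N, q, W)$ induces, by the universal properties of $Y$ and of the Boolean-space limit of the $X_i$'s, maps $h\colon Z \to Y$ and $\varphi\colon W \to X$ (landing in $X$ because $q(N)$ is dense in $W$ and is sent into $p(M)$), and the corresponding homomorphism $N \to \prod_i M_i$ factors through $M$ by commutativity with $\nu_\tS$.

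For the second claim, let $\tR = (Y, \mu, M, p, X)$ be a \bim{}-stamp, and let $\cD$ denote the class of lp-quotients $\tR \twoheadrightarrow \tR'$ with $\tR'$ over a finite alphabet, ordered by factorization. I would proceed in three steps. First, show that $\cD$ is cofiltered: any two lp-quotients with finite alphabets $A_1, A_2$ admit a common refinement using a finite continuous quotient $h\colon Y \twoheadrightarrow A$ refining both partitions, together with a pushout-style gluing of the monoid and Boolean-space components. Second, show that $\cD$ contains enough lp-quotients to recover $Y$: for each finite continuous quotient $h\colon Y \twoheadrightarrow A$, construct an lp-quotient of $\tR$ with alphabet $A$ by taking $M_h := M/{\sim_h}$, where $\sim_h$ is the smallest monoid congruence on $M$ identifying $\mu(u)$ and $\mu(v)$ whenever $h^*(u) = h^*(v)$, and defining $X_h$ as the Boolean-space quotient of $X$ induced by this congruence, so that the biaction of $M_h$ descends and $M_h \hookrightarrow X_h$ has dense image. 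Third, apply Part 1 to form $\tR^\infty := \lim \cD$, obtain the canonical lp-morphism $\tR \to \tR^\infty$ from the universal property of the limit, and show it is an isomorphism componentwise: the alphabet component is an iso because $Y$ is profinite and (by step two) every finite continuous quotient of $Y$ arises as an alphabet of some $\tR' \in \cD$; the monoid component is an iso because distinct elements of $M$ are separated by some lp-quotient in $\cD$; and the Boolean-space component is an iso because $\cl(X)$ is the directed union of the subalgebras pulled back along the $\varphi_{\tR'}$'s for $\tR' \in \cD$.

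The main obstacle lies in Part 2, step two: coherently quotienting the monoid, the biaction, and the Boolean space for a prescribed finite alphabet, so that the resulting tuple is indeed a \bim{}-stamp (injectivity, density, and continuity of the biaction all preserved) and so that the resulting collection is cofinal enough to reconstruct $\tR$ on all three components. Once this is in place, the remainder amounts to standard universal-property diagram chasing with cofiltered limits in each set-based component.
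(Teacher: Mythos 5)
Your Part~1 reproduces the paper's construction essentially verbatim: form the componentwise projective limits $Y=\lim Y_i$, $M_0=\lim M_i$, $X_0=\lim X_i$, set $M:=\mu_0[Y^*]$ and $X:=\overline{p_0[M]}$, define the biaction coordinatewise and check it preserves $p[M]$ (hence $X$), and inherit the universal property from the three componentwise limits. That half is correct and is the same route as the paper's.

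The problem is Part~2, and it is precisely the step you yourself label ``the main obstacle'' and then do not carry out. Your step three rests on two unproved assertions: that the finite-alphabet lp-quotients jointly separate the points of $M$, and that $\cl(X)$ is the directed union of the subalgebras $\varphi_{\tR'}^{-1}(\cl(W'))$. Neither is automatic, and the second is in tension with the paper's own material. Indeed, if $(h^*,g,\varphi)\colon \tR\twoheadrightarrow (A,\nu,N,q,W)$ is an lp-quotient over a finite alphabet, then for any clopen $C\subseteq W$ one has $(p\circ\mu)^{-1}(\varphi^{-1}(C))=(h^*)^{-1}\bigl((q\circ\nu)^{-1}(C)\bigr)$, so every clopen of $X$ arising from such a quotient corresponds, inside $\cl(Y^*)$, to a language of the form $(h^*)^{-1}(K)$ for some finite continuous quotient $h\colon Y\twoheadrightarrow A$ and some $K\subseteq A^*$. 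The Remark following Lemma~\ref{l:4} exhibits clopen languages of $Y^*$ that are \emph{not} of this form for any finite continuous quotient; consequently, for a \bim{}-stamp whose recognized Boolean algebra contains such a language (e.g.\ the syntactic \bim{}-stamp of a suitable $\cB\subseteq\cl(Y^*)$ closed under quotients), the space component cannot be recovered from finite-alphabet lp-quotients by the argument you sketch, and a similar obstruction threatens the recovery of $M$. So a complete proof must either supply an argument that the \bim{}-stamps in question only recognize languages pulled back from finite continuous quotients of $Y$, or otherwise justify the two assertions above; your proposal identifies where the work lies but does not do it. (In fairness, the paper itself disposes of this second claim in a single sentence, so you are not being held to a higher standard than the authors; but as a blind proof the essential step is missing.)
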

\begin{proof}
  We give the general idea of the proof. All the missing details are
  routine computations. Let $\cF = \{\tR_i = (Y_i, \mu_i, M_i, p_i,
  X_i)\}_{i \in I}$ be a projective system in the category of
  \bim{}-stamps with lp-morphisms. For $i \ge j$, we denote by
  $\Phi_{i,j} = (h_{i,j}^*, g_{i,j}, \varphi_{i,j}): \tR_i \to \tR_j$
  the corresponding connecting morphism, with $h_{i,j}: Y_i \to Y_j$ a
  continuous function.  Then, each of the families $\{Y_i\}_{i \in
    I}$, $\{M_i\}_{i \in I}$, and $\{X_i\}_{i \in I}$ forms itself a
  projective system, with the maps $h_{i,j}$, $g_{i,j}$ and $\varphi_{i,j}$ 
  as connecting morphisms, respectively. We set
  \begin{equation}
    Y = \lim\limits_{\longleftarrow}\ \{Y_i \mid i \in I\}, \qquad M_0
    = \lim\limits_{\longleftarrow}\ \{M_i \mid i \in I\}, \qquad
    \text{and}\qquad X_0 = \lim\limits_{\longleftarrow}\ \{X_i \mid i
    \in I\},\label{eq:14}
  \end{equation}  
  and for each $i \in I$, we denote by $h_i: Y \twoheadrightarrow
  Y_i$, $\zeta_i: M_0 \twoheadrightarrow M_i$, and $\pi_i: X_0
  \twoheadrightarrow X_i$ the corresponding projections.
  
  Using the explicit description of projective limits displayed
  in~\eqref{eq:4}, we may check that there are well-defined maps
  \begin{equation}
    \mu_0: Y^* \to M_0, \ z \mapsto (\mu_i \circ h_i^*(z))_{i \in
      I}\qquad\text{and} \qquad p_0: M_0 \rightarrowtail X_0, \ m \mapsto
    (p_i\circ \zeta_i(m))_{i \in I}.\label{eq:8}
  \end{equation}
  Graphically, we have the following commutative diagram:
  \begin{center}
    \begin{tikzpicture}[node distance = 20mm]
      \node (Y) at (0,0) {$Y^*$}; \node[right of = Y](M) {$M_0$};
      \node[right of = M](X) {$X_0$}; \node[below of = Y, yshift = 5mm](Yi){$Y^*_i$};
      \node[right of = Yi](Mi){$M_i$}; \node[right of = Mi](Xi)
      {$X_i$};
      \draw[->] (Y) to node[above, ArrowNode] {$\mu_0$} (M);
      \draw[>->] (M) to node[above, ArrowNode] {$p_0$} (X); \draw[->>]
      (Y) to node[left, ArrowNode] {$h^*_i$} (Yi); \draw[->>] (M) to
      node[right, ArrowNode] {$\zeta_i$} (Mi); \draw[->>] (X) to
      node[right, ArrowNode] {$\pi_i$} (Xi); \draw[->>] (Yi) to
      node[below, ArrowNode] {$\mu_i$} (Mi); \draw[>->] (Mi) to
      node[below, ArrowNode] {$p_i$} (Xi);
    \end{tikzpicture}
  \end{center}
  Moreover, by continuity of $p_i\circ \mu_i \circ h_i^*$, for every
  $i \in I$, the map $p_0 \circ \mu_0$ is also continuous.
  We set
  \[M = \mu_0 [Y^*] \qquad \text{and} \qquad X = \overline{p_0[M]}.\]
  Co-restricting $\mu_0$ to $M$, we have an onto homomorphism $\mu:
  Y^* \twoheadrightarrow M$, and the restriction and co-restriction of
  $p_0$ to $M$ and to $X$, respectively, yields a map $p:M
  \rightarrowtail X$ with dense image. We claim that $\tR = (Y, \mu,
  M, p, X)$ is the projective limit of $\cF$. The fact that $\tR$
  satisfies the universal property of projective limits is inherited
  from the universal properties satisfied by $Y$, $M_0$ and $X_0$.

  Thus, it remains to check that $M$ continuously bi-acts on~$X$.

  We denote by $\lambda_{i,m_i}: X_i \to X_i$ the (continuous)
  component at $m_i \in M_i$ of the left action of~$M_i$
  on~$X_i$. Then, for $m = (m_i)_{i \in I} \in M$, setting
  \[\lambda_m: X \to X, \qquad x = (x_i)_{i \in I} \mapsto
    (\lambda_{i,m_i}(x_i))_{i \in I}\]
  defines a continuous map which induces a left action of $M$ on
  $X$. Indeed, for every $m, m' \in M$, we may compute $\lambda_m
  \circ p(m') = p (mm')$, using the analogous property for each
  $\lambda_{i,m_i}$. This proves not only that $\lambda_m$ is
  well-defined (because $\lambda_m[X] =
  \lambda_m\left[\overline{p[M]}\right] \subseteq \overline{p[M]} =
  X$), but also that $p$ is a morphism of sets with a left $M$-action.
  Similarly, we can define the right action of~$M$ on~$X$. The
  compatibility between the left and right actions is inherited from
  the compatibility between the left and right actions for each
  $\tR_i$. Thus, $\tR$ is a \bim{}-stamp.

  Finally, that each \bim-stamp is the projective limit of all its
  lp-quotients over finite alphabets is an easy consequence of the
  explicit computation of projective limits just made.
\end{proof}

We remark that, even though each of the maps $\mu_i \circ h_i^*$ is
onto, since $Y^*$ is not compact, the map $\mu_0$ defined
in~\eqref{eq:8} is not onto in general (cf.~\cite[Lemma
1.2]{AlmeidaWeil1995}). To see this, consider for instance, for each
integer $n$, the \bim{}-stamp $(\{0\}^* \twoheadrightarrow \mathbb Z_n
\xrightarrow {id} \mathbb Z_n)$. Then, we have a projective limit
system $\cF = \{(\{0\}^* \twoheadrightarrow \mathbb Z_n \xrightarrow
{id} \mathbb Z_n)\}_{n \in \N}$ and $M_0$ (which equals $X_0$) defined
in~\eqref{eq:14} is the additive profinite group~$\widehat \Z$ (seen
as a monoid). The map $\mu_0$ is then the inclusion $\{0\}^*
\rightarrowtail \widehat \Z$, which is not surjective. Finally, the
projective limit of~$\cF$ is the \bim{}-stamp $(\{0\}^*
\twoheadrightarrow \N \rightarrowtail \widehat \N)$, where $\widehat
\N$ denotes the closure of~$\N$ in~$\widehat \Z$.

\subsection{Languages of marked words and quotienting
  operations}\label{sec:3}

Recall that the set $A^* \otimes\N$ of marked words may be seen as a
regular language in $(A \times 2)^*$. The fact that the syntactic
morphism of this language is $\mu: (A\times 2)^*\to\{e,m,z\}$ given by
$(a,0)\mapsto e$ and $(a,1)\mapsto m$ corresponds to saying that
\[
  (A\times 2)^*\cong A^* \,\uplus\, (A^* \otimes\N)\,\uplus\, A_z,
\]
as defined in Section~\ref{ss:markedwords}, and that concatenation on
$(A \times 2)^*$ decomposes as follows: it yields biactions of $A^*$
on each of these components, thus accounting for all pairs involving
an element of~$A^*$, and any other pair is sent to $A_z$.

Dually, the above disjoint union decomposition of $(A\times 2)^*$ yields 
the following Cartesian product decomposition
\[
\cP((A\times 2)^*)\cong \cP(A^*) \,\times\, \cP(A^* \otimes\N)\,\times\, \cP(A_z),
\]
We are interested in Boolean subalgebras $\cD$ of $\cP((A \times
2)^*)$.  Dual to the inclusions of the disjoint components, we get
projections of $\cD$ onto
\[
  \cD_0=\{L\cap A^*\mid L\in \cD\}, \quad \cD_1=\{L\cap (A^* \otimes\N)\mid
  L\in \cD\},\quad \text{and}\quad\cD_z=\{L\cap A_z\mid L\in \cD\},
\]
respectively. While $\cD$ is always contained in
$\cD_0\times\cD_1\times\cD_z$ it is not difficult to see that we have
equality if and only if $\cD$ contains the language $A^*\otimes \N$
and its orbit under the biaction of $(A\times 2)^*$.

We will be particularly interested in such algebras for which $\cD_z=2$. In this
case, any component of the dual biaction by quotients on $\cP((A\times 2)^*)$ 
with domain $\cP(A_z)$ just sends the bounds to bounds of the appropriate 
component. The remaining components are as follows. First, we have the 
biactions of~$A^*$ on~$\cP(A^*)$ and on~$\cP(A^* \otimes \N)$ given, for 
$u\in A^*$, respectively, by
\begin{gather}
  \begin{aligned}
    \ell^0_u: \cP(A^*) &\to \cP(A^*) \qquad\qquad
    & r^0_u: \cP(A^*)  &\to \cP(A^*), \\
     L \ \ &\mapsto u^{-1}L & L\ \  &\mapsto Lu^{-1}
  \end{aligned}\label{eq:25}
\end{gather}
and by
\begin{gather}
  \begin{aligned}
    \ell^1_u: \cP(A^* \otimes \N) &\to \cP(A^* \otimes \N)
    \qquad\qquad
    & r^1_u: \cP(A^* \otimes \N)  &\to \cP(A^* \otimes \N). \\
    L\ \  &\mapsto u^{-1}L & L\ \  &\mapsto Lu^{-1}
  \end{aligned}\label{eq:13}
\end{gather}
For every marked word $(w,
i)$, the biaction of~$A^*$ on~$A^* \otimes \N$ also induces two
functions $A^* \to A^* \otimes \N$, which are given by left and by
right multiplication by~$(w,i)$. Dually, these define
\begin{gather}
  \label{eq:7}
  \begin{aligned}
    \ell^1_{(w,i)}:\cP(A^* \otimes \N) & \to \cP(A^*)\qquad\qquad&
    r^1_{(w,i)}: \cP(A^* \otimes \N) & \to \cP(A^*).\\
    L\ \ &\mapsto (w,i)^{-1}L & L\ \ &\mapsto L(w,i)^{-1}
  \end{aligned}
\end{gather}
\begin{lemma}\label{l:12}
  Let $\cD \subseteq \cP((A \times 2)^*)$ be a Boolean subalgebra such
  that $\cD_z = 2$. Then, the following are equivalent:
  \begin{enumerate}
  \item\label{item:1} $\cD$ is closed under quotients and $(A^*
    \otimes \N) \in \cD$ or $A_z\in\cD$;
  \item\label{item:4} $\cD\cong\cD_0\times\cD_1\times 2$, $\cD_0$ and
    $\cD_1$ are closed under the respective biactions of $A^*$, and
    for all $L\in \cD_1$ and all $(w,i) \in A^* \otimes\N$ we have
    $(w,i)^{-1}L,\ L(w,i)^{-1}\in\cD_0$.
  \end{enumerate}
\end{lemma}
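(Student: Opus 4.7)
The proof hinges on the fact that the three sets $A^*$, $A^*\otimes\N$, and $A_z$ partition $(A\times 2)^*$ by the number of letters with second coordinate $1$ (namely $0$, exactly $1$, or at least $2$), and that this counting is additive under concatenation. Consequently, for any $L$ lying in a single component and any $v \in (A\times 2)^*$, a short calculation determines in which component $v^{-1}L$ lies, according to which of the three pieces $v$ itself belongs. The same holds for right quotients.

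For the implication $(a)\Rightarrow(b)$, I would first show that under closure under quotients the two disjuncts ``$A^*\otimes\N\in\cD$'' and ``$A_z\in\cD$'' are equivalent, so both hold. Indeed, $(a,1)^{-1}(A^*\otimes\N)=A^*$, so $A^*\otimes\N\in\cD$ forces $A^*\in\cD$ and hence $A_z=(A\times 2)^*\setminus(A^*\cup A^*\otimes\N)\in\cD$; conversely $(a,1)^{-1}A_z=(A^*\otimes\N)\cup A_z$, which together with $A_z\in\cD$ yields $A^*\otimes\N\in\cD$ by Boolean difference. A case analysis on the component containing the quotienting word shows that the full two-sided orbit of $A^*\otimes\N$ is contained in $\{A^*\otimes\N,A^*,\emptyset\}$, all of which lie in $\cD$. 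By the remark preceding the lemma this yields the product decomposition $\cD\cong\cD_0\times\cD_1\times 2$. Closure of $\cD_0$ and $\cD_1$ under the respective biactions of $A^*$ then follows from the closure of $\cD$ under quotients via the identities $u^{-1}\widetilde L\cap A^*=u^{-1}(\widetilde L\cap A^*)$ and $u^{-1}\widetilde L\cap(A^*\otimes\N)=u^{-1}(\widetilde L\cap(A^*\otimes\N))$ for $u\in A^*$, which rely on the invariance of $A^*$ and $A^*\otimes\N$ under quotient by a word of $A^*$. The cross condition that $(w,i)^{-1}L\in\cD_0$ for $L\in\cD_1$ is handled in the same spirit, using $(w,i)^{-1}(A^*\otimes\N)=A^*$.

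For $(b)\Rightarrow(a)$, the product decomposition directly gives $A^*\otimes\N\in\cD$, since $(A\times 2)^*\in\cD$ implies $A^*\otimes\N\in\cD_1$ and the top of each factor glues into $\cD$. To verify closure under quotients, decompose $L=L_0\cup L_1\cup L_z$ with $L_0\in\cD_0$, $L_1\in\cD_1$, $L_z\in\{\emptyset,A_z\}$, and split on the component to which $v\in(A\times 2)^*$ belongs. When $v\in A^*$, each $v^{-1}L_i$ stays in its own component by the closure assumptions on $\cD_0$ and $\cD_1$. When $v\in A^*\otimes\N$, we have $v^{-1}L_0=\emptyset$, the set $v^{-1}L_1\subseteq A^*$ lies in $\cD_0$ by the hypothesized cross condition, and $v^{-1}A_z=(A^*\otimes\N)\cup A_z$ belongs to $\cD$ because both $A^*\otimes\N$ and $A_z$ do. When $v\in A_z$, the quotient $v^{-1}L$ is either $\emptyset$ or $(A\times 2)^*$. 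Right quotients are handled symmetrically.

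The main obstacle is the bookkeeping of these cross-terms, particularly the case $v\in A^*\otimes\N$ acting on $L_z=A_z$, which produces the composite set $(A^*\otimes\N)\cup A_z$ rather than a single component. This is precisely why the hypothesis in $(a)$ demands that $A^*\otimes\N$ or $A_z$ lie in $\cD$ from the start, and it is the step where one must carefully combine the product decomposition with the equivalence of the two disjuncts derived in the first direction.
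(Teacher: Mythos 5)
Your proof is correct and follows essentially the same route as the paper's: derive $A^*\otimes\N\in\cD$ (and $A_z\in\cD$) from either disjunct via the quotient $(a,1)^{-1}A_z=(A^*\otimes\N)\cup A_z$, invoke the remark preceding the lemma to get the product decomposition from the orbit of $A^*\otimes\N$, read off the component-wise closure conditions from the splitting of the biaction, and conversely extract $A^*\otimes\N\in\cD$ from the product decomposition and check closure under quotients by a case analysis on components. The only cosmetic difference is that the paper reduces to quotients by single letters where you analyse quotients by arbitrary words according to which of the three components they lie in; both work.
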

\begin{proof}
  Suppose we have~\ref{item:1}. If $A_z \in \cD$ then, as $\cD$ is
  closed under quotients,
  \[
  (a,1)^{-1}A_z=(A^* \otimes \N)\cup A_z\in\cD,
  \]
  and thus $A^*\otimes \N\in\cD$. Since~$\cD$ contains the language
  $A^*\otimes \N$ and is closed under quotients, it also contains its
  orbit under the biaction of $(A\times 2)^*$. Thus, we have that
  $\cD$ is isomorphic to~$\cD_0\times\cD_1\times \cD_z$ which, by
  hypothesis, is $\cD_0\times\cD_1\times 2$. The rest of the assertion
  in~\ref{item:4} follows from~$\cD$ being closed under quotients and
  the splitting of the biaction of $(A\times 2)^*$ on~$\cP((A\times
  2)^*)$ as given by~\eqref{eq:25}, \eqref{eq:13}, and~\eqref{eq:7}.

  Conversely, let us assume that~\ref{item:4} holds. Notice that
  having $\cD\cong\cD_0\times\cD_1\times \cD_z$ amounts to having
  that, for every $L_1, L_2, L_3 \in \cD$, there exists some $L \in
  \cD$ such that
  \[L_1 \cap A^* = L \cap A^*, \qquad L_2 \cap (A^* \otimes \N) = L
    \cap (A^*\otimes \N), \qquad \text{and} \qquad L_3 \cap A_z = L
    \cap A_z.\]
  So, in particular, by taking $L_1 = L_3 = \emptyset$ and $L_2 = (A
  \times 2)^*$, we may conclude that $A^* \otimes \N \in \cD$.  To show 
  that~$\cD$ is closed under quotients, it suffices to consider quotients by 
  letters. Also, since quotient operations are homomorphisms, it suffices to
  consider languages belonging to each component. By hypothesis $\cD_0$
  and $\cD_1$ are closed under quotients by letters $a\in A$ and clearly so 
  is $2$ within $\cP(A_z)$. Also by hypothesis, $(a,1)^{-1}L,\ L(a,1)^{-1}\in
  \cD_0$ for $L\in\cD_1$. Finally, for $L\in\cD_0$ we have
  \[
    (a,1)^{-1} L=L(a,1)^{-1}=\emptyset\in\cD \qquad \text{and} \qquad
    (a,1)^{-1} A_z=A_z(a,1)^{-1}=(A^*\otimes\N)\cup A_z\in\cD.\popQED
  \]
\end{proof}
Notice that in the case where $\cD$ satisfies the equivalent
conditions of Lemma~\ref{l:12}, the Boolean subalgebra with atoms
$A^*$, $A^*\otimes\N$ and $A_z$ is a subalgebra of $\cD$. Also note
that the Boolean subalgebra of $\cP((A \times 2)^*)$ generated
by~$A^*$ is closed under quotients and it is such that $\cD_z = 2$,
but it does not satisfy the equivalent conditions of the the lemma.
\begin{corollary}\label{c:11}
  Let $\cC\subseteq \cP(A^* \cup (A^* \otimes \N))$ be a Boolean
  subalgebra. Then, the Boolean subalgebra~$\cD$ of~$\cP((A \times
  2)^*)$ closed under quotients generated by~$\cC$ is generated, as a
  lattice, by $\cS_0 \cup \cS_1 \cup \{A_z\}$, where
  \begin{multline*}
    \cS_0 = \{u^{-1}Lv^{-1} \mid u, v \in A^*, \ L \in \cC_0\} \\
    \cup\ \{(w,i)^{-1} L u^{-1},\ u^{-1}L(w,i)^{-1} \mid u \in A^*,
    (w,i)\in A^* \otimes \N, \ L \in\cC_1\} \subseteq \cP(A^*)
  \end{multline*}
  and
  \[\cS_1 = \{u^{-1}L v^{-1}\mid u, v \in A^*, \ L \in \cC_1\} \subseteq \cP(A^*
    \otimes \N).\]
\end{corollary}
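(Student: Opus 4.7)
The plan is to combine Lemma~\ref{l:12} with a direct identification of each component in the resulting product decomposition.

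First, I will verify that $\cD$ satisfies the hypotheses of Lemma~\ref{l:12}. Since $A^* \cup (A^* \otimes \N)$ is the top of $\cC$, its complement $A_z$ lies in $\cD$. Moreover, every element of $\cC$ is contained in $A^* \cup (A^* \otimes \N)$, so for any Boolean combination $L$ of elements of $\cC$, one has $L \cap A_z \in \{\emptyset, A_z\}$ (the value is determined by the Boolean formula evaluated at $\emptyset$). Since quotients preserve this property, $\cD_z = 2$. Lemma~\ref{l:12} then yields $\cD \cong \cD_0 \times \cD_1 \times 2$, with $A^*, A^* \otimes \N, A_z \in \cD$.

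Next, I will observe that $\cC_0$ and $\cC_1$ are themselves Boolean subalgebras of $\cP(A^*)$ and $\cP(A^*\otimes\N)$, respectively, being images of $\cC$ under the Boolean algebra homomorphisms $L \mapsto L \cap A^*$ and $L \mapsto L \cap (A^* \otimes \N)$. It follows that $\cS_0$ is closed under complementation in $A^*$ (the complement of $u^{-1}Lv^{-1}$ being $u^{-1}(\neg L)v^{-1}$, where $\neg L$ is taken in the appropriate $\cC_i$) and under left and right quotients by words of $A^*$, via identities like $(u')^{-1}(u^{-1}Lv^{-1}) = (uu')^{-1}Lv^{-1}$. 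The analogous statements hold for $\cS_1$. Consequently, the lattices $\widetilde\cS_0 \subseteq \cP(A^*)$ and $\widetilde\cS_1 \subseteq \cP(A^* \otimes \N)$ generated by $\cS_0$ and $\cS_1$ are already Boolean subalgebras.

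Finally, let $\cE$ be the set of languages of the form $L_0 \cup L_1 \cup L_z$ with $L_0 \in \widetilde\cS_0$, $L_1 \in \widetilde\cS_1$, and $L_z \in \{\emptyset, A_z\}$. Since $A^*$, $A^* \otimes \N$, and $A_z$ partition $(A\times 2)^*$, $\cE$ is a Boolean subalgebra of $\cP((A\times 2)^*)$ isomorphic to $\widetilde\cS_0 \times \widetilde\cS_1 \times 2$, and it coincides with the lattice generated in $\cP((A\times 2)^*)$ by $\cS_0 \cup \cS_1 \cup \{A_z\}$ (using disjointness of the three pieces to rewrite meets and joins componentwise). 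Evidently $\cC \subseteq \cE$, by decomposing each $L \in \cC$ via the partition, and $\cE \subseteq \cD$ since all generators belong to $\cD$. The main obstacle is the reverse inclusion $\cD \subseteq \cE$, which by the minimality of $\cD$ reduces to showing that $\cE$ is closed under quotients. This is a case analysis on whether $w \in (A\times 2)^*$ lies in $A^*$, $A^* \otimes \N$, or $A_z$; the only non-trivial case is $w \in A^* \otimes \N$, where the quotient sends $\widetilde\cS_1$ into $\widetilde\cS_0$ via the identity $w^{-1}(u^{-1}Kv^{-1}) = (uw)^{-1}Kv^{-1}$, which produces precisely a mixed-type element of $\cS_0$ since $uw \in A^* \otimes \N$ and $K \in \cC_1$.
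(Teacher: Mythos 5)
Your proof is correct and follows essentially the same route as the paper's: both verify that $\cD_z = 2$ and that $A_z \in \cD$ so that Lemma~\ref{l:12} applies, and then read off the stated generators from the resulting decomposition $\cD \cong \cD_0 \times \cD_1 \times 2$ together with the splitting of the quotient operations into the maps of~\eqref{eq:25}, \eqref{eq:13} and~\eqref{eq:7}. The paper compresses the final identification into ``the conclusion easily follows,'' whereas you spell it out by exhibiting the candidate algebra $\cE$ and checking quotient-closure directly (the mixed quotients by elements of $A^*\otimes\N$ landing in $\cS_0$ being the one non-trivial case); this is a faithful expansion of the same argument rather than a different method.
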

\begin{proof}
  Since the quotient operations are Boolean homomorphisms,
  ${\cD}$ is generated as a Boolean algebra by the quotients
  of the languages in $\cC$. Also, since, for $a\in A$ and $L\subseteq
  A^* \cup (A^* \otimes \N)$, we have
  \[
  a^{-1}L\subseteq A^* \cup (A^* \otimes \N)\ \text{ and }\ (a,1)^{-1}L\subseteq A^*,
  \]
  it follows that ${\cD}_z=2$. Also, since $\cC$ is a
  Boolean subalgebra of $\cP(A^* \cup (A^*\otimes \N))$, it contains
  $A^* \cup (A^*\otimes \N)$ and thus ${\cD}$ contains $A_z$.  It
  follows that Lemma~\ref{l:12} applies to ${\cD}$ and the conclusion
  of the corollary easily follows.
\end{proof}

Another immediate consequence of the equivalence between \ref{item:1}
and \ref{item:4} in Lemma~\ref{l:12} is the following:

\begin{corollary}\label{c:9}
  Let $\cD \subseteq \cP((A \times 2)^*)$ be a Boolean subalgebra
  closed under quotients that contains $A^* \otimes \N$ and such that
  $\cD_z = 2$. We let $\pi: (A \times 2)^* \to M_\cD$ denote the
  syntactic morphism of~$\cD$ and we set
  \[M = \pi[A^*] \qquad\text{and}\qquad T= \pi[A^* \otimes \N].\]
  Then,
  \begin{enumerate}
  \item\label{item:2} the biaction of~$M_{\cD}$ on $X_{\cD}$ restricts
    and co-restricts to a biaction of $M$ on $X_{\cD_0}$ and to a
    biaction of $M$ on $X_{\cD_1}$;
  \item\label{item:3} for every $t \in T$, the components of the
    biaction of $M_\cD$ on itself at $t$ restrict and co-restrict as
    follows
    \[\lambda_{t}: X_{\cD_0} \to
      X_{\cD_1} \qquad \text{and}\qquad\rho_{t}: X_{\cD_0} \to
      X_{\cD_1}.\]
  \end{enumerate}
\end{corollary}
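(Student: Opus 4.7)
The key is to leverage Lemma~\ref{l:12}, whose hypotheses force the product decomposition $\cD \cong \cD_0 \times \cD_1 \times 2$. Dually this gives the coproduct $X_\cD = X_{\cD_0} \sqcup X_{\cD_1} \sqcup \{*\}$, where $\gamma \in X_{\cD_0}$ iff $A^* \in \gamma$ and $\gamma \in X_{\cD_1}$ iff $A^* \otimes \N \in \gamma$. Recall also that for any $y \in (A \times 2)^*$, the components $\lambda_{\pi(y)}, \rho_{\pi(y)}$ of the biaction of $M_\cD$ on $X_\cD$ are the Stone duals of the Boolean endomorphisms of $\cD$ given by $L \mapsto y^{-1}L$ and $L \mapsto Ly^{-1}$. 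Both claims therefore reduce to understanding how these endomorphisms interact with the product decomposition of $\cD$.

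For \ref{item:2}, fix $m = \pi(u)$ with $u \in A^*$. Since $u$ carries no marked letter, quotienting by $u$ respects the partition $(A \times 2)^* = A^* \sqcup (A^* \otimes \N) \sqcup A_z$: one has $u^{-1}L \in \cD_0$ whenever $L \in \cD_0$ and $u^{-1}L \in \cD_1$ whenever $L \in \cD_1$ (both by the closure properties in Lemma~\ref{l:12}), and $u^{-1}A_z = A_z$. Hence $L \mapsto u^{-1}L$ is diagonal with respect to $\cD \cong \cD_0 \times \cD_1 \times 2$, so dually $\lambda_m$ preserves each component of the coproduct $X_\cD$. The argument for $\rho_m$ is symmetric, and the biaction axioms for the restrictions are inherited from those of $M_\cD$ on $X_\cD$.

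For \ref{item:3}, fix $t = \pi(w,i)$ with $(w, i) \in A^* \otimes \N$ and let $\gamma \in X_{\cD_0}$, so that $A^* \in \gamma$. By Stone duality,
\[
  \lambda_t(\gamma) = \{L \in \cD \mid (w, i)^{-1}L \in \gamma\},
\]
so it suffices to show $(w, i)^{-1}(A^* \otimes \N) \in \gamma$, which will give $A^* \otimes \N \in \lambda_t(\gamma)$ and hence $\lambda_t(\gamma) \in X_{\cD_1}$. A direct set-theoretic computation yields $(w, i)^{-1}(A^* \otimes \N) = A^*$ (since $(w, i)v$ carries exactly one mark iff $v$ carries none), and $A^* \in \gamma$ by assumption. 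The argument for $\rho_t$ is symmetric, using the dual equality $(A^* \otimes \N)(w, i)^{-1} = A^*$.

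The main subtlety is keeping the two parts conceptually separate: for $u \in A^*$ the endomorphism $L \mapsto u^{-1}L$ is diagonal in the product decomposition, whereas for $(w, i) \in A^* \otimes \N$ it is not — it sends $\cD_1$-data into the $\cD_0$-factor and nontrivially mixes $\cD_z$-data into the $\cD_1$- and $\cD_z$-factors. This is why \ref{item:3} is most cleanly handled at the level of ultrafilters rather than by a componentwise Boolean-algebra argument, while \ref{item:2} falls straight out of the product structure. All remaining verifications are routine computations on quotients of languages.
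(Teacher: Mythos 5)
Your proof is correct and matches the paper's intent: the paper offers no written proof beyond declaring the corollary an immediate consequence of the equivalence in Lemma~\ref{l:12}, and your unpacking via the decomposition $\cD\cong\cD_0\times\cD_1\times 2$, its dual coproduct of spaces, and the computations $u^{-1}(A^*\otimes\N)=A^*\otimes\N$ and $(w,i)^{-1}(A^*\otimes\N)=A^*=(A^*\otimes\N)(w,i)^{-1}$ is exactly the intended argument. No gaps.
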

We remark that, if $\cD \subseteq \cP((A \times 2)^*)$ is a Boolean
subalgebra closed under quotients then~$\cD_0$, seen as a Boolean
subalgebra of $\cP(A^*)$, is also closed under quotients: it is so
because, for every $u, v \in A^*$ and $L \in \cD$, the following
equality holds:
\[u^{-1}(L \cap A^*) v^{-1} = (u^{-1}Lv^{-1}) \cap A^*.\]
Moreover, since the quotient $\cP((A \times 2)^*) \twoheadrightarrow
\cP(A^*)$ restricts and co-restricts to a quotient $\cD
\twoheadrightarrow \cD_0$, the syntactic morphism of $\cD_0 \subseteq
\cP(A^*)$ is a restriction and co-restriction of the syntactic
morphism of $\cD \subseteq \cP((A \times 2)^*)$. In particular, the
monoid $M$ defined in Corollary~\ref{c:9} is the syntactic
monoid~$M_{\cD_0}$ of~$\cD_0$ and the biaction of~$M$ on~$X_{\cD_0}$
is simply the natural biaction of~$M_{\cD_0}$ on~$X_{\cD_0}$.
\section{The Substitution Principle}\label{sec:substitution}
The concept of substitution for the study of logic on words, as laid
out by Tesson and Th\'erien in~\cite{TessonTherien07}, is quite
different from substitution in predicate logic.  Substitution in
predicate logic works on terms, whereas the notion of substitution
in~\cite{TessonTherien07} works at the level of predicates. As such it
provides a method for decomposing complex formulas into simpler
ones. The core idea is to enrich the alphabet over which the logic is
defined in order to be able to substitute large subformulas through
letter predicates.

In this section we start by defining substitution maps with respect to
finite Boolean algebras, and we prove a local version of the
Substitution Principle (cf. Corollary~\ref{c:4}), whose main
ingredient is the duality between finite sets and finite Boolean
algebras. Then, by proving that the substitution maps form a direct
limit system, we are able to extend the construction to arbitrary
Boolean algebras using full fledged Stone duality. In this process,
we are naturally led to consider profinite alphabets and we are able
to state a global version of the Substitution Principle
(cf. Corollary~\ref{cor:SP-global}). Finally, in Section~\ref{sec:app}, we show
how in practical terms these techniques may be useful in the study of
fragments of logic.

\subsection{Substitution with respect to finite Boolean
  algebras}\label{sec:sub-BA-fin}

As an example, consider the sentence $\psi = \exists x \
\phi(x)$. Then, $\psi$ may be obtained from the sentence $\exists x \
\tP_b(x)$ by replacing $\tP_b(x)$ by $\phi(x)$, and thus,
understanding $\psi$ amounts to understanding both the sentence
$\exists x \ \tP_b(x)$ and the formula $\phi(x)$.  If we want to
substitute away several subformulas in this way, we must account for
their logical relations. For instance, suppose that $\phi(x)$ is the
conjunction $\phi_1(x)\wedge\phi_2(x)$, and that $\phi_1(x)$ and
$\phi_2(x)$ are the simpler subformulas we wish to consider. Should
$\psi$ be obtained from a simpler sentence as above, such a sentence
would be $\exists x \ (\tP_{b_1}(x) \wedge\tP_{b_2}(x))$ for two
letters $b_1, b_2$. But, since $\phi_1$ and $\phi_2$ may be related we
would then need to impose relations on letters. Then, no complexity is
removed.  Thus instead, we will consider a \emph{finite Boolean
  algebra of formulas} to be substituted away. These can all be
accounted for by having \emph{a letter predicate for each atom} of
this finite Boolean algebra. The fact that letter predicates model the
atoms of a finite Boolean algebra of formulas in a free variable is
built into their interpretation.
This explains why, when substituting a formula of a given finite set
$\cF$ for each occurrence of a letter predicate from a corresponding
alphabet in a sentence, we should only consider sets $\cF$ of formulas
that have the same logical behavior as letter predicates, meaning
that $\cF$ satisfies
\begin{enumerate}[label = (A.\arabic*), leftmargin = *]
\item\label{item:A1} $\bigvee_{\phi \in \cF}\ \phi$ is the
  \emph{always-true} proposition;
\item\label{item:A2} for every $\phi_1, \phi_2 \in \cF$
  distinct, $\phi_1 \wedge \phi_2 $ is the \emph{always-false}
  proposition.
\end{enumerate}
In other words, we require that $\cF$ is the set of atoms of the
finite Boolean algebra it generates.

We now formalize this concept of substitution.
Throughout this section we fix a context~$\x$ and a variable~$x$ which
does not belong to~$\x$. Moreover, $\Gamma$ will be a fixed class of
sentences and $\Delta \subseteq \log \cQ A{\x x}\cN$ a finite Boolean
subalgebra.
We regard the set of atoms of $\Delta$ as a finite alphabet, and in
order to emphasize both the fact that it is an alphabet and the fact
that it is determined by~$\Delta$, we will denote it
by~$C_{\Delta}$. On the other hand, when we wish to view an element
$c$ of $C_{\Delta}$ as a formula of $\Delta$, we will write~$\phi_c$
instead of~$c$.
\begin{definition}
  The \emph{$\Gamma$-substitution given by $\Delta$ (with respect to
    the variable $x$)} is the map
  \[\sigma_{\Gamma, \Delta}: \Gamma(C_{\Delta}) \to\log\cQ A \x\cN\]
  sending a sentence to the formula in context~$\x$ obtained by
  substituting for each occurrence of a letter predicate $\tP_c(z)$,
  the formula $\phi_c[x/z]$ (that is, the formula obtained by
  substituting $z$ for $x$ in the formula $\phi_c\in
  \At(\Delta)$). Note that here we assume (without loss of generality)
  that the variables occurring in $\psi \in \Gamma(C_\Delta)$, such as
  $z$, do not occur in the formulas of $\Delta$. When~$\Gamma$ is
  clear from the context, as will very often be the case, we will
  simply write~$\sigma_\Delta$ instead of~$\sigma_{\Gamma, \Delta}$.
\end{definition}

Since the only constraints on the interpretation of letters in a word
are given by the properties~\ref{item:A1} and~\ref{item:A2}, it
follows by a simple structural induction that $\sigma_{\Delta}$ is a
homomorphism of Boolean algebras. We denote the image of this
homomorphism by~$\Gamma \vcirc \Delta$. In the sequel we will consider
$\sigma_\Delta$ as denoting the co-restriction of the substitution to
its image, that is,
\[\sigma_\Delta: \Gamma(C_\Delta) \twoheadrightarrow \Gamma \vcirc
  \Delta.\]

\begin{remark}
  Let us compare our substitution map with that
  of~\cite[Definition~2.6]{TessonTherien07}. In~\cite{TessonTherien07},
  given a finite set of formulas $\Phi \subseteq \cQ_{A, x}[\cN]$, the
  authors define the associated $\Phi$-substitution to be the function
  $\sigma_\Phi: \Gamma(2^\Phi) \to \cQ_A[\cN]$ that sends a formula
  $\psi \in \Gamma(2^\Phi)$ to the formula obtained from $\psi$ by
  replacing each letter predicate $\tP_S(z)$, with $S \subseteq \Phi$,
  by the formula $\varphi_S = \bigwedge_{\varphi \in S} \varphi[x/z]
  \wedge \bigwedge_{\varphi \notin S} \neg \varphi[x/z]$. Let $\Delta$
  be the Boolean subalgebra of $\cQ_{A, x}[\cN]$ generated by
  $\Phi$. It is not hard to see that, for every subset $S \subseteq
  \Phi$, either $\varphi_S$ is the always-false formula or it is an
  atom of~$\Delta$, and that every atom of $\Delta$ is semantically
  equivalent to the formula $\varphi_S$, where $S = \{\varphi \in \Phi
  \mid \phi \leq \varphi\}$. In particular, we have an embedding
  $\zeta: \At(\Delta) \rightarrowtail 2^\Phi$ mapping an atom $\phi
  \in \Delta$ to the set $\{\varphi \in \Phi \mid \phi \leq
  \varphi\}$. Finally, since $\Gamma$ is a class of sentences, $\zeta$
  induces a Boolean algebra quotient $\zeta_\Gamma:\Gamma(2^\Phi)
  \twoheadrightarrow \Gamma(C_\Delta)$ which, by definition of our
  substitution map $\sigma_\Delta$ and of the substitution map
  $\sigma_\Phi$ from~\cite{TessonTherien07}, is such that
  $\sigma_\Delta \circ \zeta_\Gamma = \sigma_\Phi$. In particular, the
  images of $\sigma_\Delta$ and of $\sigma_\Phi$ are the same.
\end{remark}

Next we describe the languages of $\Gamma \vcirc \Delta$ via those of
$\Gamma$ and those of $\Delta$.  Since $\Gamma(C_\Delta)$ and
$\Gamma\vcirc \Delta$ define, respectively, a Boolean algebra of
languages over~$C_\Delta$ and a Boolean algebra of marked words
over~$A$, we have embeddings

\begin{center}
  \begin{tikzpicture}[->,auto]
    \node[xshift = -5mm](G) at (0,0) {$\Gamma(C_\Delta)$};
    \node(GL) at (3,0) {$\Gamma \vcirc \Delta$};
    \node[xshift = -5mm](PC) at (0, -1.5) {$\cP(C_\Delta^*)$};
    \node(PA) at (3, -1.5) {$\cP(A^* \otimes \N^{ \x})$}; \draw[>->] (G) to (PC);
    \draw[>->] (GL) to (PA);
    \draw[->>] (G) to node[above, ArrowNode] {$\sigma_{\Delta}$} (GL);
  \end{tikzpicture}
\end{center}
Our first goal is to prove that the substitution map $\sigma_\Delta$
extends to a complete homomorphism of Boolean algebras
$\cP(C^*_\Delta)\to \cP(A^* \otimes \N^{ \x})$.  Formulated
dually, this means we are looking for a map on the level of (marked)
words $\tau_\Delta: A^* \otimes \N^{ \x} \to C^*_\Delta$ making
the following diagram commute:
\begin{center}
  \begin{tikzpicture}[->]
    \node[xshift = 10mm](XVG) at (6.5,0) {$X_{\Gamma(C_\Delta)}$};
    \node[xshift = 15mm](XVGL) at (9.5,0) {$X_{\Gamma \vcirc \Delta}$};
    \node[xshift = 10mm](C) at (6.5, -1.5) {$C_\Delta^*$};
    \node[xshift = 15mm](A) at (9.5,-1.5) {$A^* \otimes \N^{
        \x}$}; \draw[->] (C) to node[left, yshift = -2pt, ArrowNode]
    {$p_\Delta$}(XVG); \draw (A) to node[right, yshift = -2pt,
    ArrowNode] {$q_\Delta$} (XVGL);
    \draw[<-] (XVG) to node[above, ArrowNode] {$\Sigma_{\Delta}$}
    (XVGL);
    \draw[<-, dashed] (C) to node[below, ArrowNode] {$\tau_\Delta$} (A);
  \end{tikzpicture}
\end{center}
Here the maps $p_\Delta\colon C_{\Delta}^*\to X_{\Gamma{(C_{\Delta})}}$ and
$q_\Delta\colon A^* \otimes \N^{ \x} \to X_{\Gamma \vcirc \Delta}$ are,
respectively, the restrictions of the dual maps of the embeddings
$\Gamma{(C_{\Delta})}\rightarrowtail \cP(C_{\Delta}^*)$ and
$\Gamma\vcirc \Delta \rightarrowtail\cP(A^* \otimes \N^{ \x})$.

In order to be able to define the map $\tau_\Delta$, we need to
understand the Boolean algebra~$\Delta$ via duality. Recall that $\log
\cQ A{\x x}\cN$ embeds in $\cP(A^* \otimes \N^{ {\x x}})$ as a
Boolean subalgebra, and therefore, so does~$\Delta$:
\[\Delta \rightarrowtail \log \cQ A{\x x}\cN \rightarrowtail
  \cP(A^* \otimes \N^{ {\x x}}).\]
Applying discrete duality to this composition, we obtain a map
\[\xi_\Delta: A^* \otimes \N^{ {\x x}} \twoheadrightarrow C_\Delta =
  \At(\Delta) \]
defined, for $w\in A^* \otimes \N^{ \x}$, $i<|w|$, $c\in
C_\Delta$, and $\phi_c$ the atom of $\Delta$ corresponding to $c$, by
\begin{equation}
  \xi_\Delta(w, i) = c \ \iff \ (w, i)\in L_{\phi_c}
  \ \iff \ 
  (w, i)\vDash \phi_c .\label{eq:5}
\end{equation}

\begin{proposition}\label{p:transduccao}
  Let $\Gamma$ be a class of sentences, $\Delta$ a finite Boolean
  subalgebra of $\log\cQ A {\x x}\cN$, and
  $\sigma_\Delta:\Gamma(C_\Delta)\to\Gamma\vcirc\Delta$ the associated
  substitution as defined above.
  Then, the function ${\tau_\Delta: A^* \otimes \N^{ \x} \to
    C_{\Delta}^*}$ defined by $\tau_\Delta(w) = \xi_\Delta(w, 0)
  \cdots \xi_\Delta(w, \card w - 1)$ makes the following diagram
  commute:
  \begin{center}
  \begin{tikzpicture}[->]
    \node[xshift = -5mm](G) at (0,0) {$\Gamma(C_\Delta)$};
    \node(GL) at (3,0) {$\Gamma \vcirc \Delta$};
    \node[xshift = -5mm](PC) at (0, -1.5) {$\cP(C_\Delta^*)$};
    \node(PA) at (3, -1.5) {$\cP(A^* \otimes \N^{ \x})$}; \draw[>->] (G) to (PC);
    \draw[>->] (GL) to (PA);
    \draw[->>] (G) to node[above, ArrowNode] {$\sigma_{\Delta}$} (GL);
    \draw[->,dashed] (PC) to node[below, ArrowNode] {$\tau_\Delta^{-1}$}(PA);
  \end{tikzpicture}
\end{center}
\end{proposition}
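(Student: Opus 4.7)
The plan is to reduce the commutativity of the square to a pointwise semantic equivalence and then establish it by structural induction. Since the vertical embeddings identify each formula with the language it defines, the diagram commutes on the generators if and only if
\[
\tau_\Delta^{-1}(L_\psi) = L_{\sigma_\Delta(\psi)}\qquad\text{for every }\psi\in\Gamma(C_\Delta),
\]
that is, for every $(w, {\bf i})\in A^*\otimes\N^{\card\x}$,
\[
(w, {\bf i}) \models \sigma_\Delta(\psi) \iff \tau_\Delta(w, {\bf i}) \models \psi.
\]
Once this pointwise statement is known, the full commutativity of the diagram follows from the fact that all four maps involved are Boolean homomorphisms. To run the structural induction cleanly, I would prove a stronger statement that covers formulas with free variables: for every context~$\y$ disjoint from~$\x$, every formula $\psi'\in\log\cQ{C_\Delta}\y\cN$, and every marked word $(w, {\bf i}, {\bf j})\in A^*\otimes\N^{\card\x+\card\y}$,
\[
(w, {\bf i}, {\bf j}) \models \sigma_\Delta(\psi') \iff (\tau_\Delta(w, {\bf i}), {\bf j}) \models \psi',
\]
where $\sigma_\Delta$ is extended to formulas in the evident way (replacing each $\tP_c(z)$ by $\phi_c[x/z]$ after renaming bound variables to avoid collisions with~$\x$). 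The desired statement is the case $\y = \emptyset$.

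For the base case $\psi' = \tP_c(z)$ with $z\in\y$, the definition of $\xi_\Delta$ in~\eqref{eq:5} gives that $\xi_\Delta(w, {\bf i}, j_z) = c$ iff $(w, {\bf i}, j_z) \models \phi_c$, where $j_z$ is the value of ${\bf j}$ at~$z$; the left-hand side of the equivalence unwinds to exactly this by the definition of~$\sigma_\Delta$, and the right-hand side unwinds to it by the definition of~$\tau_\Delta$. For an atomic numerical predicate $R(z_1,\dots,z_k)$, the map~$\sigma_\Delta$ acts as the identity, and satisfaction depends only on~${\bf j}$ and the length of the underlying word; since $|\tau_\Delta(w, {\bf i})| = |w|$ by construction, the equivalence is immediate. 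Boolean connectives pass through using that both sides are interpreted classically and that the inductive hypothesis applies to each component. For the quantifier step $\psi' = Qz\ \chi$, chosen (without loss of generality) so that $z\notin\x\y$, we have $\sigma_\Delta(Qz\ \chi) = Qz\ \sigma_\Delta(\chi)$; unwinding the semantics of~$Q$ expresses both sides as $Q$ applied to a $|w|$-tuple of truth values, and the inductive hypothesis applied to $\chi$ at each extension of ${\bf j}$ by a position $k\in\{1,\dots,|w|\}$ matches these tuples term-by-term, again using $|\tau_\Delta(w, {\bf i})| = |w|$.

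The main potential obstacle is notational bookkeeping around variable freshness, the multiplicity of contexts involved, and the identification of formulas with the languages they define; there is no real mathematical difficulty. The underlying idea is simply that $\tau_\Delta$ is a position-preserving transduction that relabels each position~$k$ of $(w, {\bf i})$ by the unique atom of~$\Delta$ it satisfies in context $\x x$, so that every sentence over the alphabet~$C_\Delta$ reads $\tau_\Delta(w, {\bf i})$ exactly as its $\sigma_\Delta$-expansion reads the original marked word.
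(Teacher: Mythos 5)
Your proposal is correct and follows essentially the same route as the paper: both reduce the diagram to the pointwise equivalence $\tau_\Delta^{-1}(L_\psi)=L_{\sigma_\Delta(\psi)}$ (the paper's equation~\eqref{eq:TT}), establish the letter-predicate case via the defining property~\eqref{eq:5} of $\xi_\Delta$, and propagate it through the remaining constructors using that $\tau_\Delta$ preserves positions. The only difference is one of exposition: you carry out the structural induction explicitly (strengthening the statement to formulas with free variables in an auxiliary context), where the paper compresses this into the remark that $\psi$ and $\sigma_\Delta(\psi)$ are ``built up identically'' once the atomic substitutions are made, and the paper then restates the conclusion on the dual side as $\Sigma_\Delta(p_\Delta(w))=q_\Delta(\tau_\Delta(w))$.
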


\begin{proof}
 We show that the dual diagram   
  \begin{center}
    \begin{tikzpicture}[->]
      \node(A) at (1.5,0) {$A^* \otimes \N^{ \x}$}; \node(SGB) at (5,1.5)
      {$X_{\Gamma{(C_{\Delta}})}$}; \node(SGA) at (1.5,1.5)
      {$X_{\Gamma \vcirc \Delta}$}; \node(SB) at (5,0)
      {$C_{\Delta}^*$}; \draw[->,dashed] (A) to node[below, ArrowNode]
      {$\tau_\Delta$} (SB); \draw[->] (SGA) to node[above, ArrowNode]
      {$\Sigma_\Delta$} (SGB); \draw[->] (A) to node[left, ArrowNode]
      {$p_\Delta$} (SGA); \draw[->] (SB) to node[right, ArrowNode]
      {$q_\Delta$} (SGB);
    \end{tikzpicture}
 \end{center}
 commutes. To this end, let $w \in A^* \otimes \N^{ \x}$,
 $i<|w|$, and $c\in C_\Delta$ with $\phi_c$ the corresponding atom
 of~$\Delta$. First, we argue that
 \[(\tau_\Delta(w), i) \models \tP_c(x) \ \iff \ (w, i) \models
   \phi_c.\]
 This is so because, by the definition of letter predicates, the
 marked word over $C_\Delta^*$, $(\tau_\Delta(w), i)$, is a model of
 $\tP_c(x)$ if and only if its $i$-th letter is a~$c$. By definition
 of~$\tau_\Delta$, this is equivalent to having $\xi_\Delta(w, i) =
 c$, which, by~\eqref{eq:5}, means that $(w, i) \models \phi_c$, as
 required.

  Now, since the validity in a marked word of a quantified formula $Qx\
  \psi$ is fully determined once we know the truth value of the given
  formula $\psi$ at each point of the marked word, and since
  $\psi\in\Gamma(C_\Delta)$ and $\sigma_\Delta(\psi)$ are built up
  identically once the substitutions of $\tP_c(x)$ by $\phi_c$ have
  been made, it follows that, for all $\psi\in\Gamma(C_\Delta)$, we
  have
\begin{equation}\label{eq:TT}
  \tau_\Delta(w) \in L_\psi   \  \iff \    w \in L_{\sigma_\Delta(\psi)}.
\end{equation}
However
\[
  w \in L_{\sigma_\Delta(\psi)}\quad\iff\quad L_{\sigma_\Delta(\psi)}\in
  p_\Delta(w)\quad\iff\quad L_\psi\in \Sigma_\Delta( p_\Delta(w))
\]
so that
\[
  \tau_\Delta(w) \in L_\psi \ \iff \ L_\psi\in \Sigma_\Delta(
  p_\Delta(w))
\]
and thus $\Sigma_\Delta( p_\Delta(w))=q_\Delta(\tau_\Delta(w))$ as required.
\end{proof}

The existence of the map $\tau_\Delta$ defined in
Proposition~\ref{p:transduccao} yields the next result.
\begin{corollary}[Substitution Principle - local version]\label{c:4}
  Let $\Gamma$ be a class of sentences and $\Delta \subseteq \log\cQ
  A{\x x}\cN$ be a finite Boolean subalgebra. Then,
  the languages definable by a formula of $\Gamma \vcirc \Delta$ are
  precisely those of the form $\tau_{\Delta}^{-1}(K)$, where $K$ is a
  language definable in $\Gamma{(C_{\Delta})}$.
\end{corollary}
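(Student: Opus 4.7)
The plan is to derive the statement as a direct reading, at the level of individual elements, of the commutative square provided by Proposition~\ref{p:transduccao}. First I would unfold the construction: by the very definition of $\Gamma \vcirc \Delta$ as the image of $\sigma_\Delta$, every formula of $\Gamma \vcirc \Delta$ has the form $\sigma_\Delta(\psi)$ for some $\psi \in \Gamma(C_\Delta)$, so the set of languages definable in $\Gamma \vcirc \Delta$ is $\{L_{\sigma_\Delta(\psi)} \mid \psi \in \Gamma(C_\Delta)\}$. Similarly, the set of languages definable in $\Gamma(C_\Delta)$ is $\{L_\psi \mid \psi \in \Gamma(C_\Delta)\}$. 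Hence the two sets in the statement can be rewritten purely in terms of $\sigma_\Delta$ and $\tau_\Delta^{-1}$.

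The only fact I actually need is the pointwise identity
\[
L_{\sigma_\Delta(\psi)} \;=\; \tau_\Delta^{-1}(L_\psi) \qquad \text{for every } \psi \in \Gamma(C_\Delta).
\]
This is precisely what commutativity of the square in Proposition~\ref{p:transduccao} asserts, after identifying $\Gamma(C_\Delta)$ and $\Gamma \vcirc \Delta$ with their images in $\cP(C_\Delta^*)$ and $\cP(A^* \otimes \N^{\card \x})$ respectively. Concretely, chasing a formula $\psi$ around the square says exactly that the set of models of $\sigma_\Delta(\psi)$ in $A^* \otimes \N^{\card \x}$ coincides with the preimage under $\tau_\Delta$ of the set of models of $\psi$ in $C_\Delta^*$, which is the displayed equality.

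Given this identity, both inclusions in the claimed set equality follow immediately: any language definable by a formula of $\Gamma \vcirc \Delta$ is of the form $L_{\sigma_\Delta(\psi)} = \tau_\Delta^{-1}(L_\psi)$, which is the preimage under $\tau_\Delta$ of the language definable by $\psi \in \Gamma(C_\Delta)$; and conversely, for any language $K = L_\psi$ definable in $\Gamma(C_\Delta)$, its preimage $\tau_\Delta^{-1}(K) = L_{\sigma_\Delta(\psi)}$ is definable by the formula $\sigma_\Delta(\psi) \in \Gamma \vcirc \Delta$. There is essentially no obstacle beyond what has already been done in Proposition~\ref{p:transduccao} — the substance of the Substitution Principle is encoded in the construction of $\tau_\Delta$ from $\xi_\Delta$ and in the structural induction behind the proposition, and the corollary is simply the statement of that content in its most useful form, with languages on both sides rather than formulas.
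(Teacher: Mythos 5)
Your proposal is correct and follows exactly the paper's route: the paper's proof is the one-line observation that the corollary is an immediate consequence of the commutativity of the square in Proposition~\ref{p:transduccao}, and your element-level unfolding via the identity $L_{\sigma_\Delta(\psi)} = \tau_\Delta^{-1}(L_\psi)$ (which is equation~\eqref{eq:TT} in that proposition's proof) is just that same argument written out explicitly.
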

\begin{proof}
  This is an immediate consequence of the commutativity of the diagram
  dual to that of Proposition~\ref{p:transduccao}.
\end{proof}

\begin{remark}\label{r:1}
  We warn the reader that the operator $(\_\vcirc\_)$ just defined
  does not coincide with the operator $(\_\circ\_)$ considered both
  in~\cite{BorlidoCzarnetzkiGehrkeKrebs17} and in the regular version
  of~\cite{TessonTherien07}. The relationship between  these operators
  is expressed by the following equality:
  \[\Gamma \circ \Delta = \langle (\Gamma \vcirc \Delta) \cup
    \Delta_\x\rangle_{\sf BA},\]
  where $\Delta_\x$ denotes the set of formulas of $\Delta$ in
  context~$\x$ (i.e., those whose free variables belong to~$\x$).
  We choose to first study the operator~$(\_\vcirc\_)$ in order to
  emphasize the role of Stone duality in the \emph{Substitution
    Principle}. It should be clear for the reader how to state the
  corresponding results for~$(\_\circ\_)$.
\end{remark}

\subsection{The extension to arbitrary Boolean algebras using
  profinite alphabets}\label{sec:inf-BA}

As the reader may have noticed, the context~$\x$ fixed along the
previous section is not playing an active role. For that reason, and
in order to simplify the notation, we now assume that $\x$ is the
empty context. Later, in Section~\ref{sec:encode}, we will see that
this assumption may be done without loss of generality.

Here we show that substitution as defined in
Section~\ref{sec:sub-BA-fin} extends to arbitrary Boolean algebras in
a meaningful way. Fix a class of sentences $\Gamma$. We start by
comparing the substitution maps obtained for two finite Boolean
algebras, one contained in the other.  To this end, suppose $\Delta_1
\rightarrowtail \Delta_2$ is such an inclusion of finite Boolean
subalgebras of $\log\cQ A{x}\cN$ and let $\zeta: C_2\twoheadrightarrow
C_1$ be the dual of the inclusion. Recall that the semantics of logic
on words provides embeddings of $\Gamma(C_i)$ in $\cP(C_i^*)$ and
that, since $\Gamma$ is a class of sentences, the surjection $\zeta$
yields an embedding $\zeta_\Gamma\colon\Gamma(C_1)\rightarrowtail
\Gamma(C_2)$ making the following diagram commute
(cf. diagram~\eqref{eq:6} and~\ref{item:LC2}):
\begin{equation}
  \begin{aligned}
    \begin{tikzpicture}
      [node distance = 18mm, ->] \node at (0,0) (gc1) {$\Gamma(C_1)$};
      \node[below of = gc1] (gc2) {$\Gamma(C_2)$}; \node[right of =
      gc1, xshift = 10mm] (pc1) {$\cP(C_1^*)$}; \node[below of = pc1]
      (pc2) {$\cP(C_2^*)$};
      \draw[>->] (pc1) to node[right, ArrowNode] {$(\zeta^*)^{-1}$}
      (pc2); \draw[>->] (gc1) to (pc1); \draw[>->] (gc2) to (pc2);
      \draw[>->] (gc1) to node[left, ArrowNode] {$\zeta_\Gamma$}
      (gc2);
    \end{tikzpicture}
  \end{aligned}\label{eq:9}
\end{equation}
We also have:
\begin{lemma}\label{l:2}
  The following  diagram is commutative:
  \begin{center}
    \begin{tikzpicture}[node distance = 20mm, ->]
      \node (pc2) {$C_2^*$}; \node[below of = pc2] (pc1) {$C_1^*$};
      \draw[->>] (pc2) to node[left, ArrowNode] {$\zeta^*$} (pc1);
      \node[right of = pc2, yshift = -10mm] (pa)
      {$A^*$};
      \draw (pc1) to node[below, yshift = -1mm, ArrowNode]
      {$\tau_{\Delta_1}$} (pa); \draw (pc2) to node[above, ArrowNode]
      {$\tau_{\Delta_2}$} (pa);
    \end{tikzpicture}
  \end{center}
\end{lemma}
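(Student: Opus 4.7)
The plan is to reduce the commutativity of the diagram to a position-by-position verification on letters. Since $\zeta \colon C_2 \to C_1$ is a function between the finite sets of atoms, the associated lp-morphism $\zeta^* \colon C_2^* \to C_1^*$ simply applies $\zeta$ at each position of its input. On the other hand, the map $\tau_{\Delta_i}$ built in Proposition~\ref{p:transduccao} is defined coordinate-wise via the function $\xi_{\Delta_i}$ of~\eqref{eq:5} (recall that throughout this subsection $\x$ is the empty context, so the marked-word inputs are just words in $A^*$). Therefore, the desired equality $\zeta^* \circ \tau_{\Delta_2} = \tau_{\Delta_1}$ will follow once I establish, for every word $w \in A^*$ and every position $j \leq \card w$, the pointwise identity
\[
\zeta\bigl(\xi_{\Delta_2}(w, j)\bigr) = \xi_{\Delta_1}(w, j).
\]

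To prove this pointwise identity, I would unpack the meaning of $\zeta$ via discrete duality. Since $\zeta$ is the dual of the inclusion $\Delta_1 \rightarrowtail \Delta_2$, the adjunction recalled in Section~\ref{ss:discrdual} gives that $\zeta(c_2) = c_1$ if and only if $\phi_{c_2} \leq \phi_{c_1}$ in $\Delta_2$, i.e.\ every model of $\phi_{c_2}$ is also a model of $\phi_{c_1}$. Now fix $w \in A^*$ and $j \leq \card w$, and set $c_2 = \xi_{\Delta_2}(w, j)$, so that by definition $(w, j) \models \phi_{c_2}$. Letting $c_1 = \zeta(c_2)$, the characterization above yields $(w, j) \models \phi_{c_1}$ as well, and therefore by~\eqref{eq:5} we conclude $\xi_{\Delta_1}(w, j) = c_1 = \zeta\bigl(\xi_{\Delta_2}(w, j)\bigr)$, as required.

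I do not anticipate any real obstacle here: the argument is essentially tautological once the three ingredients are combined, namely that lp-morphisms act coordinate-wise, that $\tau_{\Delta_i}$ is assembled letter-by-letter from $\xi_{\Delta_i}$, and that $\zeta$ encodes the inclusion of finite Boolean subalgebras through the discrete adjunction. The only bookkeeping to keep straight is the identification between an atom $c \in C_i$ and the corresponding formula $\phi_c \in \Delta_i$, particularly when viewing an atom of $\Delta_1$ as an element of the larger algebra $\Delta_2$ in the step that invokes the adjunction.
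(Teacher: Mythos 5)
Your proposal is correct and follows essentially the same route as the paper: both reduce the claim to the letter-by-letter identity $\zeta\circ\xi_{\Delta_2}=\xi_{\Delta_1}$, the paper obtaining it by dualizing the commuting triangle of embeddings $\Delta_1\rightarrowtail\Delta_2\rightarrowtail\cP(A^*\otimes\N)$, while you verify the same identity by unwinding the discrete-duality adjunction defining $\zeta$. The two arguments differ only in how explicitly the duality step is spelled out.
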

\begin{proof}
  We first recall from Proposition~\ref{p:transduccao} that, for $i =
  1, 2$, and $w \in A^*$, we have
  \[\tau_{\Delta_i}(w) = \xi_i(w, 0) \dots \xi_i(w, \card w -1),\]
  where $\xi_i: A^* \otimes \N \twoheadrightarrow C_i$ is the quotient
  dual to the embedding $\Delta_i \rightarrowtail \cP(A^* \otimes
  \N)$.  On the other hand, by hypothesis, we have a commutative
  diagram
  \begin{center}
    \begin{tikzpicture}[>=stealth', >->, node distance = 18mm]
      \node(B) {$\Delta_1$}; \node[below of = B](C) {$\Delta_2$};
      \node[right of = B, yshift = -9mm, xshift = 10mm] (Sub)
      {$\cP(A^* \otimes \N)$};
      \draw (B) to (C); \draw (B) to (Sub); \draw (C) to (Sub);
    \end{tikzpicture}
  \end{center}
  which dually gives
  \begin{center}
    \begin{tikzpicture}[>=stealth', <<-, node distance = 18mm]
      \node(B) {$C_1$}; \node[below of = B](C) {$C_2$}; \node[right of
      = B, yshift = -9mm, xshift = 10mm] (Sub){$A^* \otimes \N$};
      \draw (B) to node[left, ArrowNode] {$\zeta$}(C); \draw (B) to
      node[above, ArrowNode] {$\xi_1$} (Sub); \draw (C) to node[below,
      ArrowNode] {$\xi_2$} (Sub);
    \end{tikzpicture}
  \end{center}
  It then follows that
  \begin{align*}
    \tau_{\Delta_1}(w)
    & = \xi_1(w, 0) \dots \xi_1(w, \card w  - 1)
    \\ & = \zeta^*(\xi_2(w, 0) \dots \xi_2(w, \card w - 1)) =
         \zeta^* (\tau_{\Delta_2}(w)),
  \end{align*}
  that is, the statement of the lemma is true.
\end{proof}

As a straightforward consequence of diagram~\eqref{eq:9},
Lemma~\ref{l:2}, and Proposition~\ref{p:transduccao}, we have the
following:
\begin{theorem}\label{t:5}
  Let $\Delta_1 \rightarrowtail \Delta_2$ be an embedding of finite
  Boolean subalgebras of $\log\cQ A{x}\cN$, and let $\zeta: C_2
  \twoheadrightarrow C_1$ be its dual map. Then, the following diagram
  commutes
  \begin{center}
    \begin{tikzpicture}[node distance = 15mm, ->]
      \node at (0,0) (gc1) {$\Gamma(C_1)$}; \node[below of = gc1,
      yshift = -30mm] (gc2) {$\Gamma(C_2)$}; \node[below of = gc1,
      xshift = 20mm] (pc1) {$\cP(C_1^*)$}; \node[below of = pc1] (pc2)
      {$\cP(C_2^*)$};
      \draw[>->] (pc1) to node[left, ArrowNode] {$(\zeta^*)^{-1}$}
      (pc2); \draw[>->] (gc1) to (pc1); \draw[>->] (gc2) to (pc2);
      \draw[>->] (gc1) to node[left, ArrowNode] {$\zeta_\Gamma$}
      (gc2);
      \node[right of = pc1, yshift = -7.5mm, xshift = 15mm] (pa)
      {$\cP(A^*)$};
      \draw (pc1) to node[above, ArrowNode] {$\tau_{\Delta_1}^{-1}$}
      (pa); \draw (pc2) to node[below, ArrowNode]
      {$\tau_{\Delta_2}^{-1}$} (pa);
      \node[right of = pa, xshift = 15mm] (qan) {$\cQ_A[\cN]$};
      \draw[>->] (qan) to (pa); \draw[bend left = 20] (gc1) to
      node[above, ArrowNode] {$\sigma_{\Delta_1}$} (qan); \draw[bend right = 20]
      (gc2) to node[below, ArrowNode] {$\sigma_{\Delta_2}$} (qan);
    \end{tikzpicture}
  \end{center}
\end{theorem}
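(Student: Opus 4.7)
The plan is to exhibit the big diagram as a union of three commutative subdiagrams, each already established earlier in the paper, and then to assemble the result by pasting them together.

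First, I would observe that the inner square with vertices $\Gamma(C_1)$, $\Gamma(C_2)$, $\cP(C_1^*)$, and $\cP(C_2^*)$ is nothing other than diagram~\eqref{eq:9}. Its commutativity is a direct instance of condition~\ref{item:LC2} in the definition of a class of sentences, which says exactly that $\zeta_\Gamma$ arises by restricting and co-restricting the Boolean algebra homomorphism $(\zeta^*)^{-1}$ to the appropriate subalgebras of formulas.

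Next, the lower outer triangle with vertices $\cP(C_1^*)$, $\cP(C_2^*)$, and $\cP(A^*)$ is the contravariant powerset image of Lemma~\ref{l:2}. Explicitly, applying preimage-taking to the identity $\tau_{\Delta_1} = \zeta^* \circ \tau_{\Delta_2}$ of that lemma yields $\tau_{\Delta_1}^{-1} = \tau_{\Delta_2}^{-1} \circ (\zeta^*)^{-1}$, which is exactly the commutativity required.

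Finally, the two remaining triangles --- one for $i = 1$ and one for $i = 2$ --- each assert that the composite $\Gamma(C_i) \rightarrowtail \cP(C_i^*) \xrightarrow{\tau_{\Delta_i}^{-1}} \cP(A^*)$ factors as $\Gamma(C_i) \xrightarrow{\sigma_{\Delta_i}} \cQ_A[\cN] \rightarrowtail \cP(A^*)$. This is precisely the content of Proposition~\ref{p:transduccao} applied to $\Delta = \Delta_i$, together with the inclusion $\Gamma \vcirc \Delta_i \subseteq \cQ_A[\cN]$. Gluing these three commuting pieces along their shared edges produces the full diagram. I do not expect any serious obstacle here: all of the real work was done in the three cited results. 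The only mild subtlety is that, whenever a chase through the ambient $\cP(A^*)$ is used to conclude an equality of arrows landing in $\cQ_A[\cN]$, one must invoke the injectivity of $\cQ_A[\cN] \rightarrowtail \cP(A^*)$ to lift the equality back to $\cQ_A[\cN]$ itself --- a point that is immediate but worth noting.
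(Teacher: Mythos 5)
Your decomposition is exactly the one the paper intends: the text introducing Theorem~\ref{t:5} states that it is a straightforward consequence of diagram~\eqref{eq:9}, Lemma~\ref{l:2}, and Proposition~\ref{p:transduccao}, which are precisely the three commuting pieces you paste together. The proposal is correct and follows the paper's own route, with the added (harmless and accurate) remark about using injectivity of $\cQ_A[\cN] \rightarrowtail \cP(A^*)$ to conclude equality of the arrows into $\cQ_A[\cN]$.
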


\begin{corollary}
  \label{c:3}
  If $\Delta_1 \subseteq \Delta_2$ are finite Boolean subalgebras of
  $\log\cQ A{x}\cN$, then the inclusion $\Gamma \vcirc \Delta_1
  \subseteq \Gamma \vcirc \Delta_2$ holds.  In particular, for every
  Boolean subalgebra $\Delta\subseteq \log\cQ A{x}\cN$, the family
  $\{\Gamma \vcirc \Delta' \mid \Delta' \subseteq \Delta\text{ is a
    finite Boolean subalgebra}\}$ forms a direct limit system.
\end{corollary}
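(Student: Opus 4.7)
The plan is to read off the inclusion directly from Theorem~\ref{t:5}. Recall that $\Gamma\vcirc\Delta_i$ is, by definition, the image of the substitution map $\sigma_{\Delta_i}\colon \Gamma(C_i)\to\cQ_A[\cN]$ (we are in the case $\x = \emptyset$). The lower portion of the diagram in Theorem~\ref{t:5} shows that
\[
\sigma_{\Delta_2}\circ\zeta_\Gamma = \sigma_{\Delta_1},
\]
as both can be identified, via the embeddings into $\cP(A^*)$, with the common composite $\tau_{\Delta_1}^{-1}\circ(\text{embedding of }\Gamma(C_1))$, using that the upper square commutes and that $\tau_{\Delta_1}^{-1}$ factors through $\tau_{\Delta_2}^{-1}\circ(\zeta^*)^{-1}$.

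First I would make this explicit: for any $\psi\in\Gamma(C_1)$ one has $\sigma_{\Delta_1}(\psi)=\sigma_{\Delta_2}(\zeta_\Gamma(\psi))$, hence $\sigma_{\Delta_1}(\Gamma(C_1))\subseteq \sigma_{\Delta_2}(\Gamma(C_2))$, which is exactly $\Gamma\vcirc\Delta_1\subseteq \Gamma\vcirc\Delta_2$. This gives the first assertion.

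For the direct-limit claim, I would observe that the finite Boolean subalgebras $\Delta'\subseteq\Delta$ form a directed poset under inclusion (any two finite subalgebras $\Delta',\Delta''$ are contained in the finite subalgebra they jointly generate). Indexing the family $\{\Gamma\vcirc\Delta'\}$ over this poset and taking the inclusions $\Gamma\vcirc\Delta'\hookrightarrow \Gamma\vcirc\Delta''$ (for $\Delta'\subseteq\Delta''$) provided by the first part as connecting morphisms yields a direct system in $\mathsf{BA}$; the required coherence $\iota_{\Delta',\Delta'''} = \iota_{\Delta'',\Delta'''}\circ\iota_{\Delta',\Delta''}$ is immediate since all maps are inclusions between subalgebras of the fixed ambient algebra $\cQ_A[\cN]$.

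There is essentially no obstacle here: the content of the corollary is the equality $\sigma_{\Delta_2}\circ\zeta_\Gamma = \sigma_{\Delta_1}$, and the only small point to be careful about is justifying this equality by a diagram chase through Theorem~\ref{t:5} rather than re-deriving it from the definitions of $\sigma_{\Delta_i}$ and $\tau_{\Delta_i}$.
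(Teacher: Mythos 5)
Your proposal is correct and follows essentially the same route as the paper: the paper likewise invokes the commutativity of the outer triangle of Theorem~\ref{t:5}, i.e.\ the identity $\sigma_{\Delta_2}\circ\zeta_\Gamma=\sigma_{\Delta_1}$, to get $\Gamma\vcirc\Delta_1\subseteq\Gamma\vcirc\Delta_2$ as an inclusion of images, and then notes that this yields a direct system of Boolean subalgebras of $\cQ_A[\cN]$. Your additional remarks (deducing the triangle's commutativity from the inner squares via the embeddings into $\cP(A^*)$, and spelling out the directedness of the poset of finite subalgebras) only make explicit what the paper leaves implicit.
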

\begin{proof}
  Let $\Delta_1 \rightarrowtail \Delta_2$ be an embedding of finite
  Boolean subalgebras of $\log\cQ A{x}\cN$, and $\zeta: C_2
  \twoheadrightarrow C_1$ its dual map. Commutativity of the outer
  triangle of Theorem~\ref{t:5} yields
  \begin{center}
    \begin{tikzpicture}[->,>=stealth',shorten >=1pt,auto,node
      distance=3.4cm, scale=1] \node(B) at (0,0) {$\Gamma(C_1)$};
      \node(C) at (0,-1.6) {$\Gamma(C_2)$}; \node(Sub) at (3,-0.8)
      {$\cQ_A[\cN]$};

      \draw[>->,>=stealth'] (B) to node[left, ArrowNode] {$\zeta_\Gamma$}
      (C); \draw[->,>=stealth'] (B) to node[above, ArrowNode]
      {$\sigma_{\Delta_1}$} (Sub); \draw[->,>=stealth'] (C) to
      node[below, ArrowNode] {$\sigma_{\Delta_2}$} (Sub);
    \end{tikzpicture}
  \end{center}
  Thus, we have a direct system of Boolean subalgebras of~$\cQ_A[\cN]$
  and $\Gamma\vcirc \Delta_1\subseteq\Gamma\vcirc \Delta_2$.
\end{proof}

This leads to the following definition.
\begin{definition}\label{def:circ}
  For an arbitrary Boolean subalgebra $\Delta \subseteq \log\cQ
  A{x}\cN$, we set
  \begin{equation}
    \Gamma \vcirc \Delta = \lim\limits_{\longrightarrow} \
    \{\Gamma\vcirc \Delta'\mid \Delta' \subseteq \Delta \text{ is a
      finite Boolean subalgebra}\}.\label{eq:1} 
  \end{equation}
  Note that $\Gamma \vcirc \Delta$ is simply given by the union of all
  Boolean subalgebras $\Gamma\vcirc \Delta' \subseteq \cQ_A[\cN]$. In
  particular, it is also a Boolean subalgebra of $\cQ_A[\cN]$.
\end{definition}

In turn, on the side of Boolean spaces, we have a projective limit
system formed by the maps~$\tau_{\Delta'}$.

\begin{corollary}\label{c:5}
  For every Boolean subalgebra $\Delta \subseteq \log\cQ A{x}\cN$, the
  set of maps
  \begin{equation}
  \{\tau_{\Delta'}: A^* \to C_{\Delta'}^* \mid \Delta' \subseteq
    \Delta\text{ is a finite Boolean subalgebra}\}\label{eq:2}
  \end{equation}
  forms a projective limit system, where the connecting morphisms are
  the homomorphisms of monoids $C_{\Delta_{2}}^* \twoheadrightarrow
  C_{\Delta_1}^*$ induced by the dual maps of inclusions $\Delta_1
  \rightarrowtail \Delta_2$ of finite Boolean subalgebras of $\Delta$.
  Moreover, the limit of this system is the map $\tau_\Delta: A^* \to
  X_{\Delta}^*$ sending a word $w \in A^*$ to the word $\gamma_0
  \cdots \gamma_{\card w - 1}$ with $\gamma_i = \{\phi \in \Delta \mid
  (w, i) \text{ satisfies } \phi\}$. In other words, $\gamma_i$ is the
  projection to $X_{\Delta}$ of the principal (ultra)filter of
  $\cP(A^*\otimes \N)$ generated by $(w, i)$.
\end{corollary}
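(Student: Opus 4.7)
My plan is to split the proof into two parts corresponding to the two assertions: (i) the $\tau_{\Delta'}$ assemble into a projective limit cone, and (ii) the limit is given by the described map $\tau_\Delta$.

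For (i), I would first check that the family $\{C_{\Delta'}^*\}_{\Delta'}$ indeed carries the structure of a projective system. Given an inclusion $\Delta_1 \rightarrowtail \Delta_2$ of finite Boolean subalgebras of $\Delta$, discrete duality yields a surjection $\zeta\colon C_{\Delta_2}\twoheadrightarrow C_{\Delta_1}$ of finite alphabets, which extends to an onto monoid homomorphism $\zeta^*\colon C_{\Delta_2}^*\twoheadrightarrow C_{\Delta_1}^*$. Functoriality of the $\zeta \mapsto \zeta^*$ assignment yields the cofiltered structure. That the $\tau_{\Delta'}$ form a compatible cone over $A^*$ is precisely the statement of Lemma~\ref{l:2}: for $\Delta_1 \subseteq \Delta_2$ with dual $\zeta$, we have $\zeta^* \circ \tau_{\Delta_2} = \tau_{\Delta_1}$.

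For (ii), the key is to identify the projective limit of the system $\{C_{\Delta'}^*\}$. Each connecting morphism $\zeta^*$ is length-preserving, so the system decomposes as the disjoint union over $n \ge 0$ of the projective systems $\{C_{\Delta'}^n\}$ of finite sets (ranging over finite $\Delta' \subseteq \Delta$). Since projective limits commute with finite products of sets, we get
\[
\lim\limits_{\longleftarrow}\ C_{\Delta'}^n \ =\ \bigl(\lim\limits_{\longleftarrow}\ C_{\Delta'}\bigr)^{\!n} \ =\ X_\Delta^{\,n},
\]
where the last equality uses Stone duality: $\Delta = \lim\limits_{\longrightarrow}\Delta'$ as a direct limit of its finite Boolean subalgebras, and so $X_\Delta$ is the projective limit of the finite dual sets $C_{\Delta'} = \At(\Delta')$. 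Taking the union over $n$, we obtain $X_\Delta^* = \lim\limits_{\longleftarrow}\ C_{\Delta'}^*$ with projection maps sending a word in $X_\Delta^*$ to its image under $(X_\Delta\twoheadrightarrow C_{\Delta'})^*$, letter-by-letter.

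By the universal property of the projective limit, there is a unique map $\tau_\Delta\colon A^* \to X_\Delta^*$ such that, for every finite $\Delta' \subseteq \Delta$, the composite of $\tau_\Delta$ with the projection $X_\Delta^* \twoheadrightarrow C_{\Delta'}^*$ equals $\tau_{\Delta'}$. To identify $\tau_\Delta(w)$ explicitly, observe that by Proposition~\ref{p:transduccao} the $i$-th letter of $\tau_{\Delta'}(w)$ is the atom $\xi_{\Delta'}(w,i)\in C_{\Delta'}$ satisfied by the marked word $(w,i)$, and that the surjection $\zeta\colon C_{\Delta_2}\twoheadrightarrow C_{\Delta_1}$ sends an atom of $\Delta_2$ to the unique atom of $\Delta_1$ above it. Hence the compatible family $\bigl(\xi_{\Delta'}(w,i)\bigr)_{\Delta'}$ corresponds, under the identification of $X_\Delta$ with the ultrafilters of $\Delta$, to the ultrafilter $\gamma_i = \{\phi \in \Delta \mid (w,i)\models \phi\}$, as announced. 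The main technical point to verify carefully is the commutation of projective limits with the coproduct $\bigsqcup_n(-)^n$ in the first paragraph of (ii); this is straightforward because each connecting morphism preserves the length grading, so no issue of compactness of $Y^*$ enters.
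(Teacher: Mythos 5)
Your proof is correct and takes essentially the same route as the paper: compatibility of the cone is exactly Lemma~\ref{l:2}, and the limit is computed by splitting the system along the length grading and identifying $\lim\limits_{\longleftarrow} C_{\Delta'}$ with $X_\Delta$ via Stone duality. You merely spell out more detail than the paper's terse argument, in particular the explicit identification of the compatible family of atoms $\bigl(\xi_{\Delta'}(w,i)\bigr)_{\Delta'}$ with the ultrafilter $\gamma_i$.
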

\begin{proof}
  The fact that~\eqref{eq:2} forms a projective limit system follows
  from Lemma~\ref{l:2}.

  Now, we remark that the maps $\tau_{\Delta'}$ are all length
  preserving, thus the above system factors into projective limit
  systems $\{\tau_{\Delta'}: A^n \to (C_{\Delta'} )^n \mid \Delta'
  \subseteq \Delta\text{ is a finite Boolean subalgebra}\}$ for each
  $n \ge 0$, and each of these systems has itself a projective limit,
  which is induced by the projections $A^* \otimes \N \to
  X_\Delta$. Thus, the projective limit of~\eqref{eq:2} is the map
  $\tau_\Delta: A^* \to X_\Delta^*$ defined in the statement, where
  the space $X^*_{\Delta}$ is seen as the (disjoint) union over $n \ge
  0$ of the spaces $X_{\Delta}^n$, each one of those being a Boolean
  space when equipped with the product topology.
\end{proof}

The reader should now recall the extension of an lp-strain of
languages~$\cV$ to profinite alphabets presented in
Section~\ref{sec:profinite}, and in particular that such extension
yields an embedding $\cV(Y) \rightarrowtail \cl(Y^*)$ for every
profinite alphabet~$Y$.

\begin{corollary}\label{c:6}
  Let $\Gamma$ be a class of sentences and $\Delta \subseteq \log\cQ
  A{x}\cN$ be a Boolean subalgebra.  Then, the Boolean algebra $\Gamma
  \vcirc \Delta$ is a quotient of $\cV_\Gamma(X_{\Delta})$, or
  equivalently, $X_{\Gamma \vcirc \Delta}$ is isomorphic to a closed
  subspace of $X_{\cV_\Gamma(X_\Delta)}$. More precisely, there exist
  commutative diagrams
  \begin{center}
    \begin{tikzpicture}[node distance = 15mm, ->]
      \node (A1) {$\cl(X_\Delta^*)$}; \node[right of = A1, xshift =
      15mm] (XD1) {$\cP(A^*)$}; \node[below of = XD1] (XV1) {${\Gamma
          \vcirc \Delta}$}; \node[below of = A1] (XG1)
      {${\cV_\Gamma(X_\Delta)}$};
      \draw (A1) to node[above, ArrowNode]{ $\tau^{-1}_\Delta$} (XD1);
      \draw[<-<] (XD1) to (XV1); \draw[<-<] (A1) to (XG1); \draw[->>]
      (XG1) to (XV1);
      \node[right of = XD1, xshift = 15mm] (A) {$\beta(A^*)$};
      \node[right of = A, xshift = 15mm] (XD) {$\beta(X_\Delta^*)$};
      \node[below of = XD] (XV) {$X_{\cV_\Gamma(X_\Delta)}$};
      \node[below of = A] (XG) {$X_{\Gamma \vcirc \Delta}$};
      \draw (A) to node[above, ArrowNode]{$\beta\tau_\Delta$} (XD);
      \draw[->>] (XD) to (XV); \draw[->>] (A) to (XG); \draw[>->] (XG)
      to (XV);
    \end{tikzpicture}
  \end{center}
  which are dual to each other.
\end{corollary}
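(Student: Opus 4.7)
The plan is to obtain the quotient $\cV_\Gamma(X_\Delta) \twoheadrightarrow \Gamma \vcirc \Delta$ by assembling the finite-stage substitution quotients $\sigma_{\Delta'}$ via a direct-limit argument, and then to read off both commutative diagrams from Proposition~\ref{p:transduccao} and Stone duality.

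First, by Stone duality the finite continuous quotients of $X_\Delta$ correspond bijectively to the finite Boolean subalgebras $\Delta' \subseteq \Delta$, via $\At(\Delta') = C_{\Delta'}$. Combining this with~\eqref{eq:3} and Lemma~\ref{l:4} I would identify
\[
  \cV_\Gamma(X_\Delta) \ =\ \bigcup_{\Delta' \subseteq \Delta\text{ finite}} \cV_\Gamma(C_{\Delta'}) \ \subseteq\ \cl(X_\Delta^*),
\]
where each $\cV_\Gamma(C_{\Delta'}) = \{L_\phi \mid \phi \in \Gamma(C_{\Delta'})\}$ is canonically isomorphic to $\Gamma(C_{\Delta'})$ via $\phi \mapsto L_\phi$, since formulas are considered up to semantic equivalence. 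Theorem~\ref{t:5} then says that the family $\{\sigma_{\Delta'}: \Gamma(C_{\Delta'}) \twoheadrightarrow \Gamma\vcirc\Delta'\}_{\Delta'}$ is compatible with the transition embeddings $\zeta_\Gamma: \Gamma(C_{\Delta_1}) \rightarrowtail \Gamma(C_{\Delta_2})$ on the left and with the inclusions $\Gamma \vcirc \Delta_1 \subseteq \Gamma \vcirc \Delta_2$ on the right, for $\Delta_1 \subseteq \Delta_2$. Taking the direct limit (which in the setting of Boolean subalgebras along embeddings is simply the union inside $\cQ_A[\cN]$) yields a surjective Boolean homomorphism $\cV_\Gamma(X_\Delta) \twoheadrightarrow \Gamma \vcirc \Delta$, the target being identified with its union description via Definition~\ref{def:circ}.

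For the left-hand commutative diagram, I would apply Proposition~\ref{p:transduccao} at each finite stage, which shows that $\sigma_{\Delta'}$ is the restriction and co-restriction of $\tau_{\Delta'}^{-1}$ to the appropriate Boolean subalgebras. By Corollary~\ref{c:5}, the projective limit of the system $\{\tau_{\Delta'}\}$ is $\tau_\Delta$, and taking inverse images passes to the colimit on the Boolean-algebra side. Consequently the quotient just constructed is precisely the restriction and co-restriction of $\tau_\Delta^{-1}: \cl(X_\Delta^*) \to \cP(A^*)$, giving the required commutative square.

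The right-hand diagram is obtained by dualizing the left-hand one. Since $A^*$ and $X_\Delta^*$ both have \v Cech-dimension zero by Lemma~\ref{l:10}, Theorem~\ref{t:4} identifies the Stone duals of $\cP(A^*) = \cl(A^*)$ and $\cl(X_\Delta^*)$ with $\beta(A^*)$ and $\beta(X_\Delta^*)$, respectively, under which the dual of $\tau_\Delta^{-1}$ is the unique continuous extension $\beta\tau_\Delta$ of $\tau_\Delta$. The remaining arrows in the second square are duals of the two embeddings and of the quotient of the first, and commutativity follows by contravariant functoriality. The only mildly delicate point is verifying that surjectivity of Boolean homomorphisms is preserved under direct limits along embeddings (so that the colimit of the $\sigma_{\Delta'}$ is again surjective), but this is routine: every element of $\Gamma \vcirc \Delta$ already lives at some finite stage $\Gamma \vcirc \Delta'$ and is therefore hit by an element of $\cV_\Gamma(C_{\Delta'})$ via $\sigma_{\Delta'}$.
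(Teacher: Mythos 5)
Your proposal is correct and follows essentially the same route as the paper: both arguments reduce to finite Boolean subalgebras $\Delta'\subseteq\Delta$, use Proposition~\ref{p:transduccao} (equivalently, Corollary~\ref{c:4}) to identify $\sigma_{\Delta'}$ with a restriction of $\tau_{\Delta'}^{-1}$, and then invoke Lemma~\ref{l:4} together with the identity $\tau_{\Delta'}=h_{\Delta'}^*\circ\tau_\Delta$ from Corollary~\ref{c:5} to pass to the limit. The paper phrases this as an element-wise chase showing $\Gamma\vcirc\Delta=\tau_\Delta^{-1}(\cV_\Gamma(X_\Delta))$ rather than assembling the quotient as a colimit of the $\sigma_{\Delta'}$, but the substance is identical.
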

\begin{proof}
  We prove that the left-hand  diagram commutes.  Let $L$ be a
  language of $\Gamma\vcirc \Delta$. By Definition~\ref{def:circ} of
  $\Gamma \vcirc \Delta$, this means that there exists a finite
  Boolean subalgebra $\Delta' \subseteq \Delta$ such that $L$ belongs
  to $\Gamma \vcirc \Delta'$. Denote by $C_{\Delta'}$ the alphabet
  consisting of the atoms of~$\Delta'$. Using the local version of the
  Substitution Principle (cf. Corollary~\ref{c:4}), this is equivalent
  to the existence of some language $K \in \cV_\Gamma(C_{\Delta'})$
  such that $L = \tau_{\Delta'}^{-1}(K)$. But by Lemma~\ref{l:4} and
  Corollary~\ref{c:5}, this amounts to having that $L$ belongs to
  $\tau_\Delta^{-1}(\cV_\Gamma(X_\Delta))$. So, we just proved that
  the image of the following composition
   \begin{center}
     \begin{tikzpicture}[node distance = 30mm, >->]
       \node (A) {$\cV_\Gamma(X_\Delta)$}; \node[right of = A] (B) {$
         \cl(X_\Delta^*)$}; \node[right of = B] (C) {$\cP(A^*)$};
       \draw (A) to (B); \draw (B) to node [above, ArrowNode] {$\tau_\Delta^{-1}$} (C);
    \end{tikzpicture}
  \end{center}
  is precisely $\Gamma \vcirc \Delta$. That is, we have a commutative
  diagram as in the left-hand side of the statement.
\end{proof}

We can now state the already announced global version of the
Substitution Principle, which is simply a rephrasing of the
commutativity of the left-hand  diagram of Corollary~\ref{c:6}.

\begin{corollary}
  [Substitution Principle - global version] \label{cor:SP-global} Let
  $\Gamma$ be a class of sentences and $\Delta \subseteq \log\cQ
  A{x}\cN$ be a Boolean subalgebra.  Then, the languages (over~$A$)
  definable in $\Gamma \vcirc \Delta$ are precisely the languages of
  the form $\tau_\Delta^{-1}(K)$, where $K \in
  \cV_\Gamma(X_{\Delta})$.
\end{corollary}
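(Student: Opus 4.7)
The plan is to read this off as a direct consequence of the left-hand side commutative diagram in Corollary~\ref{c:6}, so there is essentially no new content to prove beyond unpacking what that diagram says about images. Concretely, I would first observe that the statement of the Substitution Principle is precisely the assertion that the image of the composition
\[
\cV_\Gamma(X_\Delta)\rightarrowtail \cl(X_\Delta^*)\xrightarrow{\ \tau_\Delta^{-1}\ }\cP(A^*)
\]
equals the Boolean subalgebra $\Gamma\vcirc\Delta$ of $\cP(A^*)$. The downward arrows on both sides of that diagram are the canonical inclusions, and the top arrow is $\tau_\Delta^{-1}$, so the commutativity established in Corollary~\ref{c:6} says exactly that every language of the form $\tau_\Delta^{-1}(K)$ with $K\in \cV_\Gamma(X_\Delta)$ lies in $\Gamma\vcirc\Delta$, and conversely that every language of $\Gamma\vcirc\Delta$ arises this way.

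To be a little more explicit about why Corollary~\ref{c:6} already gives us both containments, I would trace the chain of equivalences implicit in its proof. A language $L\subseteq A^*$ belongs to $\Gamma\vcirc\Delta$ if and only if, by Definition~\ref{def:circ}, it belongs to $\Gamma\vcirc\Delta'$ for some finite Boolean subalgebra $\Delta'\subseteq\Delta$. By the local Substitution Principle (Corollary~\ref{c:4}), this happens if and only if $L=\tau_{\Delta'}^{-1}(K')$ for some $K'\in\cV_\Gamma(C_{\Delta'})$. Finally, combining Corollary~\ref{c:5} (which factors $\tau_{\Delta'}$ through $\tau_\Delta$ via the canonical projection $X_\Delta^*\twoheadrightarrow C_{\Delta'}^*$) with Lemma~\ref{l:4} (which identifies $\cV_\Gamma(X_\Delta)$ as the union over finite continuous quotients of pulled-back languages from $\cV_\Gamma(C_{\Delta'})$) shows that this is in turn equivalent to $L\in\tau_\Delta^{-1}(\cV_\Gamma(X_\Delta))$.

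The only step requiring a bit of attention is the last compatibility: ensuring that the factorization $\tau_{\Delta'}=\zeta^*\circ\tau_\Delta$ (for $\zeta\colon X_\Delta\twoheadrightarrow C_{\Delta'}$ the dual of $\Delta'\rightarrowtail \Delta$) interacts correctly with pre-images so that $\tau_{\Delta'}^{-1}(K')=\tau_\Delta^{-1}((\zeta^*)^{-1}(K'))$ and that, as $\Delta'$ ranges over finite Boolean subalgebras of $\Delta$ and $K'$ over $\cV_\Gamma(C_{\Delta'})$, the languages $(\zeta^*)^{-1}(K')$ sweep out exactly $\cV_\Gamma(X_\Delta)$. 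Both are immediate from Corollary~\ref{c:5} and Lemma~\ref{l:4} respectively, so no real obstacle arises and the corollary follows.
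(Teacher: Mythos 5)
Your proposal is correct and follows essentially the same route as the paper: the paper's proof of Corollary~\ref{cor:SP-global} is literally the one-line observation that it rephrases the commutativity of the left-hand diagram of Corollary~\ref{c:6}, and the chain of equivalences you trace (Definition~\ref{def:circ}, then Corollary~\ref{c:4}, then Lemma~\ref{l:4} together with Corollary~\ref{c:5}) is exactly the argument the paper gives in the proof of Corollary~\ref{c:6} itself. No gaps.
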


\subsection{Using an alphabet to encode free
  variables}\label{sec:encode}

We now explain how free variables may be encoded in an alphabet,
provided the logic at hand is expressive enough. The idea is that a
formula~$\phi\in \log \cQ A x \cN$ may be regarded as a sentence over
the extended alphabet~$(A \times 2)$, in the same way we may regard
marked words as words over $(A \times 2)$ via the embedding $A^*
\otimes \N \rightarrowtail (A \times 2)^*$.
More generally, given two disjoint contexts~$\x$ and~$\y$, we will
define two functions
\[\varepsilon_{\x, \y} :\log \cQ {A} {\x\y} \cN \to \cQ_{A \times
    2^\x, \y}[\cN] \quad \text{and}\quad \delta_{\x,\y}: \cQ_{A \times
    2^\x, \y}[\cN] \to \log \cQ {A} {\x\y} \cN,\]
which we call, respectively, the \emph{encoding} and the
\emph{decoding} function. As the name suggests, $\varepsilon_{\x \y}$
``encodes'' the variables from $\x$ in the extended alphabet~$A \times
2^\x$, while $\delta_{\x,\y}$ reverts this process (see
Corollary~\ref{c:7} for a precise statement).

Given contexts $\x$ and $\y$, we let
\[\iota_{\x}:A^* \otimes \N^{ \x} \rightarrowtail (A \times
  2^\x)^*\]
denote the natural embedding of $ \x$-marked words into the set of
words over $(A \times 2^ \x)$, and
\[\iota_{\x,\y}: A^* \otimes \N^{ {\x\y}} \rightarrowtail (A \times 2^\x)^*
  \otimes \N^{ \y}\]
be defined by $\iota_{\x, \y}(w, {\bf i}, {\bf j}) = (\iota_\x(w, {\bf
  i}), {\bf j})$, for every $(w, {\bf i}, {\bf j}) \in A^* \otimes
\N^{ {\x\y}}$.

In order to keep the notation simple, we first consider the case where
$\x$ consists of a single variable~$x$.

To define the encoding function, we need to assume that the quantifier
$\exists!$ (\emph{there exists a unique}) and the numerical predicate
$=$ (formally given by $\{(n,n) \mid n \in \mathbb N\}$) are
expressible in our logic. This assumption is needed so that we are
able to define the set of all marked words $A^* \otimes \N$ as a
language over $(A \times 2)$. Indeed, one can easily check that the
models of the sentence
\begin{equation}
  \Phi= \exists !  z\ \bigvee_{a \in A} \tP_{(a,1)}(z)\label{eq:28}
\end{equation}
are precisely the words of $(A \times 2)^*$ having exactly one letter
of the form $(a, 1)$, that is, those that belong to the image of
$\iota_x: A^* \otimes \N \rightarrowtail (A \times 2)^*$. More
generally, if $\Phi$ is seen as a formula in a given context $\y$,
then its models are the $\y$-marked words in the image of $\iota_{x,
  \y}: A^* \otimes \N^{ {x\y}} \rightarrowtail (A \times 2)^* \otimes
\N^{ \y}$.

\begin{definition}\label{sec:enc}
  Let $\phi$ be a formula in context $x\y$ over the alphabet $A$, and
  assume that all occurrences of~$x$ in~$\phi$ are free. We denote by
  $\phi^x$ the formula obtained from~$\phi$ by:
  \begin{itemize}
  \item replacing each predicate $\tP_a(x)$ by $\exists z\
    \tP_{(a,1)}(z)$,
  \item replacing each predicate $\tP_a(z)$ by $\tP_{(a,1)} (z) \vee
    \tP_{(a, 0)}(z)$, if $z \neq x$,
  \item replacing each predicate $R(\_, x, \_)$ by $\exists z \
    (\bigvee_{a \in A}\tP_{(a,1)}(z) \wedge R(\_, z,\_))$.
  \end{itemize}

\end{definition}

By a routine structural induction on the construction of a formula, we
can show the following:
\begin{lemma}\label{l:3}
  Let $\phi$ be a formula in context $x\y$ over the alphabet
  $A$. Then, for every $w \in A^*$, $i \in \{0, \dots, \card w - 1\}$,
  and ${\bf j} \in \{0, \dots,\card w - 1\}^{\card \y}$, we have
  \[(w, i, {\bf j}) \models \phi \iff \iota_{x, \y}(w, i,{\bf j})
    \models \phi^x.\]
\end{lemma}
\begin{corollary}\label{c:1}
  Let $\Phi$ be the formula defined in~\eqref{eq:28}. Then, there is a
  well-defined lattice homomorphism
  \[\varepsilon_{x, \y}: \log \cQ A {x\y} \cN \to \log \cQ {A\times 2}
    \y \cN, \qquad \phi \mapsto \phi^x \wedge \Phi,\]
  which makes the following diagram commute:
  \begin{center}
    \begin{tikzpicture}[node distance = 20mm, ->]
      \node (A) {$ \log \cQ A {x\y} \cN$}; \node[right of = A, xshift
      = 20mm] (B) {$\cP(A^* \otimes \N^{{x \y}})$}; \node[below
      of = A] (C) {$\log \cQ {A \times 2} {\y} \cN$}; \node[below of = B] (D)
      {$\cP((A \times 2)^* \otimes \N^{ \y})$};
      \draw[>->] (A) to (B); \draw[>->] (C) to (D); \draw (A) to
      node[left,ArrowNode] {$\varepsilon_{x,\y}$} (C); \draw[->] (B)
      to node[right,ArrowNode] {$\iota_{x,\y}[\_]$} (D);
    \end{tikzpicture}
  \end{center}
  That is, for every $\phi \in \log \cQ A {x\y} \cN$, we have
  $L_{\varepsilon_{x,\y}(\phi)} = \iota_{x,\y}[L_{\phi}]$.  
\end{corollary}
\begin{proof}
  We first note that $\varepsilon_{x, \y}$ is well-defined, that is,
  that $\phi^x \wedge \Phi$ and $(\phi')^x \wedge \Phi$ are
  semantically equivalent formulas provided so are $\phi$ and
  $\phi'$. Indeed, since the models of $\Phi$, when seen as a formula
  in the context $\y$, are the $\y$-marked words over $(A \times 2)$
  that belong to the image of $\iota_{x, \y}$, that is a consequence
  of Lemma~\ref{l:3}. It is easily seen that $\varepsilon_{x, \y}$ is
  a lattice homomorphism. Commutativity of the diagram also follows
  from Lemma~\ref{l:3}.
\end{proof}
We call \emph{encoding function} (with respect to~$x, \y$) the map
$\varepsilon_{x \y}: \log \cQ A {x\y} \cN \to \log \cQ {A\times 2} \y
\cN$ of Corollary~\ref{c:1}. Note that, if $\y'$ is a context
containing $\y$ (so that $ \log \cQ A {x\y} \cN \subseteq \log \cQ {A}
{x\y'} \cN$ and $ \log \cQ {A \times 2} {\y} \cN \subseteq \log \cQ
{A\times 2} {\y'} \cN$) then $\varepsilon_{x, \y}$ is the restriction
and co-restriction of $\varepsilon_{x, \y'}$.

Conversely, a formula $\psi$ over~$(A \times 2)$ in context~$\y$ may
be regarded as a formula over~$A$ in context~$x\y$ as follows:

\begin{definition}
  \label{sec:dec}
  Let $\psi$ be a formula in context $\y$ over the alphabet $A \times
  2$. We denote by $\psi_x$ the formula obtained from $\psi$ by
  replacing:
  \begin{itemize}
  \item each predicate $\tP_{(a,1)}(z)$ by $\tP_a(z) \wedge (x = z)$,
  \item each predicated $\tP_{(a, 0)}(z)$ by $\tP_a(z) \wedge (x \neq
    z)$.
  \end{itemize}
  Of course, we are assuming that things have been arranged so that
  the variable~$x$ does not occur in~$\psi$.
\end{definition}

Again, a structural induction on construction of formulas shows the
following:
\begin{lemma}\label{l:9}
  Let $\psi$ be a formula in context $\y$ over the alphabet $(A \times
  2)$. Then, for every $w \in A^*$, $i \in \{0, \dots, \card w - 1\}$,
  and ${\bf j} \in \{0, \dots,\card w - 1\}^{\card \y}$, we have
  \[(w, i,{\bf j}) \models \psi_x \iff \iota_{x, \y}(w, i, {\bf j})
    \models \psi.\]
\end{lemma}

The next result is a straightforward consequence of Lemma~\ref{l:9}.
\begin{corollary}  \label{c:2}
  There is a well-defined Boolean algebra homomorphism
  \[\delta_{x,\y}: \log \cQ {A\times 2} \y \cN \to \log \cQ A {x\y}
    \cN, \qquad\psi \mapsto \psi_x,\]
  which makes the following diagram commute:
  \begin{center}
    \begin{tikzpicture}[node distance = 20mm, ->]
      \node (A) {$ \log \cQ {A\times 2} {\y} \cN$}; \node[right of =
      A, xshift = 20mm] (B) {$\cP((A \times 2)^* \otimes \N^{
          \y})$}; \node[below of = A] (C) {$\log \cQ A {x\y} \cN$};
      \node[below of = B] (D) {$\cP(A^* \otimes \N^{ {x\y}})$};
      \draw[>->] (A) to (B); \draw[>->] (C) to (D); \draw (A) to
      node[left,ArrowNode] {$\delta_{x,\y}$} (C); \draw[->] (B) to
      node[right,ArrowNode] {$\iota_{x,\y}^{-1}$} (D);
    \end{tikzpicture}
  \end{center}
  That is, for every $\psi \in \log \cQ {A\times 2} {\y} \cN$, we have
  $L_{\delta_{x,\y} (\phi)} = \iota_{x,\y}^{-1}(L_{\phi})$.
\end{corollary}
The \emph{decoding function} (with respect to $x, \y$) is then the map
$\delta_{x,\y}: \log \cQ {A\times 2} \y \cN \to \log \cQ A {x\y} \cN$
defined in Corollary~\ref{c:2}.  Similarly to what happens
for~$\varepsilon_{x, \y}$, if $\y'$ is a context containing $\y$, then
$\delta_{x, \y}(\psi)$ is the restriction and co-restriction
of~$\delta_{x, \y'}(\psi)$.

Let us now define $\varepsilon_{\x, \y}$ and $\delta_{\x, \y}$, for
all contexts~$\x$ and~$\y$. We write $\x = \{x_1, \dots, x_k\}$ and,
for each $i \in \{0, \dots, k\}$, we let $\x_i$ denote the context
$\{x_{1}, \dots, x_i\}$ (in particular, $\x_0$ denotes the empty
context and $\x_k = \x$). Then, for each $i \in \{1, \dots, k\}$, we
have defined
\[\varepsilon_{x_i, \x_{i-1}\y}: \log \cQ
  {A\times 2^{k-i}} {\x_i\y} \cN \to \log \cQ {A\times 2^{k-i+1}}
  {\x_{i-1}\y} \cN\] and
\[\delta_{x_i, \x_{i-1}\y}:\log \cQ {A\times 2^{k-i+1}} {\x_{i-1}\y} \cN
  \to \log \cQ {A\times 2^{k-i}} {\x_i\y} \cN.\]
An iteration of these maps yields two functions
\[
  \varepsilon_{\x, \y}: \log \cQ {A} {\x\y} \cN \to \cQ_{A \times
    2^\x, \y}[\cN] \quad \text{and}\quad \delta_{\x,\y}: \cQ_{A \times
    2^\x, \y}[\cN] \to \log \cQ {A} {\x\y} \cN,\]
which are given, respectively, by $\varepsilon_{\x, \y} =
\varepsilon_{x_1,\x_0\y} \circ \dots \circ
\varepsilon_{x_k,\x_{k-1}\y}$ and by $\delta_{\x,\y} =
\delta_{x_k,\x_{k-1}\y} \circ \dots \circ \delta_{x_1,\x_0\y}$. In
particular, $\varepsilon_{\x, \y}$ is a lattice homomorphism, while
$\delta_{\x, \y}$ is a Boolean algebra homomorphism.
Moreover, when $\x$ consists of a single variable~$x$, the functions
$\varepsilon_{x, \y}$ and $\delta_{x, \y}$ are precisely those
previously defined.

By Corollaries~\ref{c:1} and~\ref{c:2}, we have:

\begin{proposition}\label{p:2}
  Let $\x$ and $\y$ be disjoint contexts. Then, the following diagrams
  commute:
  \begin{center}
    \begin{tikzpicture}[node distance = 20mm, ->]
      \node (A) {$ \log \cQ A {\x\y} \cN$}; \node[right of = A, xshift
      = 20mm] (B) {$\cP(A^* \otimes \N^{ {\x\y}} )$}; \node[below
      of = A] (C) {$\log \cQ {A \times 2^\x} {\y} \cN$}; \node[below
      of = B] (D) {$\cP((A \times 2^\x)^* \otimes \N^{ \y})$};
      \draw[>->] (A) to (B); \draw[>->] (C) to (D); \draw (A) to
      node[left,ArrowNode] {$\varepsilon_{\x,\y}$} (C); \draw[->] (B)
      to node[right,ArrowNode] {$\iota_{\x,\y}[\_]$} (D);
      %
      %
      \node[right of = B, xshift = 20mm] (A') {$ \log \cQ {A\times
          2^\x} {\y} \cN$}; \node[right of = A', xshift = 20mm] (B')
      {$\cP((A \times 2^\x)^* \otimes \N^{\y})$}; \node[below of
      = A'] (C') {$\log \cQ {A} {\x\y} \cN$}; \node[below of = B']
      (D') {$\cP(A^* \otimes \N^{ {\x\y}})$};
      \draw[>->] (A') to (B'); \draw[>->] (C') to (D'); \draw (A') to
      node[left,ArrowNode] {$\delta_{\x,\y}$} (C'); \draw[->] (B') to
      node[right,ArrowNode] {$\iota_{\x,\y}^{-1}$} (D');
    \end{tikzpicture}
  \end{center}
\end{proposition}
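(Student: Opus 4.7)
The plan is to reduce Proposition~\ref{p:2} to its one-variable counterparts, Lemmas~\ref{l:3} and~\ref{l:9}, by induction on $k := \card \x$, following the inductive definitions of $\varepsilon_{\x,\y}$ and $\delta_{\x,\y}$. Writing $\x = \{x_1,\dots,x_k\}$ and $\x_i = \{x_1,\dots,x_i\}$ as in the paper, I would introduce for each $i \in \{0,\dots,k\}$ the intermediate alphabet $B_i := A \times 2^{\{x_{i+1},\dots,x_k\}}$, so that $B_k = A$ and $B_0 = A \times 2^{\x}$. The point is that at stage $i$ of the composition we are looking at formulas over $B_i$ in context $\x_i\y$, and the jump $i \rightsquigarrow i-1$ consists of encoding the single variable $x_i$ into the alphabet, turning $B_i$ into $B_{i-1}$. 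Crucially, the natural embedding $\iota_{\x,\y}$ factors accordingly as
\[
A^* \otimes \N^{\card{\x\y}}
\xhookrightarrow{\iota^{(k)}}
B_{k-1}^* \otimes \N^{\card{\x_{k-1}\y}}
\xhookrightarrow{\iota^{(k-1)}} \cdots \xhookrightarrow{\iota^{(1)}}
(A\times 2^{\x})^* \otimes \N^{\card \y},
\]
where each $\iota^{(i)} = \iota_{x_i,\x_{i-1}\y}$ is a single-variable encoding over the alphabet $B_i$ (and $2^\x = \prod_{j=1}^k 2^{\{x_j\}}$ is used to reshuffle $B_0$ to $A \times 2^\x$, which is a trivial bijection on alphabets). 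This factorization follows directly from the definitions of the embeddings $A^* \otimes \N^{\card \x} \hookrightarrow (A \times 2^\x)^*$ and their marked-word variants.

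Granted this factorization, for the $\varepsilon$ diagram I would apply Lemma~\ref{l:3} at each stage with alphabet $B_i$ and context $(x_i,\x_{i-1}\y)$ to get a commutative square
\begin{center}
\begin{tikzpicture}[node distance = 15mm, ->]
  \node (A) {$\log \cQ {B_i} {\x_i\y} \cN$};
  \node[right of = A, xshift = 25mm] (B) {$\cP(B_i^* \otimes \N^{\card{\x_i\y}})$};
  \node[below of = A] (C) {$\log \cQ {B_{i-1}} {\x_{i-1}\y} \cN$};
  \node[below of = B] (D) {$\cP(B_{i-1}^* \otimes \N^{\card{\x_{i-1}\y}})$};
  \draw[>->] (A) to (B);
  \draw[>->] (C) to (D);
  \draw (A) to node[left, ArrowNode] {$\varepsilon_{x_i,\x_{i-1}\y}$} (C);
  \draw[->] (B) to node[right, ArrowNode] {$\iota^{(i)}[\_]$} (D);
\end{tikzpicture}
\end{center}
Stacking these $k$ squares vertically and using that the vertical composites are exactly $\varepsilon_{\x,\y}$ on the left and $\iota_{\x,\y}[\_]$ on the right gives the desired commutativity. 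The argument for $\delta_{\x,\y}$ is entirely dual: one uses Lemma~\ref{l:9} stage by stage, but now reading the composition in the opposite order, namely $\delta_{\x,\y} = \delta_{x_k,\x_{k-1}\y} \circ \cdots \circ \delta_{x_1,\x_0\y}$, which matches the reverse of the factorization of $\iota_{\x,\y}$ and corresponds to the inverse-image functor $\iota_{\x,\y}^{-1}$ being the composition of the single-step inverse-image functors $(\iota^{(i)})^{-1}$.

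Formally, this is clean induction on $k$: the base case $k=1$ is exactly Lemma~\ref{l:3} (resp.\ Lemma~\ref{l:9}), and the inductive step splits the diagram into a top square handled by the one-variable lemma applied to the alphabet $A$ with context $(x_k, \x_{k-1}\y)$, and a bottom square handled by the inductive hypothesis applied to the context $\x_{k-1}$ (disjoint from $\y$) over the extended alphabet $A \times 2$. The only step that requires any care, and which I would consider the main (minor) obstacle, is writing out the factorization of the embeddings $\iota_{\x,\y}$ and verifying that the intermediate marked-word embeddings agree with the single-variable $\iota^{(i)}$ applied over the alphabets $B_i$; once this bookkeeping is done, the commutativity of the big diagram is a pure pasting of commutative squares and involves no further logical content beyond the one-variable lemmas.
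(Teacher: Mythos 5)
Your proposal is correct and matches the paper's approach: the paper derives Proposition~\ref{p:2} directly from Lemmas~\ref{l:3} and~\ref{l:9} by composing the single-variable squares, exactly the pasting/induction you describe (the paper just leaves the bookkeeping of the intermediate alphabets $A\times 2^{k-i}$ and the factorization of $\iota_{\x,\y}$ implicit). Your writeup is a faithful, more detailed elaboration of that same argument.
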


An immediate, but important, consequence of Proposition~\ref{p:2} is
the following:
\begin{corollary}\label{c:7}
  For every $\phi\in \log \cQ A {\x\y} \cN$ and $\psi\in\log \cQ
  {A\times 2^\x} {\y} \cN$, we have the following:
  \begin{enumerate}
  \item $\phi$ is semantically equivalent to
    $\delta_{\x,\y}\varepsilon_{\x,\y}(\phi)$,
  \item the models of $\psi$ that belong to $\im(\iota_{\x,\y})$ are
    precisely the models of $\varepsilon_{\x,\y}\delta_{\x,\y}(\psi)$.
  \end{enumerate}
  In particular, $\varepsilon_{\x, \y}$ is a lattice embedding, and
  $\delta_{\x, \y}$ is a quotient of Boolean algebras.
\end{corollary}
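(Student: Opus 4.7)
The plan is to derive both statements as direct consequences of Proposition~\ref{p:2}, by passing to the semantic side and exploiting the fact that $\iota_{\x,\y}: A^* \otimes \N^{\card{\x\y}} \rightarrowtail (A \times 2^\x)^* \otimes \N^{\card \y}$ is an injection. The two commutative squares of Proposition~\ref{p:2} translate, respectively, into the identities $L_{\varepsilon_{\x,\y}(\phi)} = \iota_{\x,\y}[L_\phi]$ and $L_{\delta_{\x,\y}(\psi)} = \iota_{\x,\y}^{-1}(L_\psi)$ for $\phi \in \log \cQ A {\x\y} \cN$ and $\psi \in \log \cQ {A \times 2^\x} \y \cN$. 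Since semantic equivalence of formulas is by definition the equality of their model sets, proving (i) and (ii) reduces to set-theoretic manipulations with $\iota_{\x,\y}$.

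For (i), I would compose the two identities to compute
\[
L_{\delta_{\x,\y}\varepsilon_{\x,\y}(\phi)} = \iota_{\x,\y}^{-1}\bigl(L_{\varepsilon_{\x,\y}(\phi)}\bigr) = \iota_{\x,\y}^{-1}\bigl(\iota_{\x,\y}[L_\phi]\bigr) = L_\phi,
\]
the last equality holding because $\iota_{\x,\y}$ is injective. For (ii), reversing the order of composition gives
\[
L_{\varepsilon_{\x,\y}\delta_{\x,\y}(\psi)} = \iota_{\x,\y}\bigl(\iota_{\x,\y}^{-1}(L_\psi)\bigr) = L_\psi \cap \im(\iota_{\x,\y}),
\]
which is exactly the statement that the models of $\varepsilon_{\x,\y}\delta_{\x,\y}(\psi)$ are precisely the models of $\psi$ lying in $\im(\iota_{\x,\y})$.

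For the final assertion, I would invoke (i): the equality $\delta_{\x,\y} \circ \varepsilon_{\x,\y} = \mathrm{id}$ (modulo semantic equivalence) shows that $\varepsilon_{\x,\y}$ has a left inverse and $\delta_{\x,\y}$ has a right inverse. Combined with the fact that $\varepsilon_{\x,\y}$ is already known to be a lattice homomorphism and $\delta_{\x,\y}$ a Boolean algebra homomorphism, this yields that $\varepsilon_{\x,\y}$ is a lattice embedding and $\delta_{\x,\y}$ is a quotient of Boolean algebras. No obstacle is anticipated here; the entire argument is essentially a bookkeeping exercise once Proposition~\ref{p:2} is in hand, and the only content is the observation that the image/preimage operations for an injection behave as $\iota^{-1}\circ \iota[\_] = \mathrm{id}$ and $\iota[\iota^{-1}(\_)] = (\_) \cap \im(\iota)$.
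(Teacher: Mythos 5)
Your proposal is correct and matches the paper's intent exactly: the paper states Corollary~\ref{c:7} as an ``immediate consequence'' of Proposition~\ref{p:2} without writing out the details, and the details are precisely the bookkeeping you supply, namely $\iota_{\x,\y}^{-1}\circ\iota_{\x,\y}[\_]=\mathrm{id}$ by injectivity, $\iota_{\x,\y}[\iota_{\x,\y}^{-1}(\_)]=(\_)\cap\im(\iota_{\x,\y})$, and the standard inference from $\delta_{\x,\y}\circ\varepsilon_{\x,\y}=\mathrm{id}$ that $\varepsilon_{\x,\y}$ is injective and $\delta_{\x,\y}$ surjective.
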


We finally show that it suffices to consider substitution with respect
to a single variable.

\begin{proposition}\label{p:3}
  Let $\Gamma$ be a class of sentences, and let $\Delta \subseteq
  \log\cQ A{\x x}\cN$ be a finite Boolean subalgebra. Then, there
  exist a finite Boolean subalgebra $\Delta' \subseteq \log\cQ {A
    \times 2^\x} x \cN$, and an embedding $\zeta:\At(\Delta)
  \rightarrowtail \At(\Delta')$, for which the following diagram
  commutes:
  \begin{center}
    \begin{tikzpicture}[node distance = 40mm]
      \node (A) at (0,0) {$\Gamma(C_{\Delta'})$}; \node[right of = A]
      (B) {$\cQ_{A \times 2^\x}[\cN]$}; \node[below of = A, yshift =
      25mm] (C) {$\Gamma(C_\Delta)$}; \node[right of = C] (D)
      {$\cQ_{A, \x}[\cN]$};
      \draw[->] (A) to node[above, ArrowNode] {$\sigma_{\Delta'}$}
      (B); \draw[->>] (A) to node[left, ArrowNode] {$\zeta_\Gamma$}
      (C); \draw[->] (C) to node[below, ArrowNode] {$\sigma_\Delta$}
      (D); \draw[->] (B) to node[right, ArrowNode] {$\delta_{\x,
          \emptyset}$} (D);
    \end{tikzpicture}
  \end{center}
  where $C_{\Delta} = \At(\Delta)$ and $C_{\Delta'} =
  \At(\Delta')$. In particular, since $\zeta_\Gamma$ is surjective, we
  have:
  \[\Gamma \vcirc \Delta = \delta_{\x, \emptyset}[\Gamma \vcirc
    \Delta'].\]
\end{proposition}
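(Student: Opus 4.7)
The plan is to reduce everything to a word-level compatibility identity between the transducers $\tau_\Delta,\tau_{\Delta'}$ and the encoding embedding $\iota_{\x,\emptyset}$, and then transport commutativity through the local Substitution Principle (Corollary~\ref{c:4}) and Proposition~\ref{p:2}.

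First I would analyse the atoms of $\Delta':=\langle\varepsilon_{\x,x}[\Delta]\rangle_{\sf BA}$. The key observation is that every element of $\varepsilon_{\x,x}[\Delta]$ lies below the iterated ``valid encoding'' sentence $\Phi_\x$ built into the definition of $\varepsilon_{\x,x}$, while by Corollary~\ref{c:7}(a) the restriction of $\varepsilon_{\x,x}$ to $\Delta$ is a lattice embedding. A short semantic argument---two marked words over $A\times 2^\x$ agree on all generators of $\Delta'$ either (i)~because both satisfy $\Phi_\x$ and, via $\iota_{\x,x}$, correspond to marked words over $A$ in the same atom of $\Delta$, or (ii)~because neither satisfies $\Phi_\x$---shows that $\At(\Delta')$ consists of the images $\varepsilon_{\x,x}(\phi_c)$ for $c\in C_\Delta$ together with a single extra atom $\neg\Phi_\x$ (if nonzero). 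I then define $\zeta\colon C_\Delta\rightarrowtail C_{\Delta'}$ by $\zeta(c)=\varepsilon_{\x,x}(\phi_c)$.

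The core of the proof is the word-level identity
\[
\tau_{\Delta'}\circ\iota_{\x,\emptyset}\;=\;\zeta^*\circ\tau_\Delta\colon A^*\otimes\N^{\card\x}\to C_{\Delta'}^*,
\]
verified pointwise. Both sides produce a word of length $\card w$; for the $i$-th letter, $\tau_{\Delta'}(\iota_{\x,\emptyset}(w))_i$ is the unique atom of $\Delta'$ satisfied by $(\iota_{\x,\emptyset}(w),i)$. Since $\iota_{\x,\emptyset}(w)\models\Phi_\x$, this atom cannot be $\neg\Phi_\x$ and must equal $\varepsilon_{\x,x}(\phi_c)$ for some $c\in C_\Delta$; by Proposition~\ref{p:2}, this holds iff $(w,i)\models\phi_c$, i.e., iff $c=\tau_\Delta(w)_i$. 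Dualising gives $\iota_{\x,\emptyset}^{-1}\circ\tau_{\Delta'}^{-1}=\tau_\Delta^{-1}\circ(\zeta^*)^{-1}$ on powersets.

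Commutativity of the diagram (up to semantic equivalence) then reduces to a language computation: for $\psi\in\Gamma(C_{\Delta'})$, Corollary~\ref{c:4}, Proposition~\ref{p:2}, and diagram~\eqref{eq:6} give
\[
L_{\delta_{\x,\emptyset}\sigma_{\Delta'}(\psi)}=\iota_{\x,\emptyset}^{-1}\tau_{\Delta'}^{-1}(L_\psi)=\tau_\Delta^{-1}(\zeta^*)^{-1}(L_\psi)=L_{\sigma_\Delta\zeta_\Gamma(\psi)}.
\]
The final equality $\Gamma\vcirc\Delta=\delta_{\x,\emptyset}[\Gamma\vcirc\Delta']$ then follows once $\zeta_\Gamma$ is known to be surjective, which is automatic from injectivity of $\zeta$: picking any retraction $\zeta'\colon C_{\Delta'}\to C_\Delta$ of $\zeta$ and invoking contravariant functoriality gives $\zeta_\Gamma\circ\zeta'_\Gamma={\rm id}_{\Gamma(C_\Delta)}$. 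The main subtlety I anticipate is the atom analysis of $\Delta'$, specifically ensuring the extra atom $\neg\Phi_\x$ causes no trouble; it does not, precisely because $\iota_{\x,\emptyset}(w)$ always satisfies $\Phi_\x$, so the extra atom contributes nothing to the preimage computation.
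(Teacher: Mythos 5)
Your proof is correct, and while it lands on the same description of $\At(\Delta')$ and the same map $\zeta$, it reaches commutativity of the square by a genuinely different route. For the atom analysis you partition the models of the generators of $\Delta'$ according to membership in $\im(\iota_{\x,x})$ (your $\neg\Phi_\x$ is the paper's $\neg\varepsilon_{\x,x}(\mathbf 1)$), whereas the paper argues order-theoretically: given $\mathbf 0<\psi\le\varepsilon_{\x,x}(\phi)$ it applies the Boolean homomorphism $\delta_{\x,x}$ and Proposition~\ref{p:2} to force $\psi\equiv\varepsilon_{\x,x}(\phi)$; both arguments are sound, though yours is only sketched at this point. The more substantial divergence is in the square itself: the paper's argument is syntactic, tracing what $\sigma_{\Delta'}$ followed by $\delta_{\x,\emptyset}$ does to each letter predicate $\tP_c(z)$ and matching the result against $\sigma_\Delta\circ\zeta_\Gamma$ directly on formulas, while you prove the word-level transducer identity $\tau_{\Delta'}\circ\iota_{\x,\emptyset}=\zeta^*\circ\tau_\Delta$ and then transport it through Proposition~\ref{p:transduccao} (equivalently Corollary~\ref{c:4}), Proposition~\ref{p:2}, and diagram~\eqref{eq:6}. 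Since formulas are taken up to semantic equivalence, computing the defined languages is exactly what ``commutes up to semantic equivalence'' means, so your argument is complete; its advantage is that it reuses the already-established dual descriptions of all four arrows instead of re-running a substitution-tracing induction, and the identity $\tau_{\Delta'}\circ\iota_{\x,\emptyset}=\zeta^*\circ\tau_\Delta$ is a reusable fact in its own right. Your explicit derivation of the surjectivity of $\zeta_\Gamma$ from a retraction of $\zeta$ and the functoriality built into~\ref{item:LC2} is also a point in your favour: the paper asserts surjectivity without comment, and the retraction argument is the right justification for an arbitrary class of sentences $\Gamma$.
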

\begin{proof}
  We take $\Delta' = \{\psi \in \log\cQ {A \times 2^\x} x \cN \mid
  \delta_{\x x}(\psi) \in \Delta\}$. Since $\delta_{\x x}$ is a
  Boolean algebra homomorphism, $\Delta'$ is a Boolean
  algebra. Moreover, $\delta_{\x x}$ restricts and co-restricts to a
  Boolean algebra quotient $\delta_{\x x}': \Delta' \twoheadrightarrow
  \Delta$. We let $\zeta: \At(\Delta) \rightarrowtail \At(\Delta')$ be
  its dual. Let us show that the diagram of the statement
  commutes. Let $w \in A^* \otimes \N^{ \x}$ and $\phi \in
  \Gamma(C_{\Delta'})$. Then, we have
  \[w \models \sigma_\Delta \circ \zeta_\Gamma(\phi) \just \iff
    {\text{Proposition~\ref{p:transduccao}}} \tau_\Delta(w) \models
    \zeta_\Gamma(\phi) \just \iff {\eqref{eq:6}} \zeta^* \circ
    \tau_\Delta(w) \models \phi,\] and
  \[w \models \delta_{\x, \emptyset} \circ \sigma_{\Delta'}(\phi)
    \just \iff {\text{Proposition~\ref{p:2}}} \iota_{\x}(w) \models
    \sigma_{\Delta'}(\phi)
    \just\iff{\text{Proposition~\ref{p:transduccao}}}
    \tau_{\Delta'}\circ \iota_{\x}(w) \models \phi.\]
  Therefore, the diagram commutes provided the equality
  \begin{equation}
    \label{eq:33}
    \zeta^* \circ \tau_\Delta(w) = \tau_{\Delta'}\circ \iota_{\x}(w)
  \end{equation}
  holds. We first note that both sides of this equality are words over
  $C_{\Delta'} = \At(\Delta')$ of length $\card w$. So, we fix $i \in
  \{0, \dots, \card w -1\}$. We let $c_\ell$ and $c_r$ denote,
  respectively, the $i$-th letter of the left and of the right-hand
  side of~\eqref{eq:33}. We also let $\psi_{c_\ell}$ and $\psi_{c_r}$
  denote the atoms of~$\Delta'$ represented by the letters $c_\ell$
  and $c_r$, respectively. Now, since $\zeta$ is dual to $\delta_{\x,
    x}$, we have that $\psi_{c_\ell}$ is determined by the condition
  $(\tau_\Delta(w))_i \leq \delta_{\x, x}(\psi_{c_\ell})$ where,
  by~\eqref{eq:5}, $(\tau_\Delta(w))_i$ the unique atom of~$\Delta$
  satisfying $(w,i)\models (\tau_\Delta(w))_i$. In particular, we have
  $(w, i) \models \delta_{\x, x}(\psi_{c_\ell})$. On the other hand,
  by~\eqref{eq:5}, $\psi_{c_r}$ is the unique atom of~$\Delta'$
  satisfying $(\iota_{\x}(w), i) \models \psi_{c_r}$. Since
  $(\iota_{\x}(w), i) = \iota_{\x, x}(w,i)$, by Proposition~\ref{p:2},
  this is equivalent to $(w, i) \models \delta_{\x,
    x}(\psi_{c_r})$. Thus, we have $\psi_{c_\ell} = \psi_{c_r}$ as
  required.
\end{proof}

Using the definition of~$(\Gamma\vcirc \Delta)$ when $\Delta$ is an
arbitrary Boolean algebra, we may easily derive the following result
from Proposition~\ref{p:3} and from its proof:
\begin{corollary}\label{c:8}
  Let $\Gamma$ be a class of sentences, and let $\Delta \subseteq
  \log\cQ A{\x x}\cN$ be a Boolean subalgebra. Then, there exists a
  Boolean subalgebra $\Delta' \subseteq \log\cQ {A \times 2^\x} x
  \cN$, namely, $\Delta' = \{\psi \in \log\cQ {A \times 2^\x} x \cN
  \mid \delta_{\x, x}(\psi) \in \Delta\}$, such that
  \[\Gamma \vcirc \Delta = \delta_{\x, \emptyset}[\Gamma \vcirc
    \Delta'].\]
\end{corollary}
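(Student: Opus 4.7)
The plan is to deduce Corollary~\ref{c:8} directly from Proposition~\ref{p:3} by exploiting the definition of $(\_\vcirc\_)$ as a direct union over finite Boolean subalgebras (Definition~\ref{def:circ}). Since the only new ingredient is going from the finite to the arbitrary case, and since both $\varepsilon_{\x,x}$ and $\delta_{\x,\emptyset}$ behave well with respect to directed unions, the argument should be essentially bookkeeping.

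Concretely, I would proceed as follows. First, set $\Delta' = \langle\varepsilon_{\x,x}[\Delta]\rangle_{\sf BA}$ and, for each finite Boolean subalgebra $\Delta_0\subseteq\Delta$, let $\Delta_0' := \langle\varepsilon_{\x,x}[\Delta_0]\rangle_{\sf BA}$, which is a finite Boolean subalgebra of $\Delta'$. I would then show that
\[
\Delta' \;=\; \bigcup\bigl\{\Delta_0' \,\bigm|\, \Delta_0\subseteq\Delta\text{ finite Boolean subalgebra}\bigr\},
\]
by noting that an element of $\Delta'$ is a (finite) Boolean combination of elements of $\varepsilon_{\x,x}[\Delta]$, hence already lies in $\Delta_0'$ as soon as $\Delta_0$ contains the (finitely many) preimages involved.

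Next, I would apply Proposition~\ref{p:3} to each finite $\Delta_0\subseteq\Delta$ to obtain
\[
\Gamma\vcirc\Delta_0 \;=\; \delta_{\x,\emptyset}[\Gamma\vcirc\Delta_0'],
\]
and then take the union over all such $\Delta_0$. On the left-hand side, Definition~\ref{def:circ} gives $\bigcup_{\Delta_0}\Gamma\vcirc\Delta_0 = \Gamma\vcirc\Delta$. On the right-hand side, since $\delta_{\x,\emptyset}$ is a (set) function, it commutes with unions, so one obtains $\delta_{\x,\emptyset}\bigl[\bigcup_{\Delta_0}\Gamma\vcirc\Delta_0'\bigr]$. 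The remaining step is to identify $\bigcup_{\Delta_0}\Gamma\vcirc\Delta_0'$ with $\Gamma\vcirc\Delta'$: the inclusion $\subseteq$ follows from Corollary~\ref{c:3} (each $\Delta_0'$ is a finite Boolean subalgebra of $\Delta'$), while the inclusion $\supseteq$ uses that, by the displayed equality above, every finite Boolean subalgebra $\Delta_1'\subseteq\Delta'$ is contained in some $\Delta_0'$, and hence $\Gamma\vcirc\Delta_1'\subseteq\Gamma\vcirc\Delta_0'$ by Corollary~\ref{c:3} again. Chaining these equalities yields $\Gamma\vcirc\Delta = \delta_{\x,\emptyset}[\Gamma\vcirc\Delta']$, as required.

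The only mildly delicate point is the cofinality argument showing that $\{\Delta_0'\}$ is cofinal in the poset of finite Boolean subalgebras of $\Delta'$; apart from this, the proof is a direct transfer of Proposition~\ref{p:3} along the direct limit defining $\Gamma\vcirc\Delta$. No new logical or topological input is needed.
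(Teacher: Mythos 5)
Your proposal is correct and follows exactly the route the paper intends: Corollary~\ref{c:8} is stated there without proof, as an easy consequence of Proposition~\ref{p:3} together with Definition~\ref{def:circ} of $(\_\vcirc\_)$ as a directed union over finite Boolean subalgebras. Your cofinality argument identifying $\bigcup_{\Delta_0}\Gamma\vcirc\Delta_0'$ with $\Gamma\vcirc\Delta'$ (via Corollary~\ref{c:3}) is precisely the bookkeeping the authors leave implicit, so nothing is missing.
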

\subsection{Applications to logic on words}\label{sec:app}
In this section we show the utility of the Substitution Principle in
handling the application of one layer of quantifiers to a Boolean
algebra of formulas. As we will see, this allows for a systematic
study of fragments of logic via smaller subfragments
(cf. Proposition~\ref{prop:ApplSubst} below). Before stating it, we
show how we can assign to a given set of quantifiers a class of
sentences.
\begin{lemma}\label{l:1}
  Let $\cQ$ be a set of quantifiers. The assignment
  \[\Gamma_\cQ: A \mapsto  \cQ_A[\emptyset] =
    \left\langle \left\{\tQ x \bigvee_{a \in B} \tP_a(x) \mid \tQ \in
        \cQ, \ B \subseteq A\right\}\right\rangle_{\sf BA}\]
  defines a class of sentences.
\end{lemma}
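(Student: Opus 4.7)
The plan is to verify the two defining conditions \ref{item:LC1} and \ref{item:LC2} of a class of sentences directly from the description of $\Gamma_\cQ(A)$. Condition~\ref{item:LC1} holds by construction: $\Gamma_\cQ(A)$ is defined as the Boolean subalgebra of $\cQ_A[\emptyset]$ generated by a certain set of sentences, so it is automatically closed under $\wedge$ and $\neg$ (and hence $\vee$).

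For condition~\ref{item:LC2}, I would fix a function $\zeta\colon A\to B$ between finite alphabets and show that the Boolean algebra homomorphism $\zeta_{\cQ[\emptyset]}\colon \cQ_B[\emptyset] \to \cQ_A[\emptyset]$ sends $\Gamma_\cQ(B)$ into $\Gamma_\cQ(A)$. Since $\zeta_{\cQ[\emptyset]}$ is already a Boolean homomorphism, it suffices to check this on a generating set of $\Gamma_\cQ(B)$. Here I would invoke the explicit description of $\zeta_{\cQ[\cN]}$ recalled in Section~\ref{ss:classessent} (just before diagram~\eqref{eq:6}): it acts on a formula by substituting $\bigvee_{\zeta(a)=b}\tP_a(x)$ for every occurrence of a predicate $\tP_b(x)$.

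Applied to a generator $\tQ x\bigvee_{b\in B'} \tP_b(x)$ of $\Gamma_\cQ(B)$, where $B'\subseteq B$ and $\tQ\in\cQ$, this computation gives
\[
\zeta_{\cQ[\emptyset]}\!\left(\tQ x\bigvee_{b\in B'}\tP_b(x)\right) \;=\; \tQ x\bigvee_{b\in B'}\bigvee_{a\in\zeta^{-1}(b)}\tP_a(x) \;=\; \tQ x\bigvee_{a\in\zeta^{-1}(B')}\tP_a(x),
\]
which is itself one of the generators of $\Gamma_\cQ(A)$ (with $A' := \zeta^{-1}(B')\subseteq A$). Hence $\zeta_{\cQ[\emptyset]}$ restricts and co-restricts to a homomorphism $\zeta_{\Gamma_\cQ}\colon\Gamma_\cQ(B)\to \Gamma_\cQ(A)$, establishing~\ref{item:LC2}.

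There is essentially no obstacle: the statement reduces to unfolding the definitions of $\Gamma_\cQ$ and of the action of $\zeta_{\cQ[\cN]}$ on letter predicates, and then observing that the specified generating set of $\Gamma_\cQ(B)$ is preserved. The only mild care required is to note that it is enough to test closure on generators because $\zeta_{\cQ[\emptyset]}$ is a priori a homomorphism of Boolean algebras.
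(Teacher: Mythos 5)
Your proof is correct and follows essentially the same route as the paper: condition \ref{item:LC1} is immediate from the definition, and \ref{item:LC2} reduces to the observation that $\zeta_{\cQ[\cN]}$ replaces each letter predicate by a disjunction of letter predicates. The only cosmetic difference is that you verify this on the explicit generating set $\{\tQ x \bigvee_{a\in B'}\tP_a(x)\}$ (computing that a generator maps to a generator), whereas the paper phrases the same observation as ``substitution preserves the absence of numerical predicates'' and so maps $\cQ_B[\emptyset]$ into $\cQ_A[\emptyset]$; both arguments rely on the two descriptions of $\Gamma_\cQ(A)$ given in the statement.
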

\begin{proof}
  By definition, $\Gamma_\cQ(A)$ is a Boolean algebra, so
  Property~\ref{item:LC1} holds. To check~\ref{item:LC2} it is enough
  to observe that replacing a letter predicate by a disjunction of
  letter predicates yields formulas without numerical
  predicates. Thus, every map $\zeta: A \to B$ defines a homomorphism
  $\zeta_\Gamma: \cQ_{B}[\emptyset] \to \cQ_{A}[\emptyset]$.
\end{proof}

The following is an immediate consequence of the definitions
of~$\Gamma_\cQ$ and of~$(\_\vcirc\_)$, and it shows that
$\Gamma_\cQ$-substitutions model the application of a layer of
quantifiers to a given Boolean algebra.
\begin{lemma}\label{l:8}
  Let $A$ be a finite alphabet, $\mathcal Q$ a set of quantifiers,
  $\mathcal N$ a set of numerical predicates, and $\Delta \subseteq
  \cQ_{A, x}[\cN]$ a Boolean subalgebra. Then,
  \[\Gamma_\cQ\vcirc \Delta = \langle\{\tQ x\ \phi \mid \phi \in
    \Delta\}\rangle_{\sf BA}.\]
\end{lemma}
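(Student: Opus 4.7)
The plan is to unfold Definition~\ref{def:circ} and reduce to the finite case, then directly compute the image of $\sigma_{\Delta'}$ for a finite Boolean subalgebra $\Delta' \subseteq \Delta$.

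First I would treat the finite case. Fix a finite Boolean subalgebra $\Delta' \subseteq \Delta$ and let $C_{\Delta'} = \At(\Delta')$. By definition of $\Gamma_\cQ$,
\[
\Gamma_\cQ(C_{\Delta'}) = \left\langle \left\{\tQ z \bigvee_{c \in B} \tP_c(z) \;\middle|\; \tQ \in \cQ,\; B \subseteq C_{\Delta'}\right\}\right\rangle_{\sf BA}.
\]
The substitution $\sigma_{\Delta'}$ sends each letter predicate $\tP_c(z)$ to the atom $\phi_c$ of $\Delta'$ (with $z$ substituted for $x$), and it is a Boolean algebra homomorphism. Hence its image is generated, as a Boolean algebra, by the images of the above generators, namely by the formulas $\tQ x \bigvee_{c \in B} \phi_c$ with $\tQ \in \cQ$ and $B \subseteq C_{\Delta'}$.

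Next I would use the crucial observation that, since $\Delta'$ is a finite Boolean algebra, every element $\phi \in \Delta'$ is \emph{uniquely} expressible as the join $\bigvee_{c \in B_\phi} \phi_c$ for some $B_\phi \subseteq C_{\Delta'} = \At(\Delta')$, and conversely every such join belongs to $\Delta'$. Therefore the set of generators described above equals $\{\tQ x\,\phi \mid \tQ \in \cQ,\; \phi \in \Delta'\}$ (up to semantic equivalence), which yields
\[
\Gamma_\cQ \vcirc \Delta' = \langle\{\tQ x\,\phi \mid \tQ \in \cQ,\; \phi \in \Delta'\}\rangle_{\sf BA}.
\]

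Finally, I would pass to the direct limit. By Definition~\ref{def:circ}, $\Gamma_\cQ \vcirc \Delta = \bigcup_{\Delta'} \Gamma_\cQ \vcirc \Delta'$, where $\Delta'$ ranges over finite Boolean subalgebras of $\Delta$. On the one hand, each $\Gamma_\cQ \vcirc \Delta'$ is clearly contained in $\langle\{\tQ x\,\phi \mid \tQ \in \cQ,\; \phi \in \Delta\}\rangle_{\sf BA}$, giving one inclusion. For the converse, every formula $\phi \in \Delta$ sits in some finite Boolean subalgebra $\Delta' \subseteq \Delta$ (e.g.\ the one it generates), so $\tQ x\,\phi \in \Gamma_\cQ \vcirc \Delta'$ for every $\tQ \in \cQ$; since the union of the $\Gamma_\cQ \vcirc \Delta'$ is itself a Boolean subalgebra of $\cQ_A[\cN]$ (as $(\Gamma_\cQ \vcirc \Delta')_{\Delta'}$ is a directed system under inclusion by Corollary~\ref{c:3}), the Boolean algebra generated by $\{\tQ x\,\phi \mid \tQ \in \cQ,\; \phi \in \Delta\}$ is contained in it.

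There is no real obstacle here; the statement is essentially a bookkeeping consequence of the definitions. The only point requiring care is the identification of the generators after substitution, for which one must use that in a finite Boolean algebra every element is a join of atoms and that $\sigma_{\Delta'}$, being a Boolean homomorphism, commutes with such joins.
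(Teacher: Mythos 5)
Your proof is correct and matches the paper's intent: the paper states Lemma~\ref{l:8} as ``an immediate consequence of the definitions'' and gives no argument, and your write-up is exactly the unfolding it has in mind — compute the image of $\sigma_{\Delta'}$ on the generators of $\Gamma_\cQ(C_{\Delta'})$, use that every element of a finite Boolean algebra is a join of atoms, and pass to the direct limit via Definition~\ref{def:circ}. No gaps.
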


Finally, an iteration of Lemma~\ref{l:8} yields the following:
\begin{proposition}
  \label{prop:ApplSubst}
  Let $A$ be a finite alphabet, $\mathcal Q$ a set of quantifiers, and
  $\mathcal N$ a set of numerical predicates. For each $n > 0$, let
  $\Delta_n$ be the Boolean algebra of quantifier-free formulas in the
  context $\x = \{x_1, \ldots, x_n\}$.  Then, a sentence of
  $\cQ_A[\cN]$ has quantifier-depth $n$ if and only if it belongs to
  \[\underbrace{\Gamma_{\cQ} \vcirc (\dots \vcirc(\Gamma_{\cQ}}_{n\text{
        times}} \vcirc \ \Delta_n)\dots).\]
  In particular, we have
  \[\mathcal Q_A [\mathcal N] = \bigcup_{n \in \N}
    \underbrace{\Gamma_{\cQ} \vcirc (\dots \vcirc(\Gamma_{\cQ}}_{n\text{
        times}} \vcirc \ \Delta_n)\dots).\]
\end{proposition}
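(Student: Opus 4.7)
The proof is by induction on $n$, iteratively unfolding the substitution by means of Lemma~\ref{l:8}.

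For the base case $n = 1$, Lemma~\ref{l:8} directly gives $\Gamma_\cQ \vcirc \Delta_1 = \langle\{\tQ x_1\, \phi \mid \tQ \in \cQ,\ \phi \in \Delta_1\}\rangle_{\sf BA}$. Since $\Delta_1$ consists of all quantifier-free formulas in the single free variable $x_1$, this Boolean algebra is exactly the set of sentences of quantifier depth at most $1$.

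For the inductive step I would first extend Lemma~\ref{l:8} to multi-variable contexts by invoking Corollary~\ref{c:8}: for $\Delta \subseteq \log\cQ A{\x x}\cN$, encoding the extra context $\x$ into the alphabet via $\varepsilon_{\x, x}$ reduces the computation to the single-variable case, and decoding back by $\delta_{\x, \emptyset}$ yields the identity $\Gamma_\cQ \vcirc \Delta = \langle\{\tQ x\, \phi \mid \tQ \in \cQ,\ \phi \in \Delta\}\rangle_{\sf BA}$ inside $\log\cQ A \x \cN$. Iterating this identity $n$ times, $\Gamma_\cQ^n \vcirc \Delta_n$ is the Boolean algebra of sentences generated by $\{\tQ_1 x_1\, \theta \mid \tQ_1 \in \cQ,\ \theta \in \Gamma_\cQ^{n-1} \vcirc \Delta_n\}$, where $\Gamma_\cQ^{n-1}\vcirc \Delta_n$ lives in context $\{x_1\}$.

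The forward direction is then immediate: each iteration of $\Gamma_\cQ \vcirc$ prepends at most one outer quantifier while Boolean combinations preserve quantifier depth, so inductively every element of the $n$-fold substitution has quantifier depth at most $n$. For the converse, a sentence of quantifier depth at most $n$ is a Boolean combination of formulas $\tQ y\, \phi$ where $\phi$ is in context $\{y\}$ of depth at most $n-1$; a strengthened inductive hypothesis---asserting that every formula of quantifier depth at most $k$ in context $\{x_1, \dots, x_{n-k}\}$ lies in $\Gamma_\cQ^k \vcirc \Delta_n$---then places each such $\phi$ in $\Gamma_\cQ^{n-1} \vcirc \Delta_n$ (after relabeling $x_1 \leftrightarrow y$), whence the sentence belongs to $\Gamma_\cQ^n \vcirc \Delta_n$.

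The main obstacle is justifying the strengthened inductive hypothesis. A formula of quantifier depth at most $k$ in a non-empty intermediate context may contain naked atomic predicates alongside deeper quantified subformulas, whereas the $k$-th iterate of $\Gamma_\cQ \vcirc$ visibly produces only outer quantifications. This is reconciled by the fact that $\Delta_n$ contains every quantifier-free formula in context $\{x_1, \dots, x_n\}$---in particular, every atom that may occur at an intermediate level---so that a naked atom appearing at depth $k$ can be absorbed into the innermost $\Delta_n$-layer and transported back up through the iteration. The ``in particular'' statement is then automatic, since every sentence of $\cQ_A[\cN]$ has some finite quantifier depth and the forward inclusion shows each $n$-fold substitution is contained in $\cQ_A[\cN]$.
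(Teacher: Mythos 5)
Your proposal follows essentially the same route as the paper's proof: induction on the number of quantifier layers, with Lemma~\ref{l:8} identifying one application of $\Gamma_\cQ\vcirc(\_)$ with the application of one layer of quantifiers, and Corollary~\ref{c:8} (encoding the intermediate context into the alphabet via $\varepsilon_{\x,x}$ and decoding via $\delta_{\x,\emptyset}$) reducing the multi-variable case to the single-variable one. Your forward direction is exactly the paper's computation.

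The converse, however, has a genuine gap at the very point you flag yourself. The proposed ``absorption'' of a naked lower-depth subformula into the innermost $\Delta_n$-layer rests on the equivalence $\alpha\wedge Qx\,\beta\equiv Qx\,(\alpha\wedge\beta)$ for $x$ not free in $\alpha$; this holds for $Q=\exists$ but fails for a general quantifier in $\cQ$. Concretely, take $A=\{a,b\}$, $\cN=\emptyset$, $\cQ=\{\exists_2^0\}$ and $n=2$. The formula $\tP_a(x_2)\wedge\exists_2^0x_1\,\tP_b(x_1)$ has quantifier depth $1$ in the singleton context $\{x_2\}$, yet it does not belong to $\Gamma_\cQ\vcirc\Delta_2$: for $w=aabb$ each letter occurs exactly twice, so for every $\phi\in\Delta_2$ and every position $i$ the number of witnesses $j$ with $(w,j,i)\models\phi$ is a sum of letter-multiplicities of $w$ and hence even; thus every generator $\exists_2^0x_1\,\phi$, and therefore every element of $\Gamma_\cQ\vcirc\Delta_2$, takes the same truth value at $(w,1)$ and at $(w,3)$, while the displayed formula separates them. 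So your strengthened inductive hypothesis is false under the standard (maximum-over-subformulas) notion of quantifier depth, and no absorption argument can repair it. The paper avoids the issue by reading ``formula of quantifier-depth $k+1$ in context $\x$'' in the narrower sense of ``Boolean combination of formulas $Qx_{k+1}\,\theta$ with $\theta$ of quantifier-depth $k$ in context $\x x_{k+1}$'', for which the converse inclusion is immediate from the description of the generators; retaining the lower-depth formulas at each level is precisely what the operator $\circ$ of Remark~\ref{r:1} (which adjoins $\Delta_\x$ at each stage) does and what $\vcirc$ does not.
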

\begin{proof}
  Let $n$ be a positive integer. For each $k \geq 0$, we denote
  \[\Gamma_\cQ^k \vcirc \Delta_{n} = \underbrace{\Gamma_{\cQ} \vcirc (\dots
      \vcirc(\Gamma_{\cQ}}_{k\text{ times}} \vcirc \
    \Delta_n)\dots).\]
  We prove by induction on~$k\ge 0$ that $\Gamma_\cQ^k \vcirc
  \Delta_{n}$ consists of the formulas of quantifier-depth~$k$ in the
  context $\{x_{k+1}, \dots,x_n\}$. The case $k = 0$ follows from the
  definition of~$\Delta_n$. Suppose the claim is valid for~$k$, and
  let ${\bf x} = \{x_{k+2}, \dots, x_n\}$. By induction hypothesis, we
  have $\Gamma_\cQ^k \vcirc \Delta_n \subseteq \log \cQ A {{\bf
      x}x_{k+1}} \cN$. Thus, by Corollary~\ref{c:8}, the Boolean
  algebra $\Delta= \{\psi \in \log\cQ {A \times 2^\x} {x_{k+1}} \cN
  \mid \delta_{\x, x_{k+1}}(\psi) \in \Gamma_\cQ^k \vcirc \Delta_n\}$
  is such that
  \[\Gamma_\cQ^{k+1} \vcirc \Delta_n = \Gamma_\cQ \vcirc (\Gamma_\cQ^k \vcirc \Delta_n) =
    \delta_{\x, \emptyset}\ [\Gamma_\cQ \vcirc \Delta].\]
  Since $\delta_{\x, \emptyset}$ is a homomorphism of Boolean
  algebras, by Lemma~\ref{l:8}, it follows that $\Gamma_\cQ^{k+1}
  \vcirc \Delta_n$ is the Boolean algebra generated by the formulas of
  the form $\delta_{\x, \emptyset}(Q x_{k+1}\ \psi)$, where $\psi$
  belongs to~$\Delta$. Finally, since
  \[ \delta_{\x, \emptyset}(Q x_{k+1}\ \psi) = Q x_{x+1} \ \delta_{\x,
      x_{k+1}}(\psi),\]
  by definition of~$\Delta$ and by induction hypothesis, we may
  conclude that $\Gamma_\cQ^{k+1}\vcirc \Delta_n$ consists of the
  quantifier-depth $k+1$ formulas in the context $\{x_{k+2}, \dots,
  x_n\} = \x$ as we intended to show.
\end{proof}

\section{Semidirect products}\label{sec:closing-quot}
Let $\Gamma$ be a class of sentences and $\Delta \subseteq \log\cQ
A{x}\cN$ a Boolean subalgebra of formulas in one free variable. By
Corollary~\ref{cor:SP-global}, the languages definable by a formula of
$\Gamma \vcirc \Delta$ may be described using the Boolean algebra of
languages $\cV_\Gamma(X_\Delta)$, where $X_\Delta$ is the dual space
of~$\Delta$, and the map $\tau_\Delta: A^* \to X_{\Delta}^*$ as
defined in Corollary~\ref{c:5}. On the other hand, the
map~$\tau_\Delta$ depends only on the Boolean subalgebra $\cD_\Delta=
\{L_{\phi} \mid \phi \in \Delta\} \subseteq \cP(A^* \otimes \N)$ of
the languages definable by a formula in~$\Delta$.  The following
definition is an abstract version of this construction.

\begin{definition}\label{sec:recogn-form-obta-1}
  Let $\cC \subseteq \cP(A^* \otimes \N)$ be a Boolean subalgebra and
  $\cW \subseteq \cl(X_\cC^*)$ be a Boolean subalgebra of languages
  over~$X_\cC$,  the dual space of $\cC$, viewed as a profinite alphabet. 
  Also, let $\pi: A^* \otimes \N \to X_{\cC}$ be the (restriction of the) dual 
  of the embedding $\cC\rightarrowtail \cP(A^* \otimes \N)$.  Define 
  the map
  $\tau_{\cC}: A^* \to X_{\cC}^*$ by
  \[\tau_{\cC}(w) = \pi(w, 0) \dots \pi(w, \card w - 1),\]
  and let
  \[\cW \vcirc \cC =\{\tau_{\cC}^{-1}(K) \mid K \in
    \cW\}.\]
\end{definition}
Clearly, $\cW \vcirc \cC$ is a Boolean subalgebra of~$\cP(A^*)$ and,
by duality, we have that the dual space of~$\cW \vcirc \cC$ is a
closed subspace of~$X_{\cW}$ given by the dual of the map
$\tau_{\cC}^{-1}\colon \cW\to\cP(A^*).$ That is, we have commutative
diagrams
\begin{center}
  \begin{tikzpicture}[node distance = 15mm, ->]
    \node (A1) {$\cl(X_{\cC}^*)$}; \node[right of = A1, xshift = 15mm]
    (XD1) {$\cP(A^*)$}; \node[below of = XD1] (XV1) {${\cW \vcirc
        \cC}$}; \node[below of = A1] (XG1) {${\cW}$};
    \draw (A1) to node[above, ArrowNode]{ $\tau^{-1}_\cC$} (XD1);
    \draw[<-<] (XD1) to (XV1); \draw[<-<] (A1) to (XG1); \draw[->>]
    (XG1) to (XV1);
    \node[right of = XD1, xshift = 15mm] (A) {$\beta(A^*)$};
    \node[right of = A, xshift = 15mm] (XD) {$\beta(X_\cC^*)$};
    \node[below of = XD] (XV) {$X_{\cW}$};
    \node[below of = A] (XG) {$X_{\cW \vcirc \cC}$};
    \draw (A) to node[above, ArrowNode]{$\beta\tau_\cC$} (XD);
    \draw[->>] (XD) to (XV); \draw[->>] (A) to (XG); \draw[>->] (XG)
    to (XV);
  \end{tikzpicture}
\end{center}
which are dual to each other. Comparing these diagrams with those of
Corollary~\ref{c:6}, we readily see that
Definition~\ref{sec:recogn-form-obta-1} is indeed an abstraction of
the construction in Section~\ref{sec:inf-BA}.  More precisely, by
Corollary~\ref{cor:SP-global}, as discussed above, if $\cC=\cD_\Delta$
and $\cW=\cV_\Gamma(X_\Delta)$, then $\cW \vcirc \cC$ is the Boolean
algebra of languages over $A$ given by the logic fragment $\Gamma
\vcirc \Delta$.  Apart from this logic application, which is our focus
here, note that this definition is useful more widely in the theory of
formal languages, see~\cite{Weil1992} for some examples.

Our purpose now is to obtain a description of the \bim{} dual to the
Boolean algebra closed under quotients generated by a Boolean algebra
of the form $\cV(X_\cC) \vcirc \cC$, for an lp-variety of
languages~$\cV$ and~$\cC$ coming from a Boolean subalgebra closed
under quotients of $\cP((A \times 2)^*)$ as studied in
Section~\ref{sec:3}.

Thus, we let~$\cD$ be a Boolean subalgebra of $\cP((A \times 2)^*)$ which 
is closed under the quotient operations and is generated by the union of its 
retracts $\cD_0\subseteq\cP(A^*)$ and $\cD_1\subseteq\cP(A^*\otimes\N)$ 
as considered in Lemma~\ref{l:12}.
Recall that $\cD_0$ and $\cD_1$ are, respectively, the following
quotients of~$\cD$:
\[
  \cD_0=\{L\cap A^*\mid L\in \cD\} \qquad \text{and}\qquad
  \cD_1=\{L\cap (A^* \otimes\N)\mid L\in \cD\}.
\]
Let $\pi: (A \times 2)^* \to X_\cD$ be the restriction to $(A \times
2)^*$ of the map dual to the embedding $\cD \rightarrowtail \cP((A
\times 2)^*)$. As in Section~\ref{sec:3}, we denote
\[M= \pi[A^*]\qquad \text{and}\qquad T = \pi[A^* \otimes \N].\]
These are, respectively, a dense monoid in $X_{\cD_0}$ and a dense
subset of~$X_{\cD_1}$.  By Corollary~\ref{c:9}\ref{item:2}, the
natural biaction of the syntactic monoid~$M_{{\cD}}$ of~$\cD$
on~$X_{{\cD}}$ restricts and co-restricts to a biaction of~$M$
on~$X_{\cD_0}$ and to a biaction of~$M$ on~$X_{\cD_1}$. In particular,
$M$ is a monoid quotient of $A^*$, and $T$ comes equipped with a
biaction of~$M$.

Now, for each $m \in M_\cD$, we let~$\lambda_m$ and~$\rho_m$ denote,
respectively, the left and right components of the biaction
of~$M_{{\cD}}$ on~$X_{{\cD}}$. Then, there is also a biaction of~$M$
on~$X_{\cD_1}^*$ with continuous components given by
\begin{equation}
  \lambda_m^* (x_0 \dots x_{k-1})  = \lambda_m(x_0)
  \dots \lambda_m(x_{k-1}) \quad \text{and}\quad    \rho_m^*(x_0 \dots x_{k-1})  = \rho_m(x_0)
  \dots \rho_m(x_{k-1})\label{eq:15}
\end{equation}
for every $m \in M$ and $x_0 \dots x_{k-1}\in X_{\cD_1}^*$. Moreover,
since $T$ is invariant under the biaction of $M$ on $X_{\cD_1}$, the
$T$-generated free monoid~$T^*$ is invariant under the above biaction
of $M$ on $X_{\cD_1}^*$, thereby yielding a monoid biaction of~$M$
on~$T^*$.  Thus, we have a well-defined semidirect product $T^*{**}M$
given by this monoid biaction,
cf. Section~\ref{ss:semidirect}. Explicitly, the multiplication
on~$T^*{**}M$ is given by
\begin{equation}
  (\underline t, m)(\underline t', m') = (\rho_{m'}(t_0) \dots
  \rho_{m'}(t_{k-1})\lambda_m(t_0') \dots \lambda_m(t_{\ell-1}'), \,
  mm'),\label{eq:19}
\end{equation}
for every $(\underline t, m) = (t_0 \dots t_{k-1}, m)$ and $(\underline
t', m') = (t_0' \dots t_{\ell-1}', m')$ in $T^* {**} M$.

Our next goal is to show that $T^*{**}M$ has a monoid quotient that is
part of a \bim{} having $X_{\cV(X_{\cD_1})} \times X_{\cD_0}$ as space
component. This \bim{} is relevant because, via a suitable morphism,
it exactly recognizes the lattice $\cV\circ \cD$ generated by the
languages in $\cV(X_{\cD_1}) \vcirc \cD_1$ and in~$\cD_0$
(cf. Theorem~\ref{t:2}). We proceed in two steps. First we show that
the multiplication on $T^*{**}M$ naturally extends to a biaction of
$T^*{**}M$ on $\beta(X_{\cD_1}^*) \times X_{\cD_0}$ with continuous
components, so that the inclusion
\[T^*{**}M \rightarrowtail \beta(X_{\cD_1}^*) \times X_{\cD_0}\]
has a \bim{} structure (cf. Lemma~\ref{l:6} and
Proposition~\ref{p:4}). Then, for every Boolean subalgebra $\cW$ of
$\cl(X_{\cD_1}^*)$, we have a continuous quotient $\eta:
\beta(X_{\cD_1}^*) \twoheadrightarrow X_\cW$, and hence, a continuous
quotient $(\eta \times id): \beta(X_{\cD_1}^*) \times
X_{\cD_0}\twoheadrightarrow X_\cW \times X_{\cD_0}$.  The second step
is then to show that, if $\cW = \cV(X_{\cD_1})$ for some
lp-variety~$\cV$, then $(\eta \times id)$ defines a \bim{} quotient
\begin{equation}  \begin{aligned}
    \begin{tikzpicture}[node distance = 40mm] \node (A) at (0,0)
      {$\beta(X_{\cD_1}^*) \times X_{\cD_0}$}; \node [left of = A] (B)
      {$T^*{**}M$}; \node [below of = A, yshift = 20mm] (C)
      {$X_{\cV(X_{\cD_1})} \times X_{\cD_0}$}; \node [left of = C] (D)
      {$N$};
      \draw[>->] (B) to (A); \draw[>->] (D) to node[above, ArrowNode]
      {$\iota$} (C); \draw[->>] (B) to (D); \draw[->>] (A) to
      node[right, ArrowNode] {$\eta \times id$} (C);
      \end{tikzpicture}
    \end{aligned}\label{eq:26}
\end{equation}
where $N$ denotes the image of $T^*{**}M$ under $(\eta \times id)$,
that is, $N = (\eta \times id)[T^*{**}M]$, and $\iota$ is the
inclusion map. This is a consequence of Corollary~\ref{l:14}. In
Lemma~\ref{l:11} we show that $N$ is given by a semidirect product
$\eta[T^*]{**} M$. The \bim{} $(\eta[T^*]{**} M, \, \iota, \,
X_{\cV(X_{\cD_1})} \times X_{\cD_0})$ will be denoted by $\rvd$.  For
the reader's convenience, we list the notation just introduced in the
following table:

\begin{table}[H]
\centering
\resizebox{\textwidth}{!}{%
\begin{tabular}{@{}|c|c|@{}}
  \toprule
  \textbf{Notation}
  & \textbf{Definition}
  \\ \bottomrule
  $\cD_0$                                                                       & $\{L \cap A^* \mid L \in \cD\}$ (this is a quotient of $\cD$)                 \\ \midrule
  $\cD_1$                                                                       & $\{L \cap (A^*\otimes\N) \mid L \in \cD\}$ (this is a quotient of $\cD$)      \\ \midrule
  $\cV \circ \cD$
  & lattice generated by $\cV(X_{\cD_1})\vcirc \cD_1 \cup\cD_0$ (cf. Definition~\ref{sec:recogn-form-obta-1}) \\ \midrule
  $\pi:(A\times2)^* \to X_\cD$               & restriction to $(A \times 2)^*$ of dual of the embedding $\cD \rightarrowtail \cP((A\times 2)^*)$ \\ \midrule
  $M$                                                                           & $\pi[A^*]$ (this is a dense monoid in $X_{\cD_0}$)                            \\ \midrule
  $T$                                                                           & $\pi[A^*\otimes \N]$ (this is a dense subset of $X_{\cD_1}$)                  \\ \midrule
  $\eta: \beta(X_{\cD_1}^*) \twoheadrightarrow X_{\cV(X_{\cD_1})}$ & dual of the embedding $\cV(X_{\cD_1}) \rightarrowtail \cl(X_{\cD_1}^*)$                           \\ \midrule
  $\iota: \eta[T^*] \times M \rightarrowtail X_{\cV(X_{\cD_1})} \times
  X_{\cD_0}$ & inclusion map (this has dense image)
  \\ \midrule $\rvd$ & $(\eta[T^*] {**}
                                          M,\, \iota,\, X_{\cV(X_{\cD_1})} \times
                                          X_{\cD_0})$ (this is a
                                          \bim{}, cf. Proposition~\ref{p:6})\\ \bottomrule
\end{tabular}%
}
\caption{Notation}
\label{tab:my-table}
\end{table}

For the first part, we first observe that, using the equality
$\rho_{m'}(t) = \lambda_t(m')$ valid for every $t, m' \in M_\cD$ (and
in particular, for every $t \in T$ and $m' \in M$), \eqref{eq:19} may
be rewritten as follows:
\[(\underline t, m)(\underline t', m') = (\lambda_{t_0}(m') \dots
  \lambda_{t_{k-1}}(m')\lambda_m(t_0') \dots \lambda_m(t_{\ell-1}'), \,
  mm').\]
This provides a natural way of defining an element
$\widetilde\lambda_{(\underline t, m)}(\underline x, x_0)$ when
$(\underline x, x_0)$ belongs to $X_{\cD_1}^* \times X_{\cD_0}$,
namely,
\begin{equation}
  \tilde\lambda_{(\underline t, m)}(\underline x, x_0) =
  (\lambda_{t_1}(x_0) \dots \lambda_{t_k}(x_0) \lambda_{m}(x_1) \dots
  \lambda_{m}(x_\ell),\, \lambda_{m}(x_0)),\label{eq:21}
\end{equation}
where $\underline x = x_1 \dots x_\ell$. Similarly, we define
\begin{equation}
  \tilde\rho_{(\underline t, m)}(\underline x, x_0) = (\rho_{m}(x_1) \dots
  \rho_{m}(x_\ell)\rho_{t_1}(x_0) \dots \rho_{t_k}(x_0),\,
  \rho_{m}(x_0)).\label{eq:24}  
\end{equation}

\begin{lemma}\label{l:6}
  The functions
  \begin{align*}
    \tilde\lambda_{(\underline t, m)}: X_{\cD_1}^* \times X_{\cD_0}
    & \to X_{\cD_1}^* \times X_{\cD_0}
    \\ (x_1 \dots x_\ell, x_0)
    & \mapsto (\lambda_{t_0}(x_0) \dots \lambda_{t_{k-1}}(x_0)
      \lambda_{m}(x_1) \dots \lambda_{m}(x_\ell), \, \lambda_{m}(x_0)),
  \end{align*}
  and
  \begin{align*}
    \tilde\rho_{(\underline t, m)}: X_{\cD_1}^* \times X_{\cD_0}
    & \to X_{\cD_1}^* \times X_{\cD_0}
    \\ (x_1 \dots x_\ell, x_0)
    & \mapsto (\rho_{m}(x_1) \dots
      \rho_{m}(x_\ell)\rho_{t_0}(x_0) \dots \rho_{t_{k-1}}(x_0),\,
      \rho_{m}(x_0)),
  \end{align*}
  define a biaction of $T^*{**}M$ on the space $X_{\cD_1}^* \times
  X_{\cD_0}$.
\end{lemma}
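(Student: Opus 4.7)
The plan is to check the four axioms of a biaction directly from the definitions, exploiting throughout the fact that $T$ and $M$ both sit inside the syntactic monoid $M_\cD$, that $X_{\cD_0}$ and $X_{\cD_1}$ are invariant subspaces of $X_\cD$ under the biaction of $M_\cD$ (Corollary~\ref{c:9}), and the standard in-monoid identification $\lambda_a(b) = ab = \rho_b(a)$ whenever $a,b \in M_\cD$.

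First, I would verify that $\tilde\lambda_{(\varepsilon,1)}$ and $\tilde\rho_{(\varepsilon,1)}$ are both the identity on $X_{\cD_1}^* \times X_{\cD_0}$, where $(\varepsilon,1)$ is the identity of $T^*{**}M$. This is immediate: with the empty word in the first coordinate there are no $t$-factors in~\eqref{eq:21} and~\eqref{eq:24}, and $\lambda_1$ and $\rho_1$ are the identity on $X_{\cD_0}$ and $X_{\cD_1}$ since $1$ is the unit of $M \subseteq M_\cD$ (the free monoid extension~\eqref{eq:15} then makes them the identity on $X_{\cD_1}^*$ as well).

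The core step is verifying $\tilde\lambda_{(\underline t, m)} \circ \tilde\lambda_{(\underline t', m')} = \tilde\lambda_{(\underline t, m)(\underline t', m')}$. Expanding the right-hand side from~\eqref{eq:19} and~\eqref{eq:21}, one obtains a concatenation whose letters are of the forms $\lambda_{\lambda_{t_j}(m')}(x_0)$, $\lambda_{\lambda_m(t_i')}(x_0)$ and $\lambda_{mm'}(x_s)$, together with second coordinate $\lambda_{mm'}(x_0)$. Expanding the left-hand side directly, one gets $\lambda_{t_j}(\lambda_{m'}(x_0))$, $\lambda_m(\lambda_{t_i'}(x_0))$ and $\lambda_m(\lambda_{m'}(x_s))$. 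The two sides coincide by the left-action law $\lambda_{ab} = \lambda_a \lambda_b$ for the biaction of $M_\cD$ on $X_\cD$, once $\lambda_{t_j}(m')$ is interpreted as the product $t_j m'$ in $M_\cD$ and similarly for $\lambda_m(t_i')$. The symmetric calculation, using $\rho_{ab} = \rho_b\rho_a$, handles the right action. Finally, the compatibility $\tilde\lambda_{(\underline t, m)} \circ \tilde\rho_{(\underline t', m')} = \tilde\rho_{(\underline t', m')} \circ \tilde\lambda_{(\underline t, m)}$ reduces, coordinate by coordinate, to the pointwise identities $\lambda_a \circ \rho_b = \rho_b \circ \lambda_a$ in the biaction of $M_\cD$ on $X_\cD$.

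I expect the main obstacle to be purely organizational rather than conceptual: the two sides of each composition are concatenations in the free monoid $X_{\cD_1}^*$, and one must verify that the blocks of letters appear in exactly the same order and of the same length. The key trick that makes the bookkeeping work is recognizing that a doubly-nested expression like $\lambda_{t_j}(\lambda_{m'}(x_0))$, living nominally in $X_{\cD_0}$, can be recollapsed as $\lambda_{\lambda_{t_j}(m')}(x_0) = \lambda_{t_j m'}(x_0)$, which is precisely the first-coordinate entry that~\eqref{eq:19} prescribes. Once this identification is applied systematically, all four axioms drop out by routine, if slightly tedious, calculation.
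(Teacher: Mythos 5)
Your proposal is correct and follows essentially the same route as the paper's proof: verify the biaction axioms directly, with the key step being the collapse of nested expressions such as $\lambda_{t_j}\circ\lambda_{m'} = \lambda_{t_j m'}$ via the action laws for the biaction of $M_\cD$ on $X_\cD$, together with the identification $\rho_{m'}(t) = \lambda_t(m') = tm'$. The paper writes out only the composition law for the left components and leaves the remaining verifications to the reader, exactly as you anticipate; your only slip is cosmetic (the letter $\lambda_{t_j}(\lambda_{m'}(x_0))$ lives in $X_{\cD_1}$, not $X_{\cD_0}$).
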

\begin{proof}
  This is a consequence of the fact that the family $\{\lambda_m,
  \rho_m\}_{m \in M_{\cD}}$ defines a biaction of $M_\cD$ on
  $X_\cD$. We illustrate the computations involved by showing that
  $\tilde\lambda_{(\underline t, m)} \circ \tilde \lambda_{(\underline
    t', m')} = \tilde \lambda_{(\underline t, m)(\underline t', m')}$
  for every $(\underline t, m), (\underline t', m') \in T^*{**}M$, and
  leave the rest for the reader. Let us write $\underline t = t_0
  \dots t_{k-1}$ and $\underline t' = t'_0 \dots t'_{\ell-1}$, and
  pick $(\underline x, x_0) \in X_{\cD_1}^* \times X_{\cD_0}$. We will
  use the following notation: if $f_0, \dots, f_{i-1}$ are functions
  $P \to Q$ then, for every $p \in P$, we denote by $\langle f_0,
  \dots, f_{i-1}\rangle(p)$ the word $f_0(p) \dots f_{i-1}(p)$
  over~$Q$. Then, we may compute
  \begin{align*}
    \tilde\lambda_{(\underline t, m)} \circ \tilde \lambda_{(\underline
    t', m')} (\underline x, x_0)
    & =   \tilde\lambda_{(\underline t, m)} ( \langle \lambda_{t_0'},
      \dots, \lambda_{t_{\ell-1}'}\rangle(x_0)\, \lambda^*_{m'}(\,\underline
      x\,), \lambda_{m'}(x_0))
    \\ & = (\langle \lambda_{t_0}, \dots \lambda_{t_{k-1}}\rangle
         (\lambda_{m'}(x_0)) \lambda_m^*( \langle \lambda_{t_0'},
         \dots, \lambda_{t_{\ell-1}'}\rangle(x_0)\, \lambda^*_{m'}(\,\underline
         x\,)), \lambda_m(\lambda_{m'}(x_0)))
    \\ & = (\langle \lambda_{t_0}\lambda_{m'}, \dots
         \lambda_{t_{k-1}}\lambda_{m'}, \lambda_m\lambda_{t_0'},
         \dots, \lambda_m\lambda_{t_{\ell-1}'}\rangle(x_0)\, \lambda_m^*(\lambda^*_{m'}(\,\underline
         x\,)), \lambda_m(\lambda_{m'}(x_0)))
    \\ & \just ={(*)}  (\langle \lambda_{t_0 m'}, \dots
         \lambda_{t_{k-1} m'}, \lambda_{m t_0'},
         \dots,\lambda_{m t_{\ell-1}'}\rangle(x_0)\, \lambda^*_{mm'}(\,\underline
         x\,), \lambda_{mm'}(x_0))
    \\  & = \tilde \lambda_{(\underline t, m)(\underline t',
          m')}(\underline x, x_0).
  \end{align*}
  Here, every equality follows from the appropriate definitions, except for the
  equality marked with $(*)$, which uses the fact that $\{\lambda_m\}_{m \in
    M_{\cD}}$ is itself a left action.
\end{proof}

We now see that the biaction defined in Lemma~\ref{l:6}
further extends to a biaction of $T^*{**}M$ on $\beta(X_{\cD_1}^*)
\times X_{\cD_0}$ with continuous components, thereby defining a
\bim{}
\[T^* {**} M \rightarrowtail \beta(X_{\cD_1}^*) \times X_{\cD_0}.\]
Indeed, that may be seen as a consequence of the following technical
lemma.
\begin{lemma}\label{l:19}
  Let $L_0 \in \cD_0$ and $L_1, \dots, L_n \in \cD_1$. Then, for every
  $\underline t = t_0 \dots t_{k-1} \in T^*$ and $m \in M$, the
  following equalities hold
  \begin{equation}
    \label{eq:18}
    \widetilde \lambda_{(\underline t, m)}^{-1} (\widehat L_1 \times
    \dots \times \widehat L_n \times \widehat L_0) =
    \begin{cases} \lambda_m^{-1}(\widehat L_{k+1}) \times \dots \times
      \lambda_m^{-1}(\widehat L_n) \times (\lambda_m^{-1}(\widehat
      L_0) \cap \bigcap_{i = 1}^k \lambda_{t_{i-1}}^{-1}(\widehat
      L_i)), \quad \text{if $k \leq n$;}\\ \emptyset, \quad \text{
        otherwise;}
    \end{cases}
  \end{equation} and
  \[ \widetilde \rho_{(\underline t, m)}^{\,-1} (\widehat L_1 \times
    \dots \times \widehat L_n \times \widehat L_0) =
    \begin{cases} \rho_m^{-1}(\widehat L_{1}) \times \dots \times
      \rho_m^{-1}(\widehat L_{n-k}) \times (\rho_m^{-1}(\widehat L_0)
      \cap \bigcap_{i = 1}^k \rho_{t_{i-1}}^{-1}(\widehat L_{n-k+i})),
      \; \text{if $k \leq n$;}\\ \emptyset, \quad \text{
        otherwise.}
    \end{cases}
  \]
  (Here, we adopt the convention that $\widehat L_{k+1} \times \dots
  \times \widehat L_n = \{\varepsilon\} = \widehat L_{1} \times \dots
  \times \widehat L_{n-k}$ when $k = n$.)

  In particular, the biaction of $T^*{**}M$ on $X_{\cD_1}^* \times
  X_{\cD_0}$ has continuous components.
\end{lemma}
\begin{proof} We will only show the first equality, as the second one
  is analogous. First note that the pair $(\underline x, x_0) \in
  X_{\cD_1}^* \times X_{\cD_0}$ belongs to the left-hand side
  of~\eqref{eq:18} if and only if
  \begin{equation}
    \label{eq:34}
    \lambda_{t_0}(x_0) \dots \lambda_{t_{k-1}}(x_0)
    \lambda^*_{m}(\underline x) \in \widehat L_1 \times \dots \times
    \widehat L_n \quad\text{ and }\quad \lambda_{m}(x_0) \in \widehat L_0.
  \end{equation}
  Thus, the left-hand side of~\eqref{eq:18} is empty unless $k \leq
  n$. When that is the case, \eqref{eq:34} is equivalent to
  \[\lambda_{t_{i-1}}(x_0) \in \widehat L_i, \text{ for $i = 1, \dots,
      k$,} \quad \lambda_m^*(\underline x) \in \widehat L_{k+1} \times
    \dots \times \widehat L_n, \quad \text{and}\quad \lambda_{m}(x_0)
    \in \widehat L_0,\] that is,
  \[(\underline x, x_0) \in \lambda_m^{-1}(\widehat L_{k+1}) \times
    \dots \times \lambda_m^{-1}(\widehat L_n) \times
    (\lambda_m^{-1}(\widehat L_0) \cap \bigcap_{i = 1}^k
    \lambda_{t_{i-1}}^{-1}(\widehat L_i)).\]
  This shows that~\eqref{eq:18} holds.
\end{proof}

\begin{proposition}\label{p:4}
  The biaction of $T^* {**} M$ on $X_{\cD_1}^* \times X_{\cD_0}$
  extends to a biaction of $T^*{**}M$ on $\beta(X_{\cD_1}^*) \times
  X_{\cD_0}$ with continuous components, and the inclusion $T^* {**} M
  \rightarrowtail \beta(X_{\cD_1}^*) \times X_{\cD_0}$ admits a \bim{}
  structure with respect to such biaction.
\end{proposition}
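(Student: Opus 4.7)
The plan is to execute the strategy sketched in the paragraph preceding the statement. The key preliminary fact is that restriction along the dense embedding $X_{\cD_1}^* \times X_{\cD_0} \hookrightarrow \beta(X_{\cD_1}^*) \times X_{\cD_0}$ induces a Boolean-algebra isomorphism $\cl(\beta(X_{\cD_1}^*) \times X_{\cD_0}) \cong \cl(X_{\cD_1}^* \times X_{\cD_0})$. Injectivity is given by density and Hausdorffness. For surjectivity, clopens of a product of Boolean spaces are generated, as a Boolean algebra, by rectangles $U \times V$; by Theorem~\ref{t:4} each $U \in \cl(X_{\cD_1}^*)$ is the restriction of a unique $\tilde U \in \cl(\beta(X_{\cD_1}^*))$, while $V \in \cl(X_{\cD_0})$ is already clopen in $X_{\cD_0}$, so every generator lifts.

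With that identification in hand, the first substantive step is to check that $\tilde\lambda_{(\underline t, m)}$ and $\tilde\rho_{(\underline t, m)}$ are continuous self-maps of $X_{\cD_1}^* \times X_{\cD_0}$. On each summand $X_{\cD_1}^\ell \times X_{\cD_0}$ of the disjoint-sum topology, the formulas~\eqref{eq:21} and~\eqref{eq:24} exhibit them as tuples of the continuous components $\lambda_m, \rho_m, \lambda_{t_i}, \rho_{t_i}$ (which restrict and co-restrict appropriately by Corollary~\ref{c:9}\ref{item:2}), with image in $X_{\cD_1}^{\ell+k} \times X_{\cD_0}$, where $k = |\underline t|$. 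Since pre-image under a continuous function preserves clopens, $(\tilde\lambda_{(\underline t, m)})^{-1}$ and $(\tilde\rho_{(\underline t, m)})^{-1}$ are Boolean-algebra endomorphisms of $\cl(X_{\cD_1}^* \times X_{\cD_0}) \cong \cl(\beta(X_{\cD_1}^*) \times X_{\cD_0})$; their Stone duals give the desired continuous extensions of $\tilde\lambda_{(\underline t, m)}$ and $\tilde\rho_{(\underline t, m)}$ to $\beta(X_{\cD_1}^*) \times X_{\cD_0}$.

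It then remains to verify that the extensions form a biaction of $T^*{**}M$ and that the inclusion $T^*{**}M \hookrightarrow \beta(X_{\cD_1}^*) \times X_{\cD_0}$ admits a \bim{} structure. The biaction axioms (preservation of identity, compatibility with multiplication of $T^*{**}M$, and left-right compatibility) hold on the dense subspace $X_{\cD_1}^* \times X_{\cD_0}$ by Lemma~\ref{l:6}, and each such identity is an equality of continuous self-maps of the Hausdorff space $\beta(X_{\cD_1}^*) \times X_{\cD_0}$, so persists by density. Density of $T^*{**}M$ in $\beta(X_{\cD_1}^*) \times X_{\cD_0}$ follows from the density of $T$ in $X_{\cD_1}$ and of $M$ in $X_{\cD_0}$. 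The \bim{} equivariance of the inclusion reduces, via the identity $\rho_{m'}(t_i) = \lambda_{t_i}(m')$ in $M_\cD$, to comparing the product formula~\eqref{eq:19} with the defining formula~\eqref{eq:21} for $\tilde\lambda$, which match by inspection; the analogue for $\tilde\rho$ is identical.

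The main obstacle is the clopen identification in the first paragraph: the space $X_{\cD_1}^* \times X_{\cD_0}$ is only locally compact, not itself Boolean, so the isomorphism cannot be read off directly from a duality between compact spaces, and one has to argue with generators and density (rather than, say, invoking a product version of Theorem~\ref{t:1} wholesale). Once this identification is in place, everything else is a straightforward combination of pre-image along continuous maps, Stone duality, and density-plus-Hausdorffness extension arguments.
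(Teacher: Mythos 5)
Your overall strategy is the same as the paper's: show that $(\tilde\lambda_{(\underline t,m)})^{-1}$ and $(\tilde\rho_{(\underline t,m)})^{-1}$ act appropriately on clopen algebras, dualize to obtain the continuous extensions, and transport the biaction identities and the \bim{} equivariance from the dense copy of $T^*{**}M$; those last steps are fine. But the step you yourself flag as the main obstacle --- the identification $\cl(\beta(X_{\cD_1}^*)\times X_{\cD_0})\cong\cl(X_{\cD_1}^*\times X_{\cD_0})$ --- is where the proposal breaks, and the rest of your argument does not repair it. The restriction map from $\cl(\beta(X_{\cD_1}^*)\times X_{\cD_0})$ to $\cl(X_{\cD_1}^*\times X_{\cD_0})$ is injective by density, but it is \emph{not} surjective in general. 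Since $X_{\cD_1}^*\times X_{\cD_0}=\coprod_{n\ge 0}(X_{\cD_1}^n\times X_{\cD_0})$ carries the disjoint-sum topology, any set $\bigcup_{n}(X_{\cD_1}^n\times V_n)$ with $V_n\in\cl(X_{\cD_0})$ is clopen; if infinitely many distinct $V_n$ occur (possible whenever $\cD_0$ is infinite), a vertical-section count shows this set is not a finite Boolean combination of rectangles $U\times V$, whereas every clopen of the compact product $\beta(X_{\cD_1}^*)\times X_{\cD_0}$ \emph{is} such a finite combination, by compactness. Your assertion that ``clopens of a product of Boolean spaces are generated by rectangles'' is correct for the compact product but false for $X_{\cD_1}^*\times X_{\cD_0}$, which is not compact and is exactly the space to which you apply it.

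Consequently, observing that $\tilde\lambda_{(\underline t, m)}$ and $\tilde\rho_{(\underline t, m)}$ are continuous self-maps of $X_{\cD_1}^*\times X_{\cD_0}$ --- which is true and is a clean observation --- only yields endomorphisms of the \emph{full} algebra $\cl(X_{\cD_1}^*\times X_{\cD_0})$, whose Stone duals are continuous self-maps of $\beta_0(X_{\cD_1}^*\times X_{\cD_0})$, a strictly larger compactification than $\beta(X_{\cD_1}^*)\times X_{\cD_0}$. To descend to $\beta(X_{\cD_1}^*)\times X_{\cD_0}$ you must show that the preimage maps preserve the \emph{subalgebra} consisting of restrictions of clopens of the compact product, i.e.\ the rectangle-generated subalgebra, and mere continuity cannot see this distinction. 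That closure property is precisely what the paper's explicit computation supplies: $(\tilde\lambda_{(\underline t, m)})^{-1}(\widehat L_1\times\cdots\times\widehat L_n\times\widehat L_0)$ is either empty or again a single clopen rectangle, as in~\eqref{eq:16}, with each factor clopen by Corollary~\ref{c:9}. (A minor further point: for the maps $\lambda_{t_i}\colon X_{\cD_0}\to X_{\cD_1}$ with $t_i\in T$ you need part~\ref{item:3} of Corollary~\ref{c:9}, not part~\ref{item:2}.) So the concrete preimage computation you tried to bypass is not an optional detail; it is the substantive content of the proposition, and its absence is a genuine gap.
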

\begin{proof}
  By Lemma~\ref{l:19}, the biaction of $T^*{**}M$ on $X_{\cD_1}^*
  \times X_{\cD_0}$ has continuous components. Therefore, since
  $\beta(X_{\cD_1}^*) \times X_{\cD_0}$ and $\beta_0(X_{\cD_1}^*
  \times X_{\cD_0})$ are homeomorphic (cf. Corollary~\ref{c:12}), each
  of its components may be uniquely extended to a continuous
  endofunction of $\beta(X_{\cD_1}^*) \times X_{\cD_0}$. Since
  $X_{\cD_1}^* \times X_{\cD_0}$ is dense in $\beta(X_{\cD_1}^*)
  \times X_{\cD_0}$, it follows that $T^* {**} M \rightarrowtail
  \beta(X_{\cD_1}^*) \times X_{\cD_0}$ is a \bim{}, as required.
\end{proof}

The extensions of the maps defined in Lemma~\ref{l:6} to continuous
endofunctions of $\beta(X_{\cD_1}^*) \times X_{\cD_0}$ will also be
denoted by $\widetilde\lambda_{(\underline t, m)}$ and by
$\widetilde\rho_{(\underline t, m)}$, respectively. We will now give
an expression for $\widetilde\lambda_{(\underline t, m)} (\gamma,
x_0)$ and $\widetilde\rho_{(\underline t, m)} (\gamma, x_0)$ when
$(\gamma, x_0) \in \beta(X_{\cD_1}^*) \times X_{\cD_0}$.  Recall from
Section~\ref{sec:1} that the embedding $X_{\cD_1}^*\rightarrowtail
\beta(X_{\cD_1}^*)$ defines a \bim{}. We let $\{\widetilde \ell_w,
\widetilde r_w\}_{w \in X_{\cD_1}^*}$ be the components of the
biaction of $X_{\cD_1}^*$ on $\beta(X_{\cD_1}^*)$. On the other hand,
since $\lambda_m$ and $\rho_m$ are continuous functions and the
topological space $X^*_{\cD_1}$ is the union over $n \ge 0$ of the
product spaces~$X_{\cD_1}^n$, the maps $(\lambda_m^*)^{-1}$ and
$(\rho_m^*)^{-1}$ define endomorphisms of
$\cl(X_{\cD_1}^*)$. Therefore, $\lambda_m^*$ and $\rho_m^*$ extend to
continuous functions $\beta(\lambda_m^*): \beta(X_{\cD_1}^*) \to
\beta(X_{\cD_1}^*)$ and $\beta(\rho_m^*): \beta(X_{\cD_1}^*) \to
\beta(X_{\cD_1}^*)$ which are, respectively, the duals of
$(\lambda_m^*)^{-1}$ and of $(\rho_m^*)^{-1}$.

We now observe that, given $(\underline x, x_0) \in X_{\cD_1}^* \times
X_{\cD_0}$, we may rewrite~\eqref{eq:21} as follows
\[\tilde\lambda_{(\underline t, m)}(\underline x, x_0) = (\tilde
  \ell_{\lambda_{\underline t}(x_0)}\circ \lambda_m^*(\underline x),\,
  \lambda_{m}(x_0)),\]
where $\lambda_{\underline t}(x_0) = \lambda_{t_0}(x_0)\dots
\lambda_{t_{k-1}}(x_0)$. Therefore, the function $f:
\beta(X_{\cD_1}^*) \times X_{\cD_0} \to \beta(X_{\cD_1}^*) \times
X_{\cD_0}$ defined by
\[f(\gamma, x_0) = (\tilde \ell_{\lambda_{\underline t}(x_0)}\circ
  \beta(\lambda_m^*)(\gamma),\, \lambda_{m}(x_0)),\]
for every $(\gamma, x_0) \in \beta(X_{\cD_1}^*) \times X_{\cD_0}$, is
an extension of the restriction~$\tilde\lambda_{(\underline t, m)}:
X_{\cD_1}^* \times X_{\cD_0} \to X_{\cD_1}^* \times X_{\cD_0}$. Since
$X_{\cD_1}^* \times X_{\cD_0}$ is a dense subspace of
$\beta(X_{\cD_1}^*) \times X_{\cD_0}$ and
$\widetilde\lambda_{(\underline t, m)}: \beta(X_{\cD_1}^*) \times
X_{\cD_0} \to \beta(X_{\cD_1}^*) \times X_{\cD_0}$ is, up to
isomorphism, the map dual to $\widetilde\lambda^{-1}_{(\underline t,
  m)}: \cl (X_{\cD_1}^* \times X_{\cD_0}) \to \cl (X_{\cD_1}^* \times
X_{\cD_0})$, in order to show that $f$ coincides with
$\tilde\lambda_{(\underline t, m)}: \beta(X_{\cD_1}^*) \times
X_{\cD_0} \to \beta(X_{\cD_1}^*) \times X_{\cD_0}$, it suffices to
show that $f^{-1}$ induces a well-defined map $\cl (\beta(X_{\cD_1}^*)
\times X_{\cD_0}) \to \cl (\beta(X_{\cD_1}^*) \times X_{\cD_0})$
(thus, in particular, $f$ is continuous) that is isomorphic to
$\widetilde\lambda^{-1}_{(\underline t, m)}$. Given a clopen subset $K
\subseteq X_{\cD_1}^*$, we let $cl(K)$ denote the closure of~$K$ in
$\beta(X_{\cD_1}^*)$, so that the assignment $K \mapsto cl(K)$
establishes an isomorphism between $\cl(X_{\cD_1}^*)$ and
$\cl(\beta(X_{\cD_1}^*))$. Let us fix $L_1, \dots, L_n \in\cD_1$ and
$L_0 \in \cD_0$. Then, for every $\gamma \in \beta(X_{\cD_1}^*) =
X_{\cl(X_{\cD_1}^*)}$ and $x_0 \in X_{\cD_0}$, we have
\begin{align*}
  (\gamma, x_0) \in f^{-1}
  (cl(\widehat L_1 \times \dots \times \widehat L_n) \times \widehat L_0) 
  & \iff (\tilde
    \ell_{\lambda_{\underline t}(x_0)}\circ
    \beta(\lambda_m^*)(\gamma),\, \lambda_{m}(x_0)) \in cl(\widehat L_1 \times \dots \times \widehat L_n) \times \widehat L_0
  \\ & \iff \widehat L_1 \times \dots \times \widehat L_n \in \tilde
       \ell_{\lambda_{\underline t}(x_0)}\circ
       \beta(\lambda_m^*)(\gamma) \quad \text{and} \quad x_0 \in
       \lambda_m^{-1}(\widehat L_0)
  \\ & \iff (\lambda_m^*)^{-1}(\lambda_{\underline t}(x_0)^{-1}(\widehat L_1 \times \dots \times \widehat L_n))
       \in \gamma  \quad \text{and} \quad x_0 \in
       \lambda_m^{-1}(\widehat L_0),
\end{align*}
where $\lambda_{\underline t}(x_0)^{-1}(\widehat L_1 \times \dots
\times \widehat L_n)$ denotes the left quotient of $\widehat L_1
\times \dots \times \widehat L_n \subseteq X_{\cD_1}^*$ by the word
$\lambda_{\underline t}(x_0) \in X_{\cD_1}^*$.  Now, either $k \leq n$
and $\lambda_{\underline t}(x_0) \in \widehat L_1 \times \dots \times
\widehat L_k$, and then we have
\[(\lambda_m^*)^{-1}(\lambda_{\underline t}(x_0)^{-1}(\widehat L_1
  \times \dots \times \widehat L_k)) = \lambda_m^{-1}(\widehat
  L_{k+1}) \times \dots \times \lambda_m^{-1}(\widehat L_{n}),\]
or else,
\[(\lambda_m^*)^{-1}(\lambda_{\underline t}(x_0)^{-1}(\widehat L_1
  \times \dots \times \widehat L_k)) = \emptyset.\]
Therefore, $f^{-1} (cl(\widehat L_1 \times \dots \times \widehat L_n)
\times \widehat L_0) $ is non-empty if and only if $k \leq n$ and, in
that case, it is given by
\begin{equation}
  f^{-1} (cl(\widehat L_1 \times \dots
  \times \widehat L_n) \times
  \widehat L_0) = cl(\lambda_m^{-1}(\widehat L_{k+1}) \times \dots \times
  \lambda_m^{-1}(\widehat L_{n})) \times (\lambda_m^{-1}(\widehat L_0)
  \cap \bigcap_{i = 1}^k \lambda_{t_{i-1}}^{-1}(\widehat
  L_i)).\label{eq:16}    
\end{equation}
By Lemma~\ref{l:19}, we may then conclude that $f^{-1}$ induces a
function isomorphic to $\widetilde\lambda^{-1}_{(\underline t, m)}$,
as claimed.

Proceeding similarly with $\widetilde\rho_{(\underline t, m)}$, we may
conclude that, for every $(\gamma, x_0) \in \beta(X_{\cD_1}^*) \times
X_{\cD_0}$, the following equalities hold
\begin{equation} \tilde\lambda_{(\underline t, m)}(\gamma, x_0) =
  (\tilde \ell_{\lambda_{\underline t}(x_0)}\circ
  \beta(\lambda_m^*)(\gamma),\, \lambda_{m}(x_0)) \qquad \text{and}
  \qquad\tilde\rho_{(\underline t, m)}(\gamma, x_0) = (\tilde
  r_{\rho_{\underline t}(x_0)} \circ \beta(\rho_m^*) (\gamma),\,
  \rho_{m}(x_0)),\label{eq:22}
\end{equation}

As promised, we now prove that the continuous quotient $(\eta \times
id) :\beta(X_{\cD_1}^*) \times X_{\cD_0}\twoheadrightarrow
X_{\cV(X_{\cD_1})} \times X_{\cD_0}$ induces a \bim{} quotient as
in~\eqref{eq:26}. This is a consequence of the following lemma:
\begin{lemma}\label{l:20}
  If $L_1, \dots, L_n \in \cD_1$ are such that $\widehat L_1 \times
  \dots \times \widehat L_n$ belongs to $\cV(X_{\cD_1}) \subseteq
  \cl(X_{\cD_1}^*)$ then, for every $0 \leq k \leq n$, the language
  $\lambda_m^{-1}(\widehat L_{k+1}) \times \dots
  \lambda_m^{-1}(\widehat L_n)$ also belongs to
  $\cV(X_{\cD_1})$.
\end{lemma}
\begin{proof}
  Let us denote $K = \lambda_m^{-1}(\widehat L_{k+1}) \times \dots
  \times \lambda_m^{-1}(\widehat L_{n})$.  Since, by
  Corollary~\ref{c:9}, we have a continuous function $\lambda_m:
  X_{\cD_1} \to X_{\cD_1}$, by Lemma~\ref{l:17}, we have $K \in
  \cV(X_{\cD_1}) $ provided the language $L = \widehat L_{k+1} \times
  \dots \times \widehat L_{n} \subseteq X_{\cD_1}^*$ belongs to
  $\cV(X_{\cD_1})$. To show this, we pick a word $w \in \widehat L_1
  \times \dots \times \widehat L_{k}$. Then, since $ \widehat L_1
  \times \dots \times \widehat L_n$ belongs to $\cV(X_{\cD_1})$, by
  Lemma~\ref{l:4}, the quotient $w^{-1}( \widehat L_1 \times \dots
  \times \widehat L_n)$ also does. But this quotient is precisely
  $L$. Indeed, for every word $v \in X_{\cD_1}^*$, we have
  \[v \in w^{-1}( \widehat L_1 \times \dots \times \widehat L_n) \iff
    w v \in \widehat L_1 \times \dots \times \widehat L_n \iff v \in
    \widehat L_{k+1} \times \dots \times \widehat L_{n},\]
  where the last equivalence follows from the choice of~$w$.

\end{proof}
\begin{corollary}\label{l:14} Let $\cV$ be an lp-variety
  of languages. Then, for every $(\underline t, m) \in T^* {**} M$,
  there are continuous functions $\lambda_{(\underline t, m)}$ and
  $\rho_{(\underline t, m)}$ making the following diagrams commute:
  \begin{equation}
    \begin{aligned}
      \begin{tikzpicture}[node distance = 44mm, ->] \node (A) at (0,0)
        {$X_{\cV(X_{\cD_1})} \times X_{\cD_0}$}; \node [left of = A]
        (B) {$\beta(X_{\cD_1}^*) \times X_{\cD_0}$}; \node [below of = A,
        yshift = 20mm] (C) {$X_{\cV(X_{\cD_1})} \times X_{\cD_0}$};
        \node [left of = C] (D) {$\beta(X_{\cD_1}^*) \times X_{\cD_0}$};
        \draw[->>] (B) to node[above, ArrowNode] {$\eta \times id$}
        (A); \draw[->>] (D) to node[below, ArrowNode] {$\eta \times
          id$} (C); \draw (B) to node[left, ArrowNode] {$\tilde
          \lambda_{(\underline t, m)}$} (D); \draw[dashed] (A) to
        node[right, ArrowNode] {$\lambda_{(\underline t, m)}$} (C);
        \node[right of = A, xshift = 37mm] (A') at (0,0)
        {$X_{\cV(X_{\cD_1})} \times X_{\cD_0}$}; \node [left of = A']
        (B') {$\beta(X_{\cD_1}^*) \times X_{\cD_0}$}; \node [below of
        = A', yshift = 20mm] (C') {$X_{\cV(X_{\cD_1})} \times
          X_{\cD_0}$}; \node [left of = C'] (D')
        {$\beta(X_{{\cD_1}}^*) \times X_{\cD_0}$};
        \draw[->>] (B') to node[above, ArrowNode] {$\eta \times id$}
        (A'); \draw[->>] (D') to node[below, ArrowNode] {$\eta \times
          id$} (C'); \draw (B') to node[left, ArrowNode] {$\tilde
          \rho_{(\underline t, m)}$} (D'); \draw[dashed] (A') to
        node[right, ArrowNode] {$\rho_{(\underline t, m)}$} (C');
      \end{tikzpicture}
    \end{aligned}\label{eq:27}
  \end{equation}
  where $\tilde \lambda_{(\underline t, m)}$ and $\tilde
  \rho_{(\underline t, m)}$ are, respectively, the left and right
  components at $(\underline t, m)$ of the biaction of $T^*{**}M$ on
  $\beta(X_{\cD_1}^*)\times X_{\cD_0}$
  (cf. Proposition~\ref{p:4}). Moreover, the family
  $\{\lambda_{(\underline t, m)}, \rho_{(\underline t,
    m)}\}_{(\underline t, m)}$ defines a biaction of $T^*{**}M$ on
  $X_{\cV(X_{\cD_1})} \times X_{\cD_0}$.
\end{corollary}
\begin{proof}
  By Lemmas~\ref{l:19} and~\ref{l:20}, the Boolean algebra
  homomorphism
  \[\widetilde \lambda_{(\underline t, m)}^{-1}:
    \cl(\beta(X_{\cD_1}^*) \times X_{\cD_0}) \to
    \cl(\beta(X_{\cD_1}^*) \times X_{\cD_0})\]
  restricts and co-restricts to a Boolean algebra
  homomorphism
  \[\cl(X_{\cV(X_{\cD_1})} \times X_{\cD_0}) \to
    \cl(X_{\cV(X_{\cD_1})} \times X_{\cD_0}).\]
  Let $\lambda_{(\underline t, m)}$ be the dual of the latter. Then,
  $\lambda_{(\underline t, m)}$ is a continuous function that makes
  the left-hand side of~\eqref{eq:27} commute. Existence
  of~$\rho_{(\underline t, m)}$ is shown similarly.

  The fact that $\{\lambda_{(\underline t, m)}, \rho_{(\underline t,
    m)}\}_{(\underline t, m)}$ defines a biaction of $T^*{**}M$ on
  $X_{\cV(X_{\cD_1})} \times X_{\cD_0}$ follows from having that
  $\{\tilde\lambda_{(\underline t, m)}, \tilde\rho_{(\underline t,
    m)}\}_{(\underline t, m)}$ defines a biaction of $T^*{**}M$ on
  $\beta(X_{\cD_1}^*) \times X_{\cD_0}$, together with surjectivity of
  $(\eta \times id)$ and commutativity of the diagrams
  in~\eqref{eq:27}.
\end{proof}

By Corollary~\ref{l:14}, we have that the restriction of $(\eta \times
id): \beta(X_{\cD_1}^*) \times X_{\cD_0} \to X_{\cV(X_{\cD_1})} \times
X_{\cD_0}$ to $T^* {**} M$ is a morphism of sets with
$(T^*{**}M)$-biactions. Therefore, $N = (\eta \times id)[T^*{**}M]$
comes equipped with a monoid structure induced by the monoid structure
of $T^*{**} M$ and we have a \bim{} $\rvd = (N,\, \iota,\,
X_{\cV(X_{\cD_1})} \times X_{\cD_0})$ which is a quotient of $(T^*
{**} M \rightarrowtail X_{\cD_1}^* \times X_{\cD_0})$ as
in~\eqref{eq:26}. We will now give a precise description of the
monoid~$N$. Note that the underlying set of~$N$ is $\eta[T^*] \times
M$. Moreover, since $\cV$ is an lp-variety, by Lemma~\ref{l:4},
$\cV(X_{\cD_1}^*)$ is closed under quotients and thus,
$\eta[X_{\cD_1}^*]$ is a monoid and the restriction and co-restriction
of~$\eta$ to a map $X_{\cD_1}^* \twoheadrightarrow \eta[X_{\cD_1}^*]$
is a monoid quotient. Since $T^*$ is a submonoid of~$X_{\cD_1}^*$, we
have that $\eta[T^*]$ is also a monoid. We will show that~$M$ biacts
on~$\eta[T^*]$ and that $N$ is the semidirect product $\eta[T^*] {**}
M$ defined by this biaction (Lemma~\ref{l:11}). In fact, we show the
following slightly more general fact: $M$ biacts on
$\eta[X_{\cD_1}^*]$ and the ensuing semidirect product
$\eta[X_{\cD_1}^*]{**} M$ is a monoid quotient of $X^*_{\cD_1}{**}M$
(recall that $M$ biacts on the free monoid $X_{\cD_1}^*$ so that we
have a semidirect product $X_{\cD_1}^*{**} M$ of which $T^*{**} M$ is
a submonoid, cf.~\eqref{eq:15}).

\begin{lemma}\label{l:11}
  For every $m \in M$, the assignments
  \[\ell_m: \eta[X_{\cD_1}^*]
    \to \eta[X_{\cD_1}^*], \qquad \eta(\underline x) \mapsto
    \eta(\lambda_m^*(\underline x))\] and
  \[r_m: \eta[X_{\cD_1}^*] \to \eta[X_{\cD_1}^*], \qquad
    \eta(\underline x) \mapsto \eta(\rho_m^*(\underline x))\]
  are well-defined functions, which define a monoid biaction of $M$ on
  $\eta[X_{\cD_1}^*]$. In particular, the monoid $N$ is a semidirect
  product $\eta[X_{\cD_1}^*] {**} M$ whose multiplication is defined
  by
  \begin{equation}
    (\eta(\underline x), m) (\eta(\underline x'), m') =
    (\eta(\rho_{m'}^*(\underline x)\lambda_{m}^*(\underline x')),
    mm')\label{eq:20}
  \end{equation}
  for every $(\underline x, m), (\underline x', m') \in
  X_{\cD_1}^*{**} M$, and $(\eta \times id): X_{\cD_1}^*{**} M
  \twoheadrightarrow \eta[X_{\cD_1}^*]{**} M$ is a monoid quotient.
\end{lemma}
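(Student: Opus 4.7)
The plan is to bootstrap from the biaction of $T^*{**}M$ on $\beta(X_{\cD_1}^*) \times X_{\cD_0}$ (Proposition~\ref{p:4}) together with the descent result (Lemma~\ref{l:14}), and then project onto the first coordinate. Specifically, applying Lemma~\ref{l:14} with the element $(\varepsilon, m) \in T^*{**}M$ (where $\varepsilon$ denotes the empty word of $T^*$), one has $\tilde\lambda_{(\varepsilon, m)}(\underline x, x_0) = (\lambda_m^*(\underline x), \lambda_m(x_0))$ and a commutative square
\[
(\eta \times \mathrm{id}) \circ \tilde\lambda_{(\varepsilon, m)} = \lambda_{(\varepsilon, m)} \circ (\eta \times \mathrm{id}).
\]
For any $\underline x, \underline x' \in X_{\cD_1}^*$ with $\eta(\underline x) = \eta(\underline x')$ and any $x_0 \in X_{\cD_0}$, chasing $(\underline x, x_0)$ and $(\underline x', x_0)$ around this square shows $\eta(\lambda_m^*(\underline x)) = \eta(\lambda_m^*(\underline x'))$, so $\ell_m$ is well-defined. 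The analogous argument with the right-hand diagram of~\eqref{eq:27} handles $r_m$.

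For the biaction axioms, recall from~\eqref{eq:15} that each $\lambda_m^*$ and $\rho_m^*$ is a monoid endomorphism of the free monoid $X_{\cD_1}^*$, since they act letterwise, and that $\{\lambda_m^*, \rho_m^*\}_{m \in M}$ forms a monoid biaction of~$M$ on~$X_{\cD_1}^*$. Since $\eta: X_{\cD_1}^* \twoheadrightarrow \eta[X_{\cD_1}^*]$ is a monoid quotient by the hypothesis that $\cV$ is an lp-variety, each induced $\ell_m$, $r_m$ is itself a monoid endomorphism of $\eta[X_{\cD_1}^*]$; in additive notation this gives both distributivity $m_1\cdot(s_1{+}s_2)\cdot m_2 = m_1\cdot s_1\cdot m_2 + m_1\cdot s_2\cdot m_2$ and preservation of~$0$. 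The identities $\ell_1 = r_1 = \mathrm{id}$, $\ell_m \circ \ell_{m'} = \ell_{mm'}$, $r_{m'} \circ r_m = r_{mm'}$ and $\ell_m \circ r_{m'} = r_{m'} \circ \ell_m$ then follow by postcomposing the corresponding identities for $\lambda_m^*$ and $\rho_m^*$ with $\eta$ and using the surjectivity of $\eta$.

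With the monoid biaction in hand, the semidirect product $\eta[X_{\cD_1}^*] {**} M$ with multiplication~\eqref{eq:20} is automatic from the construction reviewed in Section~\ref{ss:semidirect}. For the final claim, $(\eta \times \mathrm{id})$ is surjective by construction of the target, and comparing~\eqref{eq:19} with~\eqref{eq:20} gives
\[
(\eta \times \mathrm{id})\bigl((\underline x, m)(\underline x', m')\bigr)
= \bigl(\eta(\rho_{m'}^*(\underline x)\,\lambda_m^*(\underline x')), mm'\bigr)
= (\eta \times \mathrm{id})(\underline x, m) \cdot (\eta \times \mathrm{id})(\underline x', m'),
\]
using that $\eta$ is a monoid homomorphism to split the first coordinate. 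The expected obstacle is Step~1: well-definedness does \emph{not} follow from $\eta$ being merely a monoid quotient, since that only guarantees invariance of $\ker\eta$ under two-sided multiplication by elements of $X_{\cD_1}^*$, not under the letterwise action of~$\lambda_m^*$. The crucial input is that Lemma~\ref{l:14} already encodes precisely this descent, which is where the lp-variety hypothesis on~$\cV$ (via Lemma~\ref{l:5} applied to the continuous map $\lambda_m: X_{\cD_1} \to X_{\cD_1}$) enters.
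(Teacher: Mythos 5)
Your proof is correct and follows essentially the same route as the paper's: the only difference is that you obtain well-definedness of $\ell_m$ and $r_m$ by specializing the descent square of Lemma~\ref{l:14} to $(\varepsilon,m)$, whereas the paper argues directly that $(\lambda_m^*)^{-1}$ preserves $\cV(X_{\cD_1})$ via Lemma~\ref{l:5} --- but since that is exactly the input to Lemma~\ref{l:14} in the case $k=0$, the two arguments coincide modulo packaging. The remaining steps (biaction axioms inherited through the surjective monoid homomorphism $\eta$, the semidirect-product formula, and $(\eta\times\mathrm{id})$ being a monoid quotient) match the paper's proof.
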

\begin{proof}
  We first show that $\ell_m$ is well-defined. Let $\underline x,
  \underline x' \in X_{\cD_1}^*$ be such that $\eta(\underline x) =
  \eta(\underline x')$. We need to show that
  $\eta(\lambda_m^*(\underline x)) = \eta(\lambda_m^*(\underline
  x'))$. By definition of dual map, having $\eta(\underline x) =
  \eta(\underline x')$ is equivalent to having that, for every $K \in
  \cV(X_{\cD_1})$,
  \[\underline x \in K \iff \underline x' \in K.\]
  Since $\lambda_m$ is a continuous function and $\cV$ is an lp-strain
  of languages, by Lemma~\ref{l:5}, we then have that, for every $K
  \in \cV(X_{\cD_1})$,
  \[\underline x \in (\lambda^*_m)^{-1}(K) \iff \underline x' \in
    (\lambda^*_m)^{-1}(K),\]
  and this is equivalent to having $\eta(\lambda_m^*(\underline x)) =
  \eta(\lambda_m^*(\underline x'))$ as required.

  Similarly, one can show that $r_m$ is well-defined. The fact that
  $\{\ell_m, r_m\}_{m \in M}$ defines a monoid biaction is inherited
  from the fact that $\{\lambda_m, \rho_m\}_{m \in M}$ defines a
  monoid biaction.

  Finally, the fact that the multiplication on
  $\eta[X_{\cD_1}^*]{**}M$ is given by~\eqref{eq:20} is a
  straightforward consequence of the definition of semidirect products
  (cf. Section~\ref{ss:semidirect}). To conclude that $\eta \times id$
  is a monoid quotient, it suffices to observe that~\eqref{eq:20} may
  be rewritten as
  \[ (\underline x, m)(\underline x', m') = (\eta\times id)
    (\rho_{m'}^*(\underline x)\lambda^*_m(\underline x'), mm') =
    (\eta\times id) ((\underline x, m)(\underline x', m')). \popQED\]
\end{proof}

We just finished proving the following:
\begin{proposition}\label{p:6}
  Let $\cV$ be an lp-variety of languages and $\cD \subseteq \cP((A
  \times 2)^*)$ be a Boolean subalgebra closed under quotients that is
  generated by the union of its retracts $\cD_0 \subseteq \cP(A^*)$
  and $\cD_1 \subseteq \cP(A^* \otimes \N)$. Then, we have a quotient
  of \bim{}s
  \begin{center}
    \begin{tikzpicture}[node distance = 40mm] \node (A) at (0,0)
      {$\beta(X_{\cD_1}^*) \times X_{\cD_0}$}; \node [left of = A] (B)
      {$T^*{**}M$}; \node [below of = A, yshift = 20mm] (C)
      {$X_{\cV(X_{\cD_1})} \times X_{\cD_0}$}; \node [left of = C] (D)
      {$\eta[T^*]{**}M$};
      \draw[>->] (B) to (A); \draw[>->] (D) to node[above, ArrowNode]
      {$\iota$} (C); \draw[->>] (B) to (D); \draw[->>] (A) to
      node[right, ArrowNode] {$\eta \times id$} (C);
    \end{tikzpicture}
  \end{center}
\end{proposition}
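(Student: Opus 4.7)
The plan is to exhibit a \bim{} structure on the embedding $\eta[T^*]{**}M \rightarrowtail X_{\cV(X_{\cD_1})} \times X_{\cD_0}$ and then verify that the square in the statement is a \bim{} quotient. Steps~1--3 below essentially identify and organize the monoid side; the heart of the argument is Step~4, where the biaction must be shown to descend.

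First I would identify $(\eta\times id)[T^*{**}M]=\eta[T^*]\times M$ as a set. Since $T$ is $M$-invariant in $X_{\cD_1}$ by Corollary~\ref{c:9}, the set $T^*$ is invariant under the biaction of~$M$ on~$X_{\cD_1}^*$; applying the equivariant monoid quotient from Lemma~\ref{l:11} shows that $\eta[T^*]$ is invariant under the induced biaction on $\eta[X_{\cD_1}^*]$. Hence $\eta[T^*]{**}M$ is a well-defined submonoid of $\eta[X_{\cD_1}^*]{**}M$, and Lemma~\ref{l:11} gives that the restriction $(\eta\times id)|_{T^*{**}M}\colon T^*{**}M \twoheadrightarrow \eta[T^*]{**}M$ is a surjective monoid homomorphism. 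For density, note that $M$ is dense in~$X_{\cD_0}$ (by definition of the syntactic \bim{}-stamp of~$\cD_0$) and that $\eta[T^*]$ is dense in $X_{\cV(X_{\cD_1})}$, since $T$ is dense in~$X_{\cD_1}$, so $T^*$ is dense in $X_{\cD_1}^*$ and hence in $\beta(X_{\cD_1}^*)$, and $\eta$ is continuous and surjective onto $X_{\cV(X_{\cD_1})}$.

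The key step is to show that the biaction of $T^*{**}M$ on $X_{\cV(X_{\cD_1})} \times X_{\cD_0}$ given by Lemma~\ref{l:14} factors through the quotient $\eta\times id\colon T^*{**}M \twoheadrightarrow \eta[T^*]{**}M$. Suppose $(\underline t_1, m)$ and $(\underline t_2, m)$ have the same image in $\eta[T^*]{**}M$, i.e., $\eta(\underline t_1)=\eta(\underline t_2)$. I would check that $\lambda_{(\underline t_1, m)} = \lambda_{(\underline t_2, m)}$ by evaluating on the dense subset $\eta[X_{\cD_1}^*]\times M$. For $(\underline x, m'')\in X_{\cD_1}^*\times M$, using the defining diagram of Lemma~\ref{l:14}, the identity $\lambda_t(m'') = \rho_{m''}(t)$ for $t\in T$, $m''\in M$ (valid in the ambient \bim{} $M_\cD\rightarrowtail X_\cD$), and the fact that $\eta$ is a monoid quotient on $X_{\cD_1}^*$ (Lemma~\ref{l:11}), one computes
\begin{align*}
  \lambda_{(\underline t, m)}(\eta(\underline x), m'')
  &= (\eta\times id)\bigl(\rho_{m''}^*(\underline t)\cdot\lambda_m^*(\underline x),\, \lambda_m(m'')\bigr) \\
  &= \bigl(r_{m''}(\eta(\underline t))\cdot \ell_m(\eta(\underline x)),\, \lambda_m(m'')\bigr),
\end{align*}
which depends on $(\underline t, m)$ only through $(\eta(\underline t), m)$. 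Since $\lambda_{(\underline t_i, m)}$ is continuous and the subset above is dense in $X_{\cV(X_{\cD_1})}\times X_{\cD_0}$, the two functions agree everywhere. The analogous computation for $\rho_{(\underline t, m)}$, using the dual identity $\rho_t(m'')=\lambda_{m''}(t)$, yields the same conclusion.

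Once the biaction has descended to $N=\eta[T^*]{**}M$, its components are automatically continuous (inherited from those of the $T^*{**}M$-biaction via Lemma~\ref{l:14}), and the inclusion $N \rightarrowtail X_{\cV(X_{\cD_1})}\times X_{\cD_0}$ is a morphism of sets with $N$-biactions because the corresponding property for $T^*{**}M\rightarrowtail \beta(X_{\cD_1}^*)\times X_{\cD_0}$ lifts through the surjective quotient. Combined with the density established above, this shows that $N\rightarrowtail X_{\cV(X_{\cD_1})}\times X_{\cD_0}$ is a \bim{}, and the square in the statement is a \bim{} quotient since both $\eta\times id\colon T^*{**}M \twoheadrightarrow \eta[T^*]{**}M$ and $\eta\times id\colon \beta(X_{\cD_1}^*)\times X_{\cD_0} \twoheadrightarrow X_{\cV(X_{\cD_1})}\times X_{\cD_0}$ are surjective. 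The main obstacle is the descent of the biaction; the identity $\lambda_t(m'')=\rho_{m''}(t)$ is what enables the formula for the biaction to be rewritten entirely in terms of $\eta(\underline t)$ on a dense subset, so that continuity takes care of the rest.
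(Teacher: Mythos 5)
Your proof is correct and takes essentially the same route as the paper, which establishes Proposition~\ref{p:6} by assembling Proposition~\ref{p:4}, Lemma~\ref{l:14} and Lemma~\ref{l:11} in the paragraph preceding the statement. The one point you treat more explicitly than the paper --- that the biaction of $T^*{**}M$ on $X_{\cV(X_{\cD_1})}\times X_{\cD_0}$ descends to a well-defined biaction of $\eta[T^*]{**}M$, which you obtain from the identity $\lambda_t(m'')=\rho_{m''}(t)$ together with a density-and-continuity argument --- is precisely the detail the paper leaves implicit, and your verification of it is sound.
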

We remark that, by~\eqref{eq:22} and by Corollary~\ref{l:14}, we have that
the biaction of $\eta[T^*]{**}M$ on $X_{\cV(X_{\cD_1})} \times
X_{\cD_0}$ is defined by
\begin{equation}
  \label{eq:23}
  \lambda_{(\eta(\underline t), m)}(\eta(\gamma), x_0) =
  (\eta\circ \widetilde \ell_{\lambda_{\underline t}(x_0)} \circ
  \beta(\lambda_m^*)(\gamma), \lambda_m(x_0))
\end{equation}
and
\begin{equation}
  \label{eq:30}
  \rho_{(\eta(\underline t), m)}(\eta(\gamma), x_0) = (\eta \circ \tilde
  r_{\rho_{\underline t}(x_0)} \circ \beta(\rho_m^*) (\gamma),\,
  \rho_{m}(x_0))
\end{equation}
for every $(\underline t, m) \in T^* {**}M$ and $(\gamma, x_0) \in
\beta(X_{\cD_1}^*) \times X_{\cD_0}$. (Recall that, if $\underline t =
t_0 \dots t_{k-1}$ then $\lambda_{\underline t}(x_0) =
\lambda_{t_0}(x_0) \dots \lambda_{t_{k-1}}(x_0) \in X_{\cD_1}^*$, the
maps $\tilde \ell_w, \tilde r_w: \beta(X_{\cD_1}^*) \to
\beta(X_{\cD_1}^*)$ are the unique continuous extensions of the left
and right multiplication in the free monoid $X_{\cD_1}^*$ by the word
$w \in X_{\cD_1}^*$, and $\beta(\lambda_m^*), \beta(\rho_m^*):
\beta(X_{\cD_1}^*) \to \beta(X_{\cD_1}^*)$ are the unique continuous
extensions of the maps $\lambda_m^*, \rho_m^*: X_{\cD_1}^* \to
X_{\cD_1}^*$.)

We are ready to prove the main result of this section. The reader
should recall the notation listed in Table~\ref{tab:my-table}.

\begin{theorem}\label{t:2}
  Let $\cV$ be an lp-variety of languages and $\cD \subseteq \cP((A
  \times 2)^*)$ be a Boolean subalgebra closed under quotients that is
  generated by the union of its retracts $\cD_0 \subseteq \cP(A^*)$
  and $\cD_1 \subseteq \cP(A^* \otimes \N)$.  Let $h: A^* \to
  \eta[T^*] {**} M$ be the homomorphism defined by
  \[h(a) = (\eta\circ \pi(a,1), \pi(a)).\]
  Then, a language is recognized by the \bim{} $\rvd =
  (\eta[T^*]{**}M,\, \iota, \,X_{\cV(X_{\cD_1})} \times X_{\cD_0})$
  via~$h$ if and only if it belongs to $\cV \circ \cD$.
\end{theorem}
\begin{proof}
  We first show, by induction on the length of words, that $h(w) =
  (\eta \circ \tau_{\cD_1}(w), \pi(w))$, for every $w \in A^*$. This
  is trivially the case if $w$ is the empty word. Let $w \in A^*$ and
  $a \in A$ be a letter. Then, by induction hypothesis, we have
  \[h(wa) = h(w)h(a) = (\eta \circ \tau_{\cD_1}(w), \pi(w))(\eta\circ
    \tau_{\cD_1}(a), \pi(a)),\]
  and by definition of the multiplication on $(\eta[T^*]{**}M)$
  (cf.~\eqref{eq:20}), it follows that
  \[h(wa) = (\eta(\rho_{\pi(a)}^*(\tau_{\cD_1}(w))
    \lambda_{\pi(w)}^*(\tau_{\cD_1}(a))), \pi(wa)).\]
  Now, since $\tau_{\cD_1}(w) = \pi(w, 0) \dots \pi(w,\card w - 1)$ and
  $\tau_{\cD_1}(a) = \pi(a,1)$, we have
  \[\rho_{\pi(a)}^*(\tau_{\cD_1}(w)) = \rho_{\pi(a)}\circ\pi(w, 0)
    \dots \rho_{\pi(a)}\circ \pi(w,\card w - 1) = \pi(wa, 0) \dots
    \pi(wa,\card w - 1)\]
  and
  \[\lambda_{\pi(w)}^*(\tau_{\cD_1}(a)) = \lambda_{\pi(w)}(\pi(a,1))
    = \pi(w(a,1)) = \pi(wa, \card w ).\]
  Therefore,
  \[h(wa) = (\eta (\pi(wa, 0) \dots \pi(wa,\card w - 1) \pi(wa, \card
    w)), \pi(wa)) = (\eta \circ \tau_{\cD_1}(wa), \pi(wa)).\]

  Now, the languages recognized by $(\eta[T^*]{**}M\rightarrowtail
  X_{\cV(X_{\cD_1})} \times X_{\cD_0})$ via $h$ are precisely the
  finite unions of languages of the form
  \[h^{-1}(\widehat K \times \widehat L) = \tau_{\cD_1}^{-1}\circ
    \eta^{-1}(\widehat K) \cap \pi^{-1}(\widehat L)\]
  for some $K\in \cV(X_{\cD_1})$ and $L \in \cD_0$. Since $\eta$ and
  $\pi$ are, respectively, dual to the embeddings $\cV(X_{\cD_1})
  \rightarrowtail \cl(X_{\cD_1}^*)$ and $\cD \rightarrowtail \cP((A
  \times 2)^*)$, it follows that
  \[h^{-1}(\widehat K \times \widehat L) = \tau_{\cD_1}^{-1}(K) \cap
    L.\]
  Thus, by Definition~\ref{sec:recogn-form-obta-1} of $\cV(X_{\cD_1})
  \vcirc \cD_1$, the languages recognized by~$h$ are precisely the
  lattice combinations of languages of $\cV(X_{\cD_1}) \vcirc \cD_1$
  and of~$\cD_0$.
\end{proof}

In particular, since the set of languages recognized by a \bim{} via a
fixed morphism is a Boolean algebra closed under quotients
(cf. Section~\ref{sec:1}), it follows that the lattice $\cV \circ \cD$
is, in fact, a Boolean algebra closed under quotients.

\begin{corollary}\label{c:10}
  Let $\cV$ be an lp-variety of languages and $\cD \subseteq \cP((A
  \times 2)^*)$ be a Boolean subalgebra closed under quotients that is
  generated by the union of its retracts $\cD_0 \subseteq \cP(A^*)$
  and $\cD_1 \subseteq \cP(A^* \otimes \N)$. Then, the lattice $\cV
  \circ \cD$ is a Boolean algebra closed under quotients.
\end{corollary}

Finally, for the reader who is familiar with Almeida and Weil's
work~\cite{AlmeidaWeil1995}, we sketch here the relationship between
their Decomposition Theorem for semidirect
products~\cite[Theorem~5.1]{AlmeidaWeil1995} and our
Theorem~\ref{t:2}, thereby exhibiting in which sense our result may be
seen as a generalization of their. We start by
stating~\cite[Theorem~5.1]{AlmeidaWeil1995} in a language that is
closer to that used in this paper. Given pseudovarieties $\bf V$ and
$\bf W$ of (finite) monoids, we will denote by $\cV$ and $\cW$,
respectively, the varieties of (regular) languages determined by~$\bf
V$ and $\bf W$ under \emph{Eilenberg's correspondence}
(see~\cite[Chapter~VII, Section~3]{Eilenberg76}). It follows
from~\cite[Theorem~3.6.1]{Almeida94} that, for any pseudovariety~$\bf
V$, the underlying topological space of the free $A$-generated
pro-$\V$ monoid $ \overline{\Omega}_A{\bf V}$ is homeomorphic to the
Stone dual $X_{\cV(A)}$ of $\cV(A)$, and the \emph{canonical
  embedding} $\iota: A \rightarrowtail \overline{\Omega}_A{\bf V}$ is
a restriction of the continuous quotient $\pi:\beta(A^*)
\twoheadrightarrow X_{\cV(A)}$ dual to the embedding $\cV(A)
\rightarrowtail \cP(A^*)$. Moreover, the co-restriction $A^*
\twoheadrightarrow \pi[A^*]$ of $\pi$ is the syntactic morphism
of~$\cV(A)$, and $A^* \twoheadrightarrow \pi[A^*] \rightarrowtail
X_{\cV(A)}$ its syntactic \bim{}-stamp
(cf. Section~\ref{sec:1}). Finally, we will say that a homomorphism
$\theta:A^* \to Z$ into a profinite monoid $Z$ recognizes the language
$L \subseteq A^*$ provided $L = \theta^{-1}(V)$ for some clopen subset
$V \subseteq Z$. We may then state Almeida and Weil's Decomposition
Theorem for semidirect products as follows:

\begin{theorem}\label{t:3}
  Let $\bf V$ and $\bf W$ be two pseudovarieties of finite monoids,
  and $A$ be a finite alphabet.\footnote{We remark that their result
    is valid for $A$ profinite, but for the sake of simplicity we will
    consider the case where $A$ is finite.} We write $Y = X_{\cW(A)}
  \times A \times X_{\cW(A)}$ and denote by $\eta': Y^* \to
  X_{\cV(Y)}$ the restriction of the dual of the embedding $\cV(Y)
  \rightarrowtail \cl(Y^*)$. Then, the right and left actions of the
  profinite monoid $X_{\cW(A)}$ on the set $Y$ given by $x \cdot (x_1,
  a, x_2) = (xx_1, a, x_2)$ and $(x_1, a, x_2) \cdot x = (x_1, a,
  x_2x)$ can be extended to a continuous biaction of $X_{\cW(A)}$ on
  $X_{\cV(Y)}$, which satisfies
  \[x \cdot \eta'(y_1 \dots y_n)\cdot x' = \eta'((xy_1x') \dots
    (xy_nx')),\]
  for every $x, x' \in X_{\cD_1}$ and $y_1 \dots y_n \in Y^*$. The
  resulting two-sided semidirect product $X_{\cV(Y)} {**} X_{\cW(A)}$
  is a profinite monoid (for the product topology of its underlying
  space $X_{\cV(Y)} \times X_{\cW(A)}$).

  Moreover, if $\pi': A^* \to X_{\cW(A)}$ is the restriction of the
  dual of the embedding $\cW(A) \rightarrowtail \cP(A^*)$, then
  letting $\theta(a) = (\eta'(\pi'(1),a,\pi'(1)), \pi'(a))$ defines a
  unique monoid homomorphism $\theta: A^* \to X_{\cV(Y)} {**}
  X_{\cW(A)}$ that recognizes exactly the lattice combinations of
  languages recognized by a two-sided semidirect product $M{**}N$ with
  $M \in \bf V$ and $N \in \bf W$.
\end{theorem}

Now, given pseudovarieties of monoids $\bf V$ and $\bf W$, we take for
$\cD_0 \subseteq \cP(A^*)$ the Boolean algebra $\cW(A)$ and we let
$\pi_0: A^* \to X_{\cD_0}$ be the restriction of the dual of $\cD_0
\rightarrowtail \cP(A^*)$. In order to suitably define~$\cD_1$, we
will bear in mind the isomorphism $\xi: A^* \otimes \N \to A^* \times
A \times A^*$, which identifies a marked word $(w, i)$ with the triple
$(a_0 \dots a_{i-1}, a_i, a_{i+1}\dots a_{n-1})$, for $w = a_0 \dots
a_{n-1}$. Accordingly, we let $\widetilde \pi_0: A^* \times A \times
A^* \to X_{\cD_0} \times A \times X_{\cD_0}$ be defined by $\widetilde
\pi_0(u, a, v) = (\pi_0(u), a, \pi_0(v))$ and $\widetilde\pi_1: A^*
\otimes \N \to X_{\cD_0} \times A \times X_{\cD_0}$ be defined by
$\widetilde\pi_1(w, i) = \widetilde \pi_0(a_0 \dots a_{i-1}, a_i,
a_{i+1}\dots a_{n-1})$, for $w = a_0 \dots a_{n-1}$. Finally, $\cD_1
\rightarrowtail \cP(A^* \otimes \N)$ is the embedding whose dual is
the co-restriction $\pi_1$ of $\widetilde\pi_1$ to the closure of its
image, so that $\widetilde \pi_1 = e \circ \pi_1$, where $e: X_{\cD_1}
\rightarrowtail X_{\cD_0} \times A \times X_{\cD_0}$ is the subspace
embedding. Equivalently, $\cD_1$ consists of the lattice combinations
of languages of the form $\xi^{-1}(L_1 \times \{a\} \times L_2)$, with
$L_1, L_2 \in \cD_0$ and $a \in A$. Using Corollary~\ref{c:11}, one
may then verify that the sublattice~$\cD$ of $\cP((A \times 2)^*)$
generated by $\cD_0 \cup \cD_1 \cup \{A_z\}$ is a Boolean algebra
closed under quotients and thus, $\cD_0$, $\cD_1$, and $\cD$ are as in
the statement of Theorem~\ref{t:2}, and the map $\pi: (A\times 2)^*
\to X_{\cD}$ restricts and co-restricts to $\pi_0$ and to
$\pi_1$. Also note that, by definition of~$Y$ in Theorem~\ref{t:3} and
of $\cD_0$ above, we have $Y = X_{\cD_0} \times A \times
X_{\cD_0}$. If $\eta$ and $\eta'$ are as in Theorems~\ref{t:2}
and~\ref{t:3}, respectively, then by Lemma~\ref{l:5}, we have a
commutative diagram
\begin{center}
  \begin{tikzpicture}[node distance = 20mm]
    \node (B) at (0,0) {$X_{\cD_1}^*$}; \node[right of = B] (A) {$X_{\cV(X_{\cD_1})}$};
    \node[below of = B, yshift = 5mm] (VB) {$Y^*$};
    \node[below of = A, yshift = 5mm] (VA) {$X_{\cV(Y)}$};
    \draw[->] (B) to node[ArrowNode] {$\eta$} (A); \draw[>->] (B) to
    node[left, ArrowNode] {$e^*$} (VB); \draw[>->] (A) to node[right,
    ArrowNode] {$\widehat e$} (VA); \draw[->] (VB) to
    node[below,ArrowNode] {$\eta'$} (VA);
  \end{tikzpicture}
\end{center}
Moreover, it is easily seen that the biaction of $X_{\cW(X)} =
X_{\cD_0}$ on $X_{\cV(Y)}$ given by Theorem~\ref{t:3} is an extension
of the biaction of $M \subseteq X_{\cD_0}$ on $\eta[X_{\cD_1}^*]
\subseteq X_{\cV(Y)}$ given by Lemma~\ref{l:11}. In particular, the
map $(\widehat e \times id) \circ \iota:\eta[T^*]{**} M
\rightarrowtail X_{\cV(Y)}{**}X_{\cW(A)}$ is a submonoid embedding,
and the following diagram commutes:
\begin{center}
  \begin{tikzpicture}[node distance = 40mm]
    \node (X) at (0,0) {$A^*$}; \node[right of = X] (bX)
    {$\eta[T^*]{**}M$}; \node[below of = bX, yshift = 20mm] (Z)
    {$X_{\cV(Y)} {**} X_{\cW(A)}$}; \node[right of = bX] (A)
    {$X_{\cV(X_{\cD_1})} \times X_{\cD_0}$};
    \draw[->] (X) to node[above, ArrowNode] {$h$}(bX);
    \draw[->] (X) to node[left, yshift = -2mm, ArrowNode] {$\theta$}
    (Z); \draw[>->] (A) to node[right, ArrowNode] {$\;\;\widehat e \times
      id$} (Z); \draw[>->] (bX) to node[above, ArrowNode] {$\iota$}
    (A);
  \end{tikzpicture}
\end{center}
Thus, in order to conclude that Theorem~\ref{t:3} of Almeida and Weil
is, indeed, a consequence of our Theorem~\ref{t:2}, it suffices to
argue that $\cV \circ \cD$ consists of the lattice combinations of
languages recognized by a monoid of the form $M{**}N$, with $M \in
{\bf V}$ and $N \in {\bf W}$. In case $\bf V$ and $\bf W$ are
definable by suitable fragments of first order logic, that is the
content of~\cite[Lemma~3.8]{TessonTherien07}. The general case follows
standard arguments underlying the so-called \emph{block-product
  principle} for finite monoids (see e.g.~\cite[Section~6.3]{Pin1997}
or~\cite[Section~4.2]{StraubingTherien2002}). For the reader's
convenience, we sketch its proof below.

\begin{proposition}
  A language $L \subseteq A^*$ belongs to $\cV \circ \cD$ if and only
  if it is a lattice combination of languages over~$A$ recognized by a
  monoid of the form $M{**}N$, with $M \in {\bf V}$ and $N \in \bf W$.
\end{proposition}
\begin{proof}
  Let us denote by $(\cV{**}\cW)(A)$ the lattice generated by the
  languages over~$A$ recognized by a monoid of the form $M{**}N$, with
  $M \in {\bf V}$ and $N \in \bf W$. Also recall that $\cV\circ \cD =
  (\cV(X_{\cD_1})\vcirc X_{\cD_1}) \cup \cD_0$. Since $\cD_0 = \cW(A)$
  and $N$ is a submonoid of every semidirect product of the form
  $M{**}N$, it is clear that $\cD_0 \subseteq (\cV{**}\cW)(A)$. Let $L
  \in \cV(X_{\cD_1})\vcirc X_{\cD_1}$, say $L = \tau_{\cD_1}^{-1}(K)$,
  for some $K \in \cV(X_{\cD_1})$. Since~$X_{\cD_1}$ embeds in~$Y$ via
  $e$, by definition of variety, there exists a language $K_0 \in
  \cV(Y)$ such that $K = (e^*)^{-1}(K_0)$, and by Lemma~\ref{l:4},
  there is a finite continuous quotient $q: Y \twoheadrightarrow B$
  and a monoid homomorphism $g: B^* \to M$ into a monoid $M \in \bf V$
  such that $K_0 = (g \circ q^*)^{-1}(P)$, for some $P \subseteq
  M$. On the other hand, for each $b \in B$, $q^{-1}(b)$ is a clopen
  subset of $Y = X_{\cD_0} \times A \times X_{\cD_0}$ and, as so, it
  may be written as a finite union $\bigcup_{i = 1}^{k_b} (\widehat
  {L_{i,b}} \times \{a_{i, b}\} \times \widehat {L_{i,b}'})$, for some
  $L_{i,b}, L_{i, b}' \in \cD_0 = \cW(A)$ and $a_{i, b} \in A$. We let
  $h: A^* \to N$ be a monoid homomorphism into a monoid $N \in \bf W$
  that recognizes every language $L_{i,b}$ and $L_{i, b}'$, and we
  define $\widetilde h: A^* \times A \times A^* \to N \times A \times
  N$ by $\widetilde h (u, a, v) = (h(u), a, h(v))$. Note that, in
  particular, every clopen subset in the image of $q^{-1}$ is of the
  form $\widehat { \widetilde h^{-1}(Q)}$, for some $Q \subseteq N
  \times A \times N$. Therefore, the map $q \circ \widetilde \pi_0$
  factors through $\widetilde h$, say via $r: N \times A \times N \to
  B$ and, since $\widetilde \pi_0$ is dense, $q$ is surjective, and
  $B$ is finite, we have that $r$ is a quotient. Finally, we let
  $\tau_h: A^* \to (N \times A \times N)^*$ be given by $\tau_h(a_0
  \dots a_{n-1}) = (1, a_0, h(a_1 \dots a_{n-1}))(h(a_0), a_1, h(a_2
  \dots a_{n-1})) \dots (h(a_0 \dots a_{n-2}), a_{n-1}, 1)$. Routine
  computations show that the following diagram commutes:
  \begin{center}
    \begin{tikzpicture}[node distance = 30mm]
      \node (X) at (0,0) {$A^*$}; \node[right of = X, xshift = -10mm]
      (bX) {$X_{\cD_1}^*$}; \node[right of = X, xshift = 10mm] (Y)
      {$Y^*$}; \node[below of = bX, yshift = 15mm, xshift = 10mm] (Z)
      {$B^*$}; \node[right of = bX, xshift = 10mm] (A)
      {$B^*$};\node[below of = X, yshift =15mm] (N) {$(N \times A
        \times N)^*$}; \node[below of = A, yshift = 15mm] (M) {$M$};
      \draw[->] (X) to node[above, ArrowNode] {$\tau_{\cD_1}$}(bX);
      \draw[->] (X) to node[left, ArrowNode] {$\tau_h$} (N); \draw[->]
      (A) to node[right, ArrowNode] {$f$} (M); \draw[->>] (Y) to
      node[above, ArrowNode] {$q^*$} (A); \draw[>->] (bX) to
      node[above, ArrowNode] {$e^*$} (Y);\draw[->>] (N) to
      node[below, ArrowNode] {$r^*$} (Z); \draw[->] (Z) to node[below,
      ArrowNode] {$f$} (M);
    \end{tikzpicture}
  \end{center}
  In particular, we have $L = \tau_h^{-1}(f \circ r^*)^{-1}(P)$, which
  is a language recognized by the semi-direct product $M^{N \times
    N}{**}N$ (usually called \emph{block product} and denoted $M \Box
  N$), with $N$ biacting on $M^{N \times N}$ by
  \[n \cdot \alpha \cdot n': N \times N \to M, \quad (x, y) \mapsto
    \alpha(xn, n'y),\]
  for every $n,n' \in N$ and $\alpha \in M^{N\times N}$. Indeed, one
  may verify that
  \[L = g^{-1}(\{\alpha \in M^{N \times N} \mid \alpha(1,1) \in P \}
    \times N),\]
  where $g$ is the unique homomorphism $g: A^* \to M^{N \times
    N}{**}N$ satisfying $g(a) = (\alpha_a, h(a))$, with $\alpha_a: N
  \times N \to M$ given by $\alpha_a(x,y) = \alpha \circ r (x, a, y)$.

  Conversely, let $h: A^* \to M{**}N$ be a monoid homomorphism, with
  $M \in \bf V$ and $N \in \bf W$, and write $h(w) = (h_1(w), h_2(w))$
  (note that $h_2$ is a monoid homomorphism, but $h_1$ may not
  be). Since $\cV \circ \cD$ is a lattice, and the languages
  recognized by $h$ are all lattice combinations of languages of the
  form $h^{-1}(P \times N)$ and $h^{-1}(M \times Q)$, with $P
  \subseteq M$ and $Q \subseteq N$, it suffices to show that the
  latter belong to $\cV \circ \cD$. It is clear that every language of
  the form $h^{-1}(M \times Q)$ belongs to $\cD_0 = \cW(A)$, as it is
  recognized by~$N$. We will now argue that the language $L = h^{-1}(P
  \times N)$ belongs to $\cV(X_{\cD_1}) \vcirc \cD_1$. Since $N \in
  \bf W$, we have that $h_2$ factors through $\pi_0$ via a continuous
  map, say $h_2 = r \circ \pi_0$, with $r: X_{\cD_0} \to N$
  continuous. We further let $\widetilde r: X_{\cD_1} \to N \times A
  \times N$ be the continuous function defined by $\widetilde r (x) =
  (r \times id \times r) \circ e(x)$, for every $x \in X_{\cD_1}$, and
  $g: (N \times A \times N)^* \to M$ be the unique monoid homomorphism
  satisfying $g(n, a, n') = n \cdot h_1(a) \cdot n'$, where $(\_)
  \cdot n$ and $n \cdot (\_)$ denote, respectively, the right and left
  actions of $n$ on $M$. Finally, if $q: M{**} N \twoheadrightarrow M$
  is the projection onto~$M$, one may check that the following diagram
  commutes:
  \begin{center}
    \begin{tikzpicture}[node distance = 30mm]
      \node (X) at (0,0) {$A^*$}; \node[right of = X] (bX) {$M {**}
        N$}; \node[right of = bX] (A) {$M$};\node[below of = X, yshift
      =15mm] (N) {$X_{\cD_1}^*$}; \node[below of = A, yshift = 15mm]
      (M) {$(N \times A \times N)^*$};
      \draw[->] (X) to node[above, ArrowNode] {$h$}(bX); \draw[->] (X)
      to node[left, ArrowNode] {$\tau_{\cD_1}$} (N); \draw[<-] (A) to
      node[right, ArrowNode] {$g$} (M); \draw[->>] (bX) to node[above,
      ArrowNode] {$q$} (A);\draw[->>] (N) to node[below, ArrowNode]
      {$\widetilde r^{\,*}$} (M);
    \end{tikzpicture}
  \end{center}
  In particular, we have $L = \tau_{\cD_1}^{-1}(K)$, where $K = (g
  \circ \widetilde r^{\,*})^{-1}(P)$ is a language of
  $\cV(X_{\cD_1})$. This shows that $L$ belongs to $\cV(X_{\cD_1}
  \vcirc \cD_1)$, as required.
\end{proof}

\section{Examples}\label{sec:examples}  We
finish the paper with the computation of the \bim{} $\rvd =
(\eta[T^*]{**}M, \, \iota, \, X_{\cV(X_{\cD_1})} \times X_{\cD_0})$
for an arbitrary Boolean subalgebra $\cD \subseteq \cP((A \times
2)^*)$ closed under quotients that is generated by the union of its
retracts $\cD_0 \subseteq \cP(A^*)$ and $\cD_1 \subseteq \cP(A^*
\otimes \N)$, and three specific lp-varieties~$\cV$ (recall the
notation from Table~\ref{tab:my-table}). The lp-varieties considered
are naturally defined by classes of sentences of the
form~$\Gamma_\cQ$, as in Lemma~\ref{l:2}, for some set of
quantifiers~$\cQ$. In particular, by Theorem~\ref{t:2}, it follows
that if $\varphi(x)$ is a formula defining a language in~$\cD$ then,
for every quantifier $Q \in \cQ$, the language $Qx \ \varphi(x)$ is
recognized by $\rvd$.  In Section~\ref{sec:2}, we consider the case
where $\cQ = \{\exists\}$ consists of the \emph{existential
  quantifier}, in Section~\ref{sec:4} we consider the set of
\emph{modular quantifiers} $\cQ = \{\exists^r_q \mid r \in \{0, \dots,
q-1\}\}$ for some positive integer $r$, and in Section~\ref{sec:5} we
consider the set of \emph{majority quantifiers} $\cQ = \{{\sf Maj}_k
\mid k \in \mathbb Z_{\geq 0}\}$. It is easy to verify that the
lp-strain of languages obtained from each of these choices of $\cQ$ is
indeed an lp-variety, so that Theorem~\ref{t:2} does apply.

In what follows, for every set $X$ and every monoid $(N, +)$, we will
use $[X \to N]_{fin}$ to denote the set of functions $f: X \to N$ such
that $f(x) = 0$ for all but finitely many $x \in X$. Note that, when
equipped with pointwise addition, $[X \to N]_{fin}$ is also a monoid
whose identity is the constant function equal to the identity
of~$N$. If $w$ is a word over some alphabet (finite or profinite),
then we will use $c(w)$ to denote the set of letters that occur
in~$w$, that is, $c(w)$ denotes the \emph{content of~$w$}. Finally,
for a subset $S$ of the alphabet at hand, we use $\card w_S$ to denote
the number of occurrences of a letter of $S$ in $w$, and in the case
where $S = \{a\}$ is a singleton, we simply write $\card w_a$.

\subsection{Existential quantifier}\label{sec:2}
Given a finite alphabet $A$ and a subset $S \subseteq A$, we denote by
$L_S$ the language over~$A$ defined by the formula $\exists x \
\bigvee_{a \in S} P_a(x)$, that is, $L_S = A^* S A^* = \{w \in A^*
\mid c(w) \cap S \neq 0\}$. We let $\cV_\exists$ be defined by
\[A \mapsto \cV_\exists(A)= \langle{\{L_S \mid S \subseteq
    A\}}\rangle_{\sf BA},\]
for every finite alphabet~$A$. Then, $\cV_\exists$ is an lp-variety of
\emph{regular} languages. It is easy to see that, for each finite
alphabet $A$, the syntactic monoid of $\cV_\exists(A)$ is the free
join-semilattice generated by~$A$, that is, $(\cP(A), \cup)$ -- the
powerset of~$A$ equipped with union. Moreover, the syntactic morphism
of~$\cV_\exists(A)$ is given by
\[\eta_A: A^* \twoheadrightarrow \cP(A), \quad w \mapsto c(w) = \{a \in A
  \mid a \text{ occurs in }w\}.\]
Let $Y$ be a profinite alphabet. Since $\cP(A)$ is the dual space of
$\cV_\exists(A)$, it follows from the definition of $\cV_\exists(Y)$
that
\[X_{\cV_\exists(Y)} = \lim_{\longleftarrow} \ \{\cP(A) \mid Y
  \twoheadrightarrow A \text{ is a finite continuous quotient}\}.\]
In other words, $X_{\cV_\exists(Y)}$ is the free profinite
join-semilattice on~$Y$. It is a well-known fact that this is the
\emph{Vietoris space} of $Y$, that is, the topological space
$\viet(Y)$ consisting of the closed subsets of $Y$ and equipped with
the topology generated by the subsets of the form
\[\Diamond V = \{C \in \viet(Y) \mid C \cap V \neq \emptyset\}\qquad
  \text{and}\qquad \Box V= \{C \in \viet(Y) \mid C \subseteq V\},\]
where $V$ is a clopen subset of $Y$. Note that, since $(\Diamond V)^c
= \Box (V^c)$, the clopen subsets of $\viet(Y)$ are the Boolean
combinations of the subsets of the form $\Box V$, for $V \in
\cl(Y)$. We refer to \cite[Chapter~I, Section~17]{Kuratowski66} and
to~\cite{Michael51} for further reading on the Vietoris construction.

Our next goal is to compute the map $\eta: \beta(Y^*)
\twoheadrightarrow X_{\cV_\exists(Y)}$ dual to the embedding
$\cV_\exists(Y)\rightarrowtail \cl(Y^*)$. The following is an easy
consequence of Lemma~\ref{l:4}:
\begin{lemma}\label{l:15}
  Let $Y$ be a profinite alphabet. Then, the Boolean algebra
  $\cV_\exists(Y)$ is generated, as a lattice, by the languages of the
  form $V^*$ and $Y^*VY^*$, where $V \subseteq Y$ is a clopen subset.
\end{lemma}
\begin{proof}
  By Lemma~\ref{l:4}, we know that $\cV_\exists(Y)$ consists of the
  languages of the form $(\pi^*)^{-1}(K)$ for some finite continuous
  quotient $\pi: Y \twoheadrightarrow A$ and some language $K \in
  \cV_\exists(A)$. Since $\cV_\exists(A)$ is the Boolean algebra
  generated by the languages of the form $A^* SA^*$, with $S \subseteq
  A$, it follows that $\cV_\exists(Y)$ is the Boolean algebra
  generated by the languages of the form $(\pi^*)^{-1}(A^* S A^*) =
  Y^* \pi^{-1}(S) Y^*$. Finally, we observe that the clopen subsets
  of~$Y$ are precisely the subsets of the form $\pi^{-1}(S)$ for some
  finite continuous quotient $\pi: Y \twoheadrightarrow A$ and some
  subset $S \subseteq A$, and that $(Y^* V Y^*)^c = (V^c)^*$ for every
  $V \in \cl(Y)$. Thus, $\cV_\exists(Y)$ is generated, as a lattice,
  by the languages of the form $V^*$ and $Y^* V Y^*$, where $V \in
  \cl(Y)$.
\end{proof}

We may now describe $\eta: \beta(Y^*) \twoheadrightarrow \viet(Y)$.
\begin{proposition}\label{p:7}
  Let $Y$ be a profinite alphabet. Then, the dual of the embedding
  $\cV_\exists(Y) \rightarrowtail \cl(Y^*)$ is the map
  \[\eta: \beta(Y^*) \twoheadrightarrow \viet(Y), \quad \gamma \mapsto \bigcap \{V
    \in \cl(Y) \mid V^* \in \gamma\}.\]
\end{proposition}
\begin{proof}
  Since arbitrary intersections of clopen sets are closed, the map
  $\eta$ is well-defined. Using the duality between Boolean algebras
  and Boolean spaces, it suffices to show that $\eta$ is a continuous
  surjective map whose dual~$\eta^{-1}: \cl(\viet(Y)) \to \cl(Y^*)$
  has image $\cV_\exists(Y)$. Let us first prove that $\eta$ is
  surjective. Given $C \in \viet(Y)$, we let
  \[\cF_C = \{V^* \mid V \in \cl(Y), \, C \subseteq V\} \cup \{Y^*
    V Y^* \mid V \in \cl(Y), \, C \cap V \neq \emptyset\},\]
  and we argue that $\cF_C$ has the finite intersection property. Let
  $V_1, \dots, V_m, W_1, \dots, W_n \in \cl(Y)$ be such that $C
  \subseteq V_i$ ($i = 1, \dots, m$) and $C \cap W_j \neq \emptyset$
  ($j = 1, \dots, n$). If $y_j \in C \cap W_j$ for every $j \in \{1,
  \dots, n\}$, then the word $w = y_1 \dots y_n$ belongs to
  $\bigcap_{i = 1}^m V_i^* \cap \bigcap_{j = 1}^n Y^*W_jY^*$, and
  thus, this is a nonempty subset of~$Y^*$ as required. Therefore,
  there is an ultrafilter $\gamma_C \in \beta(Y^*)$ extending
  $\cF_C$. Since such an ultrafilter must satisfy
  \[C \subseteq V \iff V^* \in \gamma_C,\]
  for every $V \in \cl(Y)$, it follows that $\eta(\gamma_C) = C$ and
  this shows surjectivity of~$\eta$. Now, if $W \in \cl(Y)$ and
  $\gamma \in \beta(Y^*)$, we have
  \begin{align*}
    \gamma \in \eta^{-1}(\Box W)
    & \iff \bigcap\{V \in \cl(Y) \mid V^* \in \gamma\} \subseteq
      W\qquad \text{(by definition of~$\eta$)}
    \\ & \iff W^c \subseteq \bigcup \{V^c \mid V \in \cl(Y), \, V^*
         \in \gamma\}
    \\ & \iff \exists V_1, \dots, V_n \in \cl(Y) \colon V_i^* \in
         \gamma \text{ and } W^c \subseteq \bigcup_{i = 1}^n V_i^c
    \\ & \hspace{7cm} \text{(because $W^c$ is a compact subset
         of~$Y$)}
    \\ & \iff \exists V \in \cl(Y) \colon V^* \in
         \gamma \text{ and } V \subseteq W
    \\ & \hspace{3cm}   \text{(because
         $\gamma$ is closed under finite intersections and $\bigcap_{i
         = 1}^n V_i^* = (\bigcap_{i = 1}^n V_i)^*$)}
    \\ & \iff W^* \in \gamma \qquad \text{(because $\gamma$ is upward
         closed)}
    \\ & \iff \gamma \in \widehat {W^*}.
  \end{align*}
  That is, we have $\eta^{-1}(\Box W) = \widehat {W^*}$
  for every $W \in \cl(Y)$. This proves both continuity of~$\eta$ and
  the equality $\eta^{-1}(\cl(\viet(Y))) = \cV_\exists(Y)$ as intended.
\end{proof}
We now make explicit each item in Table~\ref{tab:my-table} for this
case. Note that, the restriction of $\eta$ to $Y^*$ is the map
\[\eta_Y: Y^* \to \viet(Y), \quad w \mapsto c(w) = \{y \in Y \mid y \text{
    occurs in } w\},\]
whose image is $\cP_{fin}(Y)$ -- the family of finite subsets of~$Y$.
Indeed, that is because, for every word $w \in Y^*$, if $\gamma_w =
\{K \in \cl(Y^*) \mid w \in K\}$ is the associated principal
ultrafilter and $V$ a clopen subset of~$Y$, then we have
\[V^* \in \gamma_w \iff w \in V^* \iff c(w) \subseteq V.\]

Now, if $\cD \subseteq \cP((A \times 2)^*)$ is a Boolean subalgebra
closed under quotients, $\pi: (A \times 2)^* \to X_\cD$ is the dual of
the embedding $\cD \rightarrowtail \cP((A \times 2)^*)$, and $T=
\pi[A^* \otimes \N]$ and $M= \pi[A^*]$ are as before, we have
$\eta[T^*] = \cP_{fin}(T) \subseteq \cP_{fin}(X_{\cD_1})$. By a
straightforward application of Lemma~\ref{l:11}, the biaction of~$M$
on $\cP_{fin}(T)$ is given by
\[\lambda_m(P) = \{m t \mid t \in P\}\quad \text{and}\quad \rho_m(P) =
  \{tm\mid t \in P\},\]
for every $P \in \cP_{fin}(T)$ and $m \in M$. In particular, the
monoid multiplication of $\eta[T^*]{**}M = \cP_{fin}(T){**}M$ is given
by
\[(P_1, m_1) (P_2, m_2) = (\{t_1m_2 \mid t_1 \in P_1\} \cup \{m_1t_2
  \mid t_2 \in P_2\}, m_1m_2),\]
for every $(P_1, m_1), (P_2, m_2) \in \cP_{fin}(T){**}M$.  Although
this suffices to characterize the \bim{}
\[{\bf R}_{\cV_{\exists} \circ \cD}= (\eta[T^*]{**}M, \iota,
  X_{\cV_\exists(X_{\cD_1})} \times X_{\cD_0}) = (\cP_{fin}(T){**}M ,
  \iota, \viet(X_{\cD_1}) \times X_{\cD_0}),\]
we may use~\eqref{eq:23} and~\eqref{eq:30} to explicitly compute the
biaction of $\viet(X_{\cD_1}) \times X_{\cD_0}$ on
$\cP_{fin}(T){**}M$. We start by computing the first coordinate
of~\eqref{eq:23} and~\eqref{eq:30}. Given a closed subset $C \subseteq
Y$, we let $\gamma_C \in \beta(Y^*)$ satisfy
\begin{equation}
  C \subseteq V \iff V^* \in \gamma_C,\label{eq:31}
\end{equation}
for every $V \in \cl(Y)$. Note that, by the proof of
Proposition~\ref{p:7}, such an ultrafilter always exists and is such
that $\eta(\gamma_C) = C$.
\begin{lemma}\label{l:7}
  For every $C \in \viet(X_{\cD_1})$, $w \in X_{\cD_1}^*$, and $m \in
  M$, the following equalities hold:
  \[\eta \circ \widetilde\ell_w\circ \beta(\lambda_m^*) (\gamma_C) =
    \lambda_m[C] \cup c(w)\qquad \text{and}\qquad \eta \circ
    \widetilde r_w\circ \beta(\rho_m^*) (\gamma_C) =
    \rho_m[C]\cup c(w).\]
\end{lemma}
\begin{proof}
  By definition of~$\eta$, the closed set $\eta \circ
  \widetilde\ell_w\circ \beta(\lambda_m^*) (\gamma_C)$ consists of
  those points $x \in X_{\cD_1}$ that belong to every clopen subset $V
  \subseteq X_{\cD_1}$ satisfying $V^* \in \widetilde\ell_w\circ
  \beta(\lambda_m^*) (\gamma_C)$. Since $\widetilde\ell_w$ and
  $\beta(\lambda_m^*)$ are, respectively, the duals of the
  homomorphisms $\ell_w$ (recall~\eqref{eq:12}) and
  $(\lambda_m^*)^{-1}$, we have that $V^* \in \widetilde\ell_w\circ
  \beta(\lambda_m^*) (\gamma_C)$ if and only if $(\lambda_m^*)^{-1}
  \circ \ell_w(V^*) \in \gamma_C$. Now, it is easy to see that
  \[(\lambda_m^*)^{-1} \circ \ell_w(V^*) =
    \begin{cases}
      (\lambda_m^{-1}(V))^*, &\text{if $w \in V^*$}; \\\emptyset,
      &\text{else}.
    \end{cases}
  \]
  Therefore, using~\eqref{eq:31}, we may conclude that $\eta \circ
  \widetilde\ell_w\circ \beta(\lambda_m^*) (\gamma_C)$ consists of
  those points $x \in X_{\cD_1}$ that belong to every clopen subset $V
  \subseteq X_{\cD_1}$ satisfying $w \in V^*$ (that is, $c(w)\subseteq
  V$) and $C \subseteq \lambda_m^{-1}(V)$ (that is, $\lambda_m[C]
  \subseteq V$). Since $c(w)$ and $\lambda_m[C]$ are both closed
  subsets of $Y$, it follows that
  \[\eta \circ \widetilde\ell_w\circ \beta(\lambda_m^*) (\gamma_C) =
    \lambda_m[C] \cup c(w)\]
  as required. The second equality of the statement is proved in a
  similar manner.
\end{proof}
Now, given $(C, x_0) \in \viet(X_{\cD_1}) \times X_{\cD_0}$ and $(P,
m) \in \cP_{fin}(T) {**} M$, we let $\underline t \in T^*$ be such
that $\eta(\underline t) = P$. Using~\eqref{eq:23}, \eqref{eq:30}, and
Lemma~\ref{l:7}, we have that the biaction of $\cP_{fin}(T){**}M$ on
$\viet(X_{\cD_1}) \times X_{\cD_0}$ is given by
\[\lambda_{(P, m)}(C, x_0) = (\lambda_m[C] \cup \{\lambda_t(x_0) \mid
  t \in P\}, \lambda_m(x_0))\]
and
\[\rho_{(P, m)} (C, x_0) = (\rho_m[C] \cup \{\rho_t(x_0) \mid t \in
  P\}, \rho_m(x_0)).\]
Finally, comparing our construction with that
of~\cite{GehrkePetrisanReggio16}, we have that the \bim{} ${\bf
  R}_{\cV_{\exists} \circ \cD}$ just described properly embeds in the
\emph{unary Sch\"utzenberger product} of $(M_\cD \rightarrowtail
X_\cD)$ introduced
in~\cite[Definition~11]{GehrkePetrisanReggio16}. Indeed, since we are
assuming that $\cD \subseteq \cP((A \times 2)^*)$ is a Boolean
subalgebra closed under quotients that is generated by the union of
its retracts $\cD_0$ and $\cD_1$, we have that $\cD$ is isomorphic to
the product $\cD_0 \times \cD_1 \times \cD_z$
(cf. Section~\ref{sec:3}). Dually, it follows that $X_\cD$ may be seen
as the disjoint union of its closed subspaces $X_{\cD_0}$,
$X_{\cD_1}$, and $X_{\cD_z}$. In particular, $X_{\cD_0}$ and
$X_{\cD_1}$ are properly contained in $X_\cD$. On the other hand, we
recall that the space component of the unary Sch\"utzenberger product
of $(M_\cD \rightarrowtail X_\cD)$ is $\viet(X_\cD) \times X_\cD $,
while that of the ${\bf R}_{\cV_{\exists} \circ \cD}$ is
$\viet(X_{\cD_1}) \times X_{\cD_0}$. Thus, it is clear that the first
one properly contains the latter. A similar analysis may be done for
the monoid component of each of the \bim{}s involved.

\subsection{Modular quantifiers}\label{sec:4}
Before describing the elements of Table~\ref{tab:my-table} for the
lp-variety defined by the modular quantifiers, we remark that the
existential quantifier considered in the previous section is defined
by the two-element Boolean algebra ${\bf 2}$ in the following sense:
for every word $w$ and every formula $\varphi(x)$, we have
\[w \models \exists x \ \varphi(x)\iff \bigvee_{i = 0}^{\card w -1}
  \chi_{ \varphi(x)}(w, i) = 1,\]
where the right-hand side is an equality in $\bf 2$ and
$\chi_{\varphi(x)}(w, i) = 1$ if and only if $(w, i) \models
\varphi(x)$.
In~\cite{GehrkePetrisanReggio17,GehrkePetrisanReggio21}, the authors
extended the results of~\cite{GehrkePetrisanReggio16} to a broader
class of quantifiers defined by finite semirings in a similar way: if
$(S, +, \cdot\, , 0, 1)$ is a semiring and $r \in S$, then $Q_S^r$ is
defined by
\[w \models Q_S^r x \ \varphi(x) \iff \sum_{i = 0}^{\card w - 1}
  \chi_{ \varphi(x)}(w, i) = r,\]
where $\chi_{\varphi(x)}(w, i)$ is $1$ if $(w, i) \models
\varphi(x)$ and is~$0$ otherwise.  When $S = {\mathbb Z}_q$ is the
semiring of integers modulo $q$, the quantifier $Q_{q}^r$ is nothing
but the modular quantifier $\exists_q^r$.
Unlike in~\cite{GehrkePetrisanReggio17,GehrkePetrisanReggio21}, where
the full semiring structure of $\mathbb Z_q$ is needed, the reader
should notice that we will only make use of the underlying additive
structure of $\mathbb Z_q$.

We fix a positive integer $q$, and we let $\cV_{{\sf mod} \, q}$
denote the lp-variety given by
\[A \mapsto \cV_{{\sf mod} \, q}(A) = \langle \{L_{S, r} \mid S
  \subseteq A, \ r \in \{0, \dots, q-1\}\}\rangle_{\sf BA},\]
where $L_{S,r}$ denotes the language defined by the formula
$\exists_q^r\, x \ \bigvee_{a \in S} P_a(x)$, that is, $L_{S, r} = \{w
\in A^* \mid \card w_S \equiv r\ {\rm mod} \ q\}$. Once again, we have
that $\cV_{{\sf mod}\, q}$ is an lp-variety of \emph{regular}
languages.  Note that, since for every $S \subseteq A$ and $r \in \{0,
\dots, q-1\}$, the equality $L_{S,r}^c = \bigcup \{L_{S, r'} \mid r'
\in \{0, \dots, q-1\}, \, r' \neq r\}$ holds, the Boolean algebra
$\cV_{{\sf mod} \, q}(A)$ is generated, as a lattice, by the languages
of the form $L_{S, r}$.  It is easy to see that, for each finite
alphabet $A$, $\mathbb Z_q^A$ equipped with pointwise addition is the
syntactic monoid of $\cV_{{\sf mod} \, q}(A)$ and
\[\eta_A: A^* \twoheadrightarrow \mathbb \mathbb Z_q^A, \quad w \mapsto
  (\card w_a {\rm mod }\ q)_{a \in A}\]
is its syntactic morphism.  In~\cite{Reggio2020} a
computation of the space
\[X_{\cV_{{\sf mod} \, q}(Y)} =\lim_{\longleftarrow} \ \{ \mathbb
  Z_q^A\mid A \text{ is a finite continuous quotient of }Y\},\]
is provided and the Boolean algebra~$\cV_{{\sf mod} \, q}(Y)$ is
described (cf. Lemmas~3.5 and~4.1, respectively). We will now proceed
with the description of ${\bf R}_{\cV_{{\sf mod} \, q} \circ \cD}$
without using these results.

\begin{lemma}\label{l:16}
  Let $Y$ be a profinite alphabet. Then, the Boolean algebra
  $\cV_{{\sf mod} \, q}(Y)$ is generated, as a lattice, by the
  languages of the form $L_{V, r} = \{w \in Y^* \mid \card w_V \equiv
  r \ {\rm mod}\ q\}$, where $V \subseteq Y$ is a clopen subset and $r
  \in \{0, \dots, q-1\}$.
\end{lemma}
\begin{proof}
  Since each $\cV_{{\sf mod} \, q}(A)$ is generated, as a lattice, by
  the languages of the form $L_{S, r}$, with $S \subseteq A$ and $r
  \in \{0, \dots, q-1\}$, by Lemma~\ref{l:4}, the languages of
  $\cV_{{\sf mod} \, q}(Y)$ are the lattice combinations of languages
  of the form $(\pi^*)^{-1}(L_{S, r})$ for some finite continuous
  quotient $\pi: Y \twoheadrightarrow A$, and some $S \subseteq A$ and
  $r \in \{0, \dots, q-1\}$.  The claim follows from observing that
  \[(\pi^*)^{-1}(L_{S, r}) = \{w \in Y^* \mid \card w_{\pi^{-1}(S)}
    \equiv r \ {\rm mod}\ q\},\]
  and that the clopen subsets of~$Y$ are exactly the subsets of the
  form $\pi^{-1}(S)$.
\end{proof}

\begin{proposition}\label{p:9}
  Let $Y$ be a profinite alphabet. Then, $([Y\to \mathbb Z_q]_{fin},
  +)$ is the syntactic monoid of $\cV_{{\sf mod} \, q}(Y)$ and
  \[\eta_Y: Y^* \twoheadrightarrow [Y\to \mathbb Z_q]_{fin}, \quad
    w \mapsto (f_w: y \mapsto \card w_y {\rm mod }\ q)\]
  is its syntactic morphism. Moreover, $\cV_{{\sf mod} \, q}(Y)$ is
  isomorphic to the Boolean subalgebra $\cB_Y$ of $\cP([Y\to \mathbb
  Z_q]_{fin})$ generated by the subsets of the form
  \[\langle V, r\rangle = \{f \in [Y\to \mathbb Z_q]_{fin} \mid \sum
    \{f(y) \mid y \in V\} \equiv r \ {\rm mod } \ q \},\]
  where $V \in \cl(Y)$ and $r \in \mathbb Z_q$.
\end{proposition}
\begin{proof}  %
  First note that $\eta_Y$ is a well-defined surjective
  function. Then, proving that $\eta_Y$ is the syntactic morphism of
  $\cV_{{\sf mod} \, q}(Y)$ amounts to proving that $\ker(\eta_Y)$ is
  the syntactic congruence~$\sim$ of $\cV_{{\sf mod} \, q}(Y)$. By
  Lemma~\ref{l:16}, two words $u, v \in Y^*$ are $\sim$-equivalent if
  and only if $\card u_V \equiv \card v_V \ {\rm mod} \ q$, for every
  $V \in \cl(Y)$. Since, for every $y \in Y$, there is a clopen subset
  $V \subseteq Y$ such that $y \in V$ and $(V \setminus \{y\}) \cap
  (c(u) \cup c(v)) = \emptyset$, it follows that $u \sim v \iff \eta_Y
  (u) = \eta_Y(v)$ as required.

  The last claim follows from having $\eta_Y^{-1}(\langle V, r\rangle)
  = L_{V, r}$, for every $V \in \cl(Y)$ and $r \in \{0, \dots, q-1\}$,
  and from Lemma~\ref{l:16}.
\end{proof}

It remains to describe the monoid $\eta[T^*]{**}M$ (recall
Table~\ref{tab:my-table}). By Proposition~\ref{p:9}, we have that
$\eta[T^*] = [T\to \mathbb Z_q]_{fin}$ is the submonoid of $[Y\to
\mathbb Z_q]_{fin}$ whose functions vanish in $Y \setminus T$. Using
Lemma~\ref{l:11}, we may then conclude that the biaction of $M$ on
$\eta[T^*] = [T\to \mathbb Z_q]_{fin}$ is defined by
\[\lambda_m(f): t \mapsto \sum \{f(t') \mid m t' = t\} \quad \text{and}
  \quad \rho_m(f): t \mapsto \sum \{f(t') \mid t'm = t\} ,\]
for every $f \in [T \to \mathbb Z_q]_{fin}$ and $m \in M$.  As
expected, our construction is tightly related to that
of~\cite{GehrkePetrisanReggio17,GehrkePetrisanReggio21}: the \bim{}
${\bf R}_{\cV_{{\sf mod} \, q} \circ \cD}$ embeds in the \bim{}
$(\Diamond_{\mathbb Z_q}(X_{\cD}), \Diamond_{\mathbb Z_q}(M_{\cD}))$
of~\cite{GehrkePetrisanReggio17,GehrkePetrisanReggio21}.
\subsection{Majority quantifiers}\label{sec:5} 
Our last example concerns the majority quantifiers. Although an
algebraic approach was undertaken in~\cite{KrebsLangeReifferscheid05},
it relies on an iterative application of a block product construction
and a simple description of a topo-algebraic recognizer for the
languages of $\cV_{\sf Maj}\circ \cD$ built on a recognizer for the
languages of $\cD$ is not known (here, $\cV_{\sf Maj}$ denotes the
lp-variety defined by the majority quantifiers). Therefore, this
example is a truly new application of the work of this paper.

Let $A$ be a finite alphabet. For each subset $S \subseteq A$ and for
each integer $k \in \mathbb Z$, we let $L_{S, k}$ denote the language
of all words $w \in A^*$ satisfying $\card w_S - \card w_{S^c} >
k$. Note that, if $k \geq 0$, then the language $L_{S, k}$ is defined
by the formula ${\sf Maj}_k x\ \bigvee_{a \in S}\tP_a(x)$. Then,
$\cV_{\sf Maj}$ is the lp-variety defined by
\[A \mapsto \cV_{\sf Maj}(A) = \langle \{L_{S, k} \mid S \subseteq A,
  \ k \in \mathbb N\}\rangle_{\sf BA}.\]
Since, for every $S \subseteq A$ and for every $k \in \mathbb Z$, we
have $L_{S, k}^c = L_{S^c, -(k+1)}$, it follows that $\cV_{\sf
  Maj}(A)$ is the lattice generated by $\{L_{S, k} \mid S \subseteq A,
\, k \in \mathbb Z\}$.

The proof of the following is omitted as it is very similar to the
proof of Lemma~\ref{l:16}.
\begin{lemma}\label{l:13}
  For every profinite alphabet $Y$, the Boolean algebra $\cV_{\sf
    Maj}(Y)$ is generated, as a lattice, by the languages of the form
  $L_{V, k}= \{w \in Y^* \mid \card w_V - \card w_{V^c} > k\}$, where
  $V \in\cl(Y)$ and $k \in \mathbb Z$.
\end{lemma}
Using this characterization we may then describe the syntactic
morphism of $\cV_{\sf Maj}(Y)$.
\begin{proposition}\label{p:8}
  Let $Y$ be a profinite alphabet. Then, $([Y\to \mathbb \N]_{fin},
  +)$ is the syntactic monoid of $\cV_{\sf Maj}(Y)$ and
  \[\eta_Y: Y^* \twoheadrightarrow [Y\to \mathbb N]_{fin}, \quad
    w \mapsto (f_w: y \mapsto \card w_y)\]
  is the syntactic morphism. Moreover, $\cV_{\sf Maj}(Y)$ is
  isomorphic to the Boolean subalgebra $\cB_Y$ of $\cP([Y\to \mathbb
  N]_{fin})$ generated by the subsets of the form
  \[\langle V, k\rangle = \{f \in [Y\to \mathbb N]_{fin} \mid \sum
    \{f(y) \mid y \in V\} - \sum \{f(y) \mid y \notin V\} > k \},\]
  where $V \in \cl(Y)$ and $k \in \mathbb N$.
\end{proposition}
\begin{proof}
  It is clear that $\eta_Y$ is a surjective well-defined function. To
  show that $\eta_Y$ is the syntactic morphism of $\cV_{\sf Maj}(Y)$
  we need to show that the syntactic congruence~$\sim$ of $\cV_{\sf
    Maj}(Y)$ coincides with $\ker(\eta_Y)$. By Lemma~\ref{l:13}, two
  words $u, v \in Y^*$ are $\sim$-equivalent if and only if, for every
  $V \in \cl(Y)$ and $k \in \mathbb Z$, the equivalence
  \begin{equation}
    \label{eq:32}
    \card u_V -  \card u_{V^c} > k \iff \card v_V - \card v_{V^c} > k
  \end{equation}
  holds.  In particular, we have the inclusion $\ker(\eta_Y) \subseteq
  {\sim}$. Conversely, suppose that $u, v \in Y^*$ are
  $\sim$-equivalent. We first note that $\card u = \card v$. Indeed,
  that is because, by taking $V = Y$ in~\eqref{eq:32}, we have
  \[\forall k \in \mathbb Z, \ \card u > k \iff \card v > k.\]
  We denote $n = \card u = \card v$. Now, given a letter $y \in Y$,
  we let $V \in \cl(Y)$ be such that $y \in V$ and $(V \setminus
  \{y\}) \cap (c(u) \cup c(v)) = \emptyset$. That is, $V$ is such that
  $\card u_{V} = \card u_y$ and $\card v_{V} = \card v_y$. Then, using
  the equalities $\card u_{V^c} = n - \card u_V$ and $\card v_{V^c} =
  n - \card v_V$, it follows from~\eqref{eq:32} that
  \[\forall k \in \mathbb Z, \ 2 \card u_Y - n > k \iff 2 \card v_y -
    n> k,\]
  which clearly implies that $\card u_y = \card v_y$ as required.

  The last assertion follows from Lemma~\ref{l:13} and from the
  equality $(\eta_Y)^{-1}(\langle V, k\rangle) = L_{V, k}$, which
  holds for every $V \in \cl(Y)$ and $k \in \mathbb Z$.
\end{proof}

Finally, we let $\cD \subseteq \cP((A \times 2)^*)$ be a Boolean
subalgebra closed under quotients that is generated by its retracts
$\cD_0$ and $\cD_1$. As before, we let $\pi: (A \times 2)^* \to X_\cD$
be dual to the embedding $\cD \rightarrowtail \cP((A \times 2)^*)$,
and we denote $T = \pi[A^* \otimes \N]$ and $M: = \pi[A^*]$. Then, by
Proposition~\ref{p:8}, we have that the underlying set of the monoid
component of ${\bf R}_{\cV_{\sf Maj} \circ \cD}$ is $[T \to \N]_{fin} \times M$, and by
Lemma~\ref{l:11}, the biaction of $M$ on $[T \to \mathbb \N]_{fin}$ is
defined by
\[\lambda_m(f): t \mapsto \sum \{f(t') \mid m t' = t\} \quad \text{and}
  \quad \rho_m(f): t \mapsto \sum \{f(t') \mid t'm = t\} ,\]
for every $f \in [T \to \mathbb N]_{fin}$ and $m \in M$. Therefore,
for every $(f_1, m_1), (f_2, m_2) \in [T \to \N]_{fin} {**} M$, we
have $(f_1, m_1)(f_2, m_2) = (f, m_1m_2)$, where $f(t) = \sum \{f_1(t')
\mid t'm_2 = t\} + \sum \{f_2(t') \mid mt' = t\}$ for every $t \in T$.

\paragraph{Competing interests.} The authors declare none.

\footnotesize

\begin{thebibliography}{10}

\bibitem{Almeida94}
J.~Almeida.
\newblock {\em Finite semigroups and universal algebra}.
\newblock World Scientific Publishing Co. Inc., River Edge, NJ, 1994.
\newblock Translated from the 1992 Portuguese original and revised by the
  author.

\bibitem{Almeida05}
J.~Almeida.
\newblock Profinite semigroups and applications.
\newblock In {\em Structural theory of automata, semigroups, and universal
  algebra}, volume 207 of {\em NATO Sci. Ser. II Math. Phys. Chem.}, pages
  1--45. Springer, Dordrecht, 2005.
\newblock Notes taken by Alfredo Costa.

\bibitem{AlmeidaWeil1995}
J.~{Almeida} and P.~{Weil}.
\newblock {Free profinite semigroups over semidirect products.}
\newblock {\em {Russian Math. (Iz. VUZ)}}, 39(1):1--27, 1995.

\bibitem{Banaschewski55}
B.~Banaschewski.
\newblock \"{U}ber nulldimensionale {R}\"{a}ume.
\newblock {\em Math. Nachr.}, 13:129--140, 1955.

\bibitem{Banaschewski64}
B.~Banaschewski.
\newblock Extensions of topological spaces.
\newblock {\em Canad. Math. Bull.}, 7:1--22, 1964.

\bibitem{BarringtonComptonStraubingTherien92}
D.~A.~M. Barrington, K.~Compton, H.~Straubing, and D.~Th\'erien.
\newblock Regular languages in {${\rm NC}^1$}.
\newblock {\em J. Comput. System Sci.}, 44(3):478--499, 1992.

\bibitem{BorlidoCzarnetzkiGehrkeKrebs17}
C.~Borlido, S.~Czarnetzki, M.~Gehrke, and A.~Krebs.
\newblock Stone duality and the substitution principle.
\newblock In {\em Computer science logic 2017}, volume~82 of {\em LIPIcs.
  Leibniz Int. Proc. Inform.}, pages Art. No. 13, 20. Schloss Dagstuhl.
  Leibniz-Zent. Inform., Wadern, 2017.

\bibitem{Dekkers2008}
M.~Dekkers.
\newblock Stone duality: An application in the theory of formal languages.
\newblock Master's thesis, Radboud University Nijmegen, Netherlands, 2008.

\bibitem{Eilenberg76}
S.~Eilenberg.
\newblock {\em Automata, languages, and machines. {V}ol. {B}}.
\newblock Academic Press [Harcourt Brace Jovanovich Publishers], New York,
  1976.

\bibitem{FurstSaxeSipser84}
M.~Furst, J.~B. Saxe, and M.~Sipser.
\newblock Parity, circuits, and the polynomial-time hierarchy.
\newblock {\em Math. Systems Theory}, 17(1):13--27, 1984.

\bibitem{Gehrke16}
M.~Gehrke.
\newblock Stone duality, topological algebra, and recognition.
\newblock {\em J. Pure Appl. Algebra}, 220(7):2711--2747, 2016.

\bibitem{GehrkeGrigorieffPin08}
M.~Gehrke, S.~Grigorieff, and J.-\'{E}. Pin.
\newblock Duality and equational theory of regular languages.
\newblock In L.~Aceto and al., editors, {\em ICALP 2008, Part II}, volume 5126
  of {\em Lect. Notes Comp. Sci.}, pages 246--257, Berlin, 2008. Springer.

\bibitem{GehrkeGrigorieffPin10}
M.~Gehrke, S.~Grigorieff, and J.-\'{E}. Pin.
\newblock A topological approach to recognition.
\newblock In {\em Automata, languages and programming. {P}art {II}}, volume
  6199 of {\em Lecture Notes in Comput. Sci.}, pages 151--162. Springer,
  Berlin, 2010.

\bibitem{GehrkeJaklReggio2020}
M.~Gehrke, T.~Jakl, and L.~Reggio.
\newblock A {C}ook's tour of duality in logic: from quantifiers, through
  {V}ietoris, to measures.
\newblock arXiv:2007.15415, preprint 2020.

\bibitem{GehrkePetrisanReggio16}
M.~Gehrke, D.~Petri\c{s}an, and L.~Reggio.
\newblock The {S}ch\"utzenberger product for syntactic spaces.
\newblock In {\em 43rd {I}nternational {C}olloquium on {A}utomata, {L}anguages,
  and {P}rogramming}, volume~55 of {\em LIPIcs. Leibniz Int. Proc. Inform.},
  pages Art. No. 112, 14. Schloss Dagstuhl. Leibniz-Zent. Inform., Wadern,
  2016.

\bibitem{GehrkePetrisanReggio17}
M.~Gehrke, D.~Petri\c{s}an, and L.~Reggio.
\newblock Quantifiers on languages and codensity monads.
\newblock In {\em 2017 32nd {A}nnual {ACM}/{IEEE} {S}ymposium on {L}ogic in
  {C}omputer {S}cience ({LICS})}, page~12. IEEE, [Piscataway], NJ, 2017.

\bibitem{GehrkePetrisanReggio21}
M.~Gehrke, D.~Petri\c{s}an, and L.~Reggio.
\newblock Quantifiers on languages and codensity monads.
\newblock {\em Mathematical Structures in Computer Science}, pages 1--35, 2021.

\bibitem{Immerman99}
N.~Immerman.
\newblock {\em Descriptive complexity}.
\newblock Graduate Texts in Computer Science. Springer-Verlag, New York, 1999.

\bibitem{Johnstone86}
P.~T. Johnstone.
\newblock {\em Stone spaces}, volume~3 of {\em Cambridge Studies in Advanced
  Mathematics}.
\newblock Cambridge University Press, Cambridge, 1986.
\newblock Reprint of the 1982 edition.

\bibitem{KrebsLangeReifferscheid05}
A.~Krebs, K.{-}J. Lange, and S.~Reifferscheid.
\newblock Characterizing ${TC}^0$ in terms of infinite groups.
\newblock In {\em {STACS} 2005, 22nd Annual Symposium on Theoretical Aspects of
  Computer Science, Stuttgart, Germany, February 24-26, 2005, Proceedings},
  pages 496--507, 2005.

\bibitem{Kuratowski66}
K.~Kuratowski.
\newblock {\em Topology. {V}ol. {I}}.
\newblock Academic Press, New York-London; Pa\'{n}stwowe Wydawnictwo Naukowe,
  Warsaw, 1966.
\newblock New edition, revised and augmented, Translated from the French by J.
  Jaworowski.

\bibitem{Lawvere06}
F.~W. Lawvere.
\newblock Adjointness in foundations.
\newblock {\em Repr. Theory Appl. Categ.}, (16):1--16, 2006.
\newblock Reprinted from Dialectica {{\bf{2}}3} (1969).

\bibitem{Marques21}
J.~Marqu{\`{e}}s.
\newblock Polyadic spaces and profinite monoids.
\newblock In Uli Fahrenberg, Mai Gehrke, Luigi Santocanale, and Michael Winter,
  editors, {\em Relational and Algebraic Methods in Computer Science - 19th
  International Conference, RAMiCS 2021, Marseille, France, November 2-5, 2021,
  Proceedings}, volume 13027 of {\em Lecture Notes in Computer Science}, pages
  292--308. Springer, 2021.

\bibitem{Michael51}
E.~Michael.
\newblock Topologies on spaces of subsets.
\newblock {\em Trans. Amer. Math. Soc.}, 71:152--182, 1951.

\bibitem{Pin1997}
J.-\'{E}. Pin.
\newblock Syntactic semigroups.
\newblock In {\em Handbook of formal languages, {V}ol. 1}, pages 679--746.
  Springer, Berlin, 1997.

\bibitem{PinStraubing05}
J.-{\'E}. Pin and H.~Straubing.
\newblock Some results on {$\mathcal C$}-varieties.
\newblock {\em Theoret. Informatics Appl.}, 39:239--262, 2005.

\bibitem{pippenger97}
N.~Pippenger.
\newblock Regular languages and {S}tone duality.
\newblock {\em Theory Comput. Syst.}, 30(2):121--134, 1997.

\bibitem{PorterWoods88}
J.~R. Porter and R.~G. Woods.
\newblock {\em Extensions and absolutes of {H}ausdorff spaces}.
\newblock Springer-Verlag, New York, 1988.

\bibitem{Reggio2020}
L.~Reggio.
\newblock Codensity, profiniteness and algebras of semiring-valued measures.
\newblock {\em J. Pure Appl. Algebra}, 224(1):181--205, 2020.

\bibitem{Straubing94}
H.~Straubing.
\newblock {\em Finite automata, formal logic, and circuit complexity}.
\newblock Progress in Theoretical Computer Science. Birkh\"auser Boston, Inc.,
  Boston, MA, 1994.

\bibitem{Straubing02}
H.~Straubing.
\newblock On logical descriptions of regular languages.
\newblock In {\em LATIN 2002}, number 2286 in Lect. Notes Comp. Sci., pages
  528--538, Berlin, 2002. Springer.

\bibitem{StraubingTherien2002}
H.~Straubing and D.~Th\'{e}rien.
\newblock Weakly iterated block products of finite monoids.
\newblock In {\em L{ATIN} 2002: {T}heoretical informatics ({C}ancun)}, volume
  2286 of {\em Lecture Notes in Comput. Sci.}, pages 91--104. Springer, Berlin,
  2002.

\bibitem{TessonTherien07}
P.~Tesson and D.~Th\'erien.
\newblock Logic meets algebra: the case of regular languages.
\newblock {\em Log. Methods Comput. Sci.}, 3(1):1:4, 37, 2007.

\bibitem{Weil1992}
P.~Weil.
\newblock Closure of varieties of languages under products with counter.
\newblock {\em J. Comput. System Sci.}, 45(3):316--339, 1992.

\bibitem{Willard70}
S.~Willard.
\newblock {\em General topology}.
\newblock Dover Publications, Inc., Mineola, NY, 2004.
\newblock Reprint of the 1970 original.

\bibitem{Williams14}
R.~Williams.
\newblock Nonuniform {ACC} circuit lower bounds.
\newblock {\em J. ACM}, 61(1):Art. 2, 32, 2014.

\end{thebibliography}

\end{document}